\setlist{noitemsep}
\theoremstyle{plain}
\declaretheorem{theorem}
\declaretheorem[sibling=theorem]{proposition}
\declaretheorem[sibling=theorem]{lemma}
\declaretheorem[sibling=theorem]{corollary}
\theoremstyle{definition}
\declaretheorem{definition}
\declaretheorem[preheadhook={\begin{algfloat}},postfoothook={\end{algfloat}}]{algorithm}
\declaretheorem[title=API,sibling=algorithm,preheadhook={\begin{algfloat}},postfoothook={\end{algfloat}}]{api}
\scshape\color{darkgray},
\global\let\tikz@ensure@dollar@catcode=\relax
\pgfplotsset{compat=1.14}
\author[1]{Grzegorz Herman}
\affil[1]{\footnotesize
  Theoretical Computer Science, Jagiellonian University
  \newline
  \texttt{gherman@tcs.uj.edu.pl}
}
\title{Relational parsing: a clean, fast parsing strategy for all context-free languages
\thanks{This manuscript is a work in progress:
a few proofs still need to be written down,
and there are many rough edges.
We also hope to provide more experimental results and extend the discussion in the final section.
Thus, please contact us first if you would like to cite this work.
We also welcome comments and suggestions of improvements,
regarding both the algorithm itself and the way it is presented here.}}
\begin{document}
\TabPositions{.55\columnwidth}

\newcommand{\bld}[1]{{\boldmath\textbf{#1}\unboldmath}}
\newcommand{\sstart}{s_{\textup{start}}}
\newcommand{\sstop}{s_{\textup{stop}}}
\newcommand{\con}{{\cdot}}
\newcommand{\cons}{{\cdots}}
\newcommand{\deq}{\triangleq}
\newcommand{\assign}{\mathrel{\mathop:}=}
\newcommand{\shift}[1]{\textup{\texttt{shift}}\langle#1\rangle}
\newcommand{\call}[1]{\textup{\texttt{call}}\langle#1\rangle}
\newcommand{\reduce}[1]{\textup{\texttt{reduce}}\langle#1\rangle}
\newcommand{\rev}[1]{{%
    \begin{tikzpicture}[baseline={(tmp.base)}]
    \node[inner sep=0pt,outer sep=0pt] (tmp) {$#1$};
    \draw[-{To[width=.8ex]}] ([yshift=.5ex]tmp.north east) -- ([yshift=.5ex]tmp.north west);
    \end{tikzpicture}
  }}
\newcommand{\move}[1][~]{%
  \mathrel{%
    \begin{tikzpicture}[baseline={(base)}]
    \node[inner sep=.5ex,align=center] (tmp) {$\scriptstyle #1$};
    \node (base) at ([yshift=-.6ex]tmp.south) {};
    \draw[-{To[width=.8ex]}] decorate [decoration={coil,amplitude=.1ex,segment length=.5ex,aspect=0,post length=.4ex}] {(tmp.south west) -- (tmp.south east)};
    \end{tikzpicture}
    }
  }
\newcommand{\edge}[1][~]{%
  \mathrel{%
    \begin{tikzpicture}[baseline={(base)}]
    \node[inner sep=.5ex,align=center] (tmp) {$\scriptstyle #1$};
    \node (base) at ([yshift=-.6ex]tmp.south) {};
    \draw[-{To[width=.8ex]}] (tmp.south west) -- (tmp.south east);
    \end{tikzpicture}
    }
  }
\newcommand{\Eps}{D_\epsilon^*}
\newcommand{\C}[1]{\mathcal{C}(#1)}
\renewcommand{\S}[2][]{{\mathcal{S}_{#1}(#2)}}
\newcommand{\N}[1]{\overline{#1}}
\newcommand{\D}[2]{\Delta_{#1 \rightsquigarrow #2}}
\renewcommand{\P}[2]{\Phi_{#1 \rightsquigarrow #2}}
\renewcommand{\L}[2]{\Lambda_{#1 \rightsquigarrow #2}}
\newcommand{\nulling}[1]{\D{#1}\epsilon}
\newcommand{\red}{_{\scriptscriptstyle-}}
\newcommand{\nred}{_{\scriptscriptstyle+}}
\newcommand{\deriv}[2][]{^{\textup{#1}(#2)}}
\newcommand{\efree}[1]{\underline{#1}}
\renewcommand{\flat}[1]{\left[#1\right]}
\newcommand{\type}[1]{\mathbb{#1}}
\newcommand{\out}[1][out]{_{\textup{#1}}}
\newcommand{\val}[1]{\langle #1 \rangle}
\newcommand{\dep}{\succ}

\maketitle

\begin{abstract}
We present a novel parsing algorithm for all context-free languages,
based on computing the relation between configurations and reaching transitions in a recursive transition network.
Parsing complexity w.r.t. input length matches the state of the art:
it is worst-case cubic, quadratic for unambiguous grammars, and linear for LR-regular ones.
What distinguishes our algorithm is its clean mathematical formulation:
parsing is expressed as a composition of simple operations on languages and relations,
and can therefore be implemented using only immutable data structures.
With a proper choice of these structures,
a vast majority of operations performed during parsing typical programming languages can be memoized,
which allows our proof-of-concept implementation to outperform common generalized parsing algorithms,
in some cases by orders of magnitude.
\end{abstract}

\section{Introduction}

Parsing is a well-studied and yet still active topic in computer science.
Multiple highly efficient deterministic parsers can handle large families of unambiguous context-free grammars, and oftentimes more.
However, grammars of many programming languages do not naturally fit within the limitations required by these parsers, and need tedious manual adaptation
(e.g., replacing left recursion with right recursion).
Worse, such adaptation needs to be "undone" on parse trees, in order to make them reflect the desired structure of the language.
Further, in many cases the "natural" grammar of a language allows some ambiguity, which is easier resolved at a later compilation step.
With the growing need for rapid development of new languages,
generalized parsers---able to handle all, even ambiguous, context-free grammars---are gaining wider adoption.

Most generalized parsing techniques are based on clever parallel simulation of nondeterministic choices of a pushdown automaton.
Some of these algorithms offer good complexity guarantees: worst-case cubic, quadratic on unambiguous grammars, and in some cases linear on "nice" language fragments.
Nondeterminism is handled by means of (some variant of) a data structure called GSS (graph-structured stack).
This makes the constant factors involved significantly larger than in deterministic parsers, which use a simple stack.
Moreover, in order to handle the full class of context-free grammars,
the graph-like data structures in \emph{all} these algorithms need to be \emph{mutable} and allow \emph{cycles}.
Even where it does not significantly complicate the implementation,
it makes reasoning about the algorithm more difficult.

We present a new generalized context-free parsing algorithm which we call \bld{relational parsing}.
Similarly to existing approaches, it simulates possible runs of a pushdown automaton (specifically, a recursive transition network).
However, its formulation is based on precise mathematical semantics, given in Section~\ref{sec:recognition}:
it inductively computes the languages of configurations reachable after reading consecutive input symbols.
We show that each such language can be obtained from the previous one by a constant number of simple, well-known language-theoretic operations.
The way of representing these languages (and their grammar-dependent atomic building blocks) is left abstract at this point.

Our recognition algorithm generalizes cleanly to a full-fledged parser, capable of producing a compact representation of all possible parse trees.
To achieve that, instead of languages of reachable configurations, we compute the relations between configurations and the ways in which they can be reached
(hence the algorithm name).
This generalization is presented in Section~\ref{sec:parsing}.
Again, the representation of relations is hidden behind an abstract API.

We give possible concrete realizations of all required abstract data types in Section~\ref{sec:representations}.
For the relations computed during parsing, we use a DAG structure akin to the graph-structured stack.
However, with our approach, all cyclicity is embedded in the atomic relations,
and thus we \emph{never} need to add edges to existing DAG nodes.
The atomic languages and relations are shown to be regular,
and thus can be represented by nondeterministic finite automata.
We also discuss some possible representations of transition languages.

Instead of computing the language of all accepting transition sequences,
our parsing algorithm can work with any semiring induced by individual transitions.
However, as presented, it introduces artificial ambiguity into the objects computed,
and thus requires such semiring to be idempotent.
In Section~\ref{sec:semiring} we pinpoint the reasons for this ambiguity,
and show how it can be eliminated by a simple yet subtle adjustment of the atomic relations.

The input-related complexity of our algorithm, discussed in Section~\ref{sec:complexity}, matches the state of the art:
our parser is worst-case cubic, quadratic on unambiguous grammars, and linear on LR-regular grammars.

There are multiple benefits of our clean, semantic approach.
The automata representing atomic languages can be optimized.
More importantly, due to immutability, partial results can be memoized and reused.
In Section~\ref{sec:memoization} we outline how to achieve such reuse by using a more complex representation of languages.
On real-world inputs (Java~8 grammar and a large collection of source files)
it allows a \emph{vast majority} of phases to be performed on a plain stack,
which makes our (unoptimized) proof-of-concept C$^\sharp$ implementation significantly outperform (our, unoptimized implementation of) GLL.
In recognition mode the speed-up is close to \emph{two orders of magnitude}---%
the running times are on par with those of the hand-coded deterministic parser used by the Java compiler.
This experimental evaluation is presented in Section~\ref{sec:evaluation}.

Finally, in Section~\ref{sec:related} we discuss how our approach fits into the generalized parsing landscape.
We show how the bits and pieces of our algorithm relate to the seminal work of Lang, Earley, and Tomita,
as well as to GLL, GLR, GLC (generalized left-corner), reduction-incorporated parsers, and Adaptive LL(*) employed by ANTLR4.
%Because we have not yet developed an optimized implementation, no speed comparisons with existing implementations are given.
Within the discussion we also hint at some places where our algorithm could be improved, or our approach extended.

Our proof-of-concept implementation can be accessed at\\
\url{https://gitlab.com/alt-lang/research/relational-parsing}.

\section{Recognition}
\label{sec:recognition}

There are many equivalent formalisms capturing context-free languages.
In this work, we use \emph{recursive transition networks}~\cite{rtn},
a restricted class of nondeterministic pushdown automata.
They directly model the computational behavior of (nondeterministic) recursive descent parsers,
and at the same time can be \emph{trivially} obtained from context-free grammars.%
\footnote{This includes popular variants of context-free grammars in which regular expressions can be used at right-hand sides of production rules.}
Importantly, the states obtained in such translation are exactly the "items" used by GLL, (G)LC, and (G)LR parsers.
Let us begin with a quick review of the chosen model.

\begin{definition}
  A \bld{recursive transition network} over a terminal alphabet $A$ and set of production labels%
  \footnote{The labels are never looked at by our algorithm.
  However, they can be attached to internal parse tree nodes---%
  a role usually served by grammar productions, hence the name.}
  $P$ consists of:
  \begin{itemize}
    \item a finite set $S$ of states, with a distinguished start state $\sstart \in S$,
    \item a finite set $D$ of transitions, each being one of:
      \begin{itemize}
        \item $\shift{s,a,t}$ with $s,t \in S$ and $a \in A$,
        \item $\call{s,u,t}$ with $s,u,t \in S$, or
        \item $\reduce{s,p}$ with $s \in S$ and $p \in P$.
      \end{itemize}
  \end{itemize}
\end{definition}

Throughout the paper, we assume to be working with a single network,
and thus consider $A$, $S$, $D$ and $P$ to be fixed.
For convenience, we let $D_a$ denote the set of shifts for each particular symbol $a \in A$,
and $D_\epsilon$---the set of all non-shift transitions.

Recognition by an RTN begins in the state $\sstart$.
At each moment, the network may nondeterministically choose any legal transition from its current state $s$.
A $\shift{s,a,t}$ consumes the symbol $a$ from the input and changes state to $t$,
$\call{s,u,t}$ recursively invokes recognition from state $u$ and moves to $t$ once the recursive call returns,
and $\reduce{s,p}$ returns from a single level of recursion.
An input $\alpha$ belongs to the language of the RTN
if there is a sequence of legal transitions which consumes the whole $\alpha$,
and ends by returning from the top recursion level.
If a parse tree is to be constructed,
its shape follows exactly the nesting of recursive calls made,
with $p$ from $\reduce{s,p}$ used to label the resulting node.

Making explicit the recursion stack, the recognition process can be formalized as follows:

\begin{definition}
  A \bld{configuration} of a recursive transition network is a stack of its states, i.e. a member of $S^*$
  (the top---current---state is the first in such sequence).
  Each transition $d \in D$ induces the following \bld{reachability} relation $\move[d]$ between configurations:
  $$
    s \con \sigma \move[\shift{s,a,t}] t \con \sigma, \qquad
    s \con \sigma \move[\call{s,u,t}] u \con t \con \sigma, \qquad
    s \con \sigma \move[\reduce{s,p}] \sigma.
  $$
  Reachability generalizes to sequences of transitions (by composing relations for consecutive transitions),
  and then to languages of transitions (by taking union of relations for individual transition sequences).
  A sequence of transitions is \bld{viable} if it induces a nonempty reachability relation.
\end{definition}

Our recognition algorithm will simulate the nondeterministic choices made by the RTN
by computing the sets of configurations reachable after consuming consecutive symbols from the input.
To avoid dealing with empty configurations, we augment the set of states $S$ by a fresh state $\sstop$ (with no outgoing transitions)
and use it as a guard marking the bottom of each configuration.

\begin{definition}
  For an input $\alpha \in A^*$,
  the \bld{parsing language} $\Sigma_{\alpha} \subseteq S^*$
  consists of configurations (augmented with $\sstop$) reachable from the initial configuration when parsing $\alpha$:
  $$
    \Sigma_{a_1 \cons a_n} \deq \{\sigma \con \sstop:\ \sstart \move[\Eps \con D_{ a_1} \cons \Eps \con D_{a_n} \con \Eps] \sigma \}.
  $$
\end{definition}

The question of whether an input belongs to the language recognized by the network
is now the same as whether its parsing language contains the configuration $\sstop$.
The following definition allows constructing this language inductively, symbol by symbol:

\begin{definition}
  The \bld{call-reduce closure} and \bld{shift by a symbol $a \in A$} of a language $\Sigma \subseteq S^*$
  are the sets of configurations reachable from $\Sigma$ without shifts and by shifting $a$, respectively:
  \begin{align*}
    \C\Sigma &\deq \{\sigma':\ \sigma \move[\Eps]\sigma',\ \sigma\in\Sigma \}, \\
    \S[a]\Sigma &\deq \{\sigma':\ \sigma \move[D_a] \sigma',\ \sigma\in\Sigma \}.
  \end{align*}
\end{definition}

Based on the obvious observation that $\Sigma_{\epsilon} = \C\sstart \con \sstop = \C\sstart \con \C\sstop$ and $\Sigma_{\alpha a} = \C{\S[a]{\Sigma_{\alpha}}}$,
we propose Algorithm~\ref{alg:main-recognize} as a generic blueprint for context-free recognition.
Assume for the moment that we already have the closures of individual states.
The main line of development of this paper is showing how line~3, called a \bld{phase}, can be performed efficiently.

\begin{algorithm}\label{alg:main-recognize}
  Checking whether $\alpha \in A^*$ belongs to the language of the RTN
  \begin{lstlisting}
  $\Sigma \assign \C\sstart \con \C\sstop$
  foreach $a \in \alpha$ (in order):
    $\Sigma \assign \expandafter\C{\S[a]\Sigma}$
  return whether $\sstop \in \Sigma$
  \end{lstlisting}
\end{algorithm}

\begin{algorithm}\label{alg:phase-recognize-naive}
  Calculating $\C{\S[a]\Sigma}$
  \begin{lstlisting}
  $\Sigma\out \assign \varnothing$
  foreach $s_0 s_1 \cons s_k \in \Sigma$:
    foreach $\shift{s_0,a,t} \in D_a$:
      $\Sigma\out \assign \Sigma\out \ \cup\  \C{t} \con s_1 \cons s_k$
      if $t \move[\Eps]\epsilon$:
        $\Sigma\out \assign \Sigma\out \ \cup\  \C{s_1} \con s_2 \cons s_k$
        if $s_1 \move[\Eps]\epsilon$:
          $\Sigma\out \assign \Sigma\out \ \cup\  \C{s_2} \con s_3 \cons s_k$
          ...
  return $\Sigma\out$
  \end{lstlisting}
\end{algorithm}

Let us begin with a na\"ive implementation, presented as Algorithm~\ref{alg:phase-recognize-naive}.
While obviously correct, it is not directly usable,
as the loop in line~2 ranges over potentially infinite $\Sigma$.
To deal with this issue, we rephrase it in terms of language derivatives.

\begin{definition}
  A \bld{derivative of degree $k$} of a language $\Sigma \subseteq S^*$ by a symbol $s \in S$
  is any collection of languages $\Sigma\deriv{s} = \{\Sigma_i\}_{i = 1 \ldots k}$
  such that for each $\sigma \in S^*$ we have%
  \footnote{On the right-hand side of the equivalence, $s$ is completely redundant.
  We keep it for consistency with later generalizations of this definition.}
  $$
    s \con \sigma \in \Sigma  \iff  s \con \sigma \in \bigcup_i s \con \Sigma_i.
  $$
  A language is \bld{$(m,k)$-regular} if it belongs to some family of size $m$, closed under derivatives of degree $k$.
  It is \bld{$k$-regular} if it is $(m,k)$-regular for some (finite) $m$, and \bld{regular} if it is $k$-regular for some $k$.
\end{definition}

This definition of regular languages is equivalent to the standard one.
The parameter $m$ measures the size of an automaton recognizing such language,
while $k$---the degree of its nondeterminism.
In particular, derivatives of degree one are exactly those of Brzozowski~\cite{Brzozowski}.

Derivatives let us look at the first (top) state in a configuration,
keeping everything that lies below it encapsulated.
This leads to Algorithm~\ref{alg:phase-recognize-deriv} which, while still operating on potentially infinite objects,
no longer needs to explicitly examine their contents.
As we will show in Section~\ref{sec:representations}, all languages encountered during recognition are regular,
and thus have a finite representation on which the necessary operations can be performed efficiently.

\begin{algorithm}\label{alg:phase-recognize-deriv}
  Calculating $\C{\S[a]\Sigma}$ using derivatives
  \begin{lstlisting}
  $\Sigma\out \assign \varnothing$
  foreach $\shift{s,a,t} \in D_a$:
    foreach $\Sigma_0 \in \Sigma\deriv{s}$:
      $\Sigma\out \assign \Sigma\out \ \cup\  \C{t} \con \Sigma_0$
      if $t \move[\Eps]\epsilon$, foreach $s_1 \in S$:
        foreach $\Sigma_1 \in \Sigma_0\deriv{s_1}$:
          $\Sigma\out \assign \Sigma\out \ \cup\  \C{s_1} \con \Sigma_1$
          if $s_1 \move[\Eps]\epsilon$, foreach $s_2 \in S$:
            foreach $\Sigma_2 \in \Sigma_1\deriv{s_2}$:
              $\Sigma\out \assign \Sigma\out \ \cup\  \C{s_2} \con \Sigma_2$
              ...
  return $\Sigma\out$
  \end{lstlisting}
\end{algorithm}

Nested iteration in Algorithm~\ref{alg:phase-recognize-deriv} could be implemented using bounded recursion.
However, the only reason for the nesting is taking derivatives by successive nullable states
(a state $s$ is \bld{nullable} if $s \move[\Eps] \epsilon$).
Let us therefore introduce a condition on languages, requiring all nullable states to be (optionally) skipped:

\begin{definition}
  A language $\Sigma \subseteq S^*$ is \bld{null-closed}
  if for each $S \ni s \move[\Eps] \epsilon$ and $\sigma_1,\sigma_2 \in S^*$ we have
  $$
    \sigma_1 \con s \con \sigma_2 \in \Sigma \implies \sigma_1\con\sigma_2 \in \Sigma.
  $$
  The \bld{null closure} of a language $\Sigma$ is the smallest null-closed language $\N\Sigma \supseteq \Sigma$.
\end{definition}

Now, for null-closed $\Sigma$, the phase algorithm can be rewritten as Algorithm~\ref{alg:phase-recognize-deep}.
Lines~5--7 are necessary because the state $s'$ could have entered the language "in the middle" of a call-reduce closure,
and thus might itself not have been call-reduce closed.

\begin{algorithm}\label{alg:phase-recognize-deep}
  Calculating $\N{\C{\S[a]\Sigma}}$ for null-closed $\Sigma$
  \begin{lstlisting}
  $\Sigma\out \assign \varnothing$
  foreach $\shift{s,a,t} \in D_a$:
    foreach $\Sigma_0 \in \Sigma\deriv{s}$:
      $\Sigma\out \assign \Sigma\out \ \cup\  \N{\C{t}} \con \Sigma_0$
      if $t \move[\Eps]\epsilon$, foreach $s' \in S$:
        foreach $\Sigma_1 \in \Sigma_0\deriv{s'}$:
          $\Sigma\out \assign \Sigma\out \ \cup\  \N{\C{s'}} \con \Sigma_1$
  return $\Sigma\out$
  \end{lstlisting}
\end{algorithm}

We would now like to take Algorithm~\ref{alg:main-recognize} and replace all call-reduce closures by their null closures.
The correctness of such change needs to be justified---we do so by the following reasoning.

\begin{lemma}[Canonical decomposition]
  Every viable sequence of non-shift transitions $\eta \in \Eps$
  can be uniquely decomposed into two parts:
  \begin{itemize}
    \item \bld{reductive} $\eta_{-}$, which nulls some configuration, i.e.,
      $S^* \ni \sigma\red \move[\eta\red] \epsilon$, and
    \item \bld{nonreductive} $\eta_{+}$, which does not null some configuration consisting of a single state, i.e.,
      $S \ni s\nred \move[\eta\nred] \sigma\nred \in S^+$.
  \end{itemize}
\end{lemma}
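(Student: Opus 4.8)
The plan is to reduce the statement to simple bookkeeping about the "footprint" of a non‑shift sequence. First I would record a normal form: since every transition acts deterministically on the top state of a configuration, each relation $\move[\eta]$ is a partial function, and a direct induction on $|\eta|$ shows that for viable $\eta \in \Eps$ there are unique words $\rho_0(\eta),\rho_1(\eta) \in S^*$ with $\move[\eta] = \{(\rho_0(\eta)\con\sigma,\ \rho_1(\eta)\con\sigma) : \sigma \in S^*\}$; informally, $\rho_0(\eta)$ is the string of states $\eta$ pops off whatever stack it is applied to, and $\rho_1(\eta)$ is what it leaves in their place. Unwinding the definitions then gives two characterizations: $\eta$ is reductive iff $\rho_1(\eta) = \epsilon$, and $\eta$ is nonreductive iff $\rho_0(\eta) = \epsilon$, or $|\rho_0(\eta)| = 1$ and $\rho_1(\eta) \neq \epsilon$. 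I would also note how normal forms compose along a factorization $\eta = \alpha\beta$ (here $\alpha$ and $\beta$ are automatically viable, being a prefix and a suffix of the viable $\eta$): $\rho_1(\alpha)$ and $\rho_0(\beta)$ are comparable under the prefix order, and $\rho_0(\eta),\rho_1(\eta)$ are obtained by cancelling the shorter of the two against the longer.

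Writing $\eta = d_1\cons d_n$, I would track the residues $b_k := \rho_1(d_1\cons d_k)$ of all prefixes, with $b_0 = \epsilon$. Passing from $b_{k-1}$ to $b_k$ is transparent: if $b_{k-1} \neq \epsilon$ then $d_k$ acts on its top symbol; if $b_{k-1} = \epsilon$ then $d_k$ instead "demands" one fresh state from below --- lengthening $\rho_0$ of the prefix by one --- and resets the residue to $\epsilon$ if $d_k$ is a reduce, or to a two‑state word if $d_k$ is a call (a shift is impossible). Two consequences: $d_1\cons d_k$ is reductive exactly when $b_k = \epsilon$; and whenever $b_j = \epsilon$, the footprint of the suffix $d_{j+1}\cons d_n$ has length equal to the number of steps $\ell$ with $j < \ell \le n$ and $b_{\ell-1} = \epsilon$, while its residue equals $b_n$.

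For existence I would put $k^\ast := \max\{k : b_k = \epsilon\}$ (well defined since $b_0 = \epsilon$) and set $\eta_- := d_1\cons d_{k^\ast}$, $\eta_+ := d_{k^\ast+1}\cons d_n$; then $\eta_-$ is reductive by construction. To see $\eta_+$ is nonreductive: if its footprint is empty we are done; otherwise, by the count above, its only demand is at step $k^\ast+1$ --- a second one would need some $b_{\ell-1} = \epsilon$ with $\ell - 1 > k^\ast$, which is impossible --- so $|\rho_0(\eta_+)| = 1$; moreover that first transition $d_{k^\ast+1}$ must be a call, since a reduce would give $b_{k^\ast+1} = \epsilon$ against maximality of $k^\ast$; and $\rho_1(\eta_+) = b_n \neq \epsilon$, since $b_n = \epsilon$ would make $n$ a reductive prefix, forcing $k^\ast = n$ and hence an empty footprint for $\eta_+$. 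Thus $|\rho_0(\eta_+)| = 1$ and $\rho_1(\eta_+)\neq\epsilon$, so $\eta_+$ is nonreductive.

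For uniqueness, suppose $\eta = \eta_- \eta_+$ with $\eta_-$ reductive and $\eta_+$ nonreductive, and set $j := |\eta_-|$; then $b_j = \epsilon$, so $j \le k^\ast$. If $j < k^\ast$, then $\eta_+ = d_{j+1}\cons d_n$ makes a demand at step $j+1$ and, since $b_{k^\ast} = \epsilon$ with $j < k^\ast$, either also makes one at step $k^\ast+1$ when $k^\ast < n$ (so $|\rho_0(\eta_+)| \ge 2$), or $k^\ast = n$, in which case $\rho_1(\eta_+) = b_n = \epsilon$ while $\rho_0(\eta_+)$ is non‑empty; a non‑shift sequence with empty residue and non‑empty footprint is never nonreductive, and neither is one with a footprint of length at least two, so both cases contradict the choice of $\eta_+$. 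Hence $j = k^\ast$ and the decomposition is forced. I expect the crux of the argument to be the existence step, specifically showing $|\rho_0(\eta_+)| \le 1$: this is exactly where the asymmetry between "reductive" (a condition on the residue alone) and "nonreductive" (which also caps the footprint length) has to be exploited, and getting the normal‑form bookkeeping set up cleanly is what makes this routine rather than fiddly.
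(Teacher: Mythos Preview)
Your argument is correct. The paper proves the lemma by a short induction on $|\eta|$: if the whole sequence is valid from the source state $s_0$ of its first transition and does not null it, take $\eta_- = \epsilon$; otherwise peel off the longest prefix $\eta_0$ valid from $s_0$ (which necessarily nulls $s_0$), recurse on the remainder, and prepend $\eta_0$ to the reductive part obtained. Uniqueness follows by the same induction, using that no proper prefix of $\eta_0$ nulls $s_0$.

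Your route is genuinely different in presentation, though it pinpoints the same split. Instead of peeling recursively, you set up the normal form $(\rho_0,\rho_1)$ once, track the residues $b_k = \rho_1(d_1\cdots d_k)$, and locate the split globally as $k^\ast = \max\{k : b_k = \epsilon\}$; the paper's $\eta_0$-blocks are exactly the segments between consecutive zeros of your $b$-sequence, and its recursion is your maximization unrolled. What your approach buys is an explicit invariant that makes both existence and uniqueness one-line maximality arguments, at the price of more up-front bookkeeping (the normal form, the composition rule, the demand count). What the paper's induction buys is brevity: it never names $\rho_0,\rho_1$ or $b_k$, and the whole proof fits in a paragraph. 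One small remark: your side observation that $d_{k^\ast+1}$ must be a call is true but not needed --- maximality of $k^\ast$ already gives $|\rho_0(\eta_+)|\le 1$ and $\rho_1(\eta_+)\neq\epsilon$ directly, which is all you use.
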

\begin{proof}
  The proof is by induction on the length of $\eta$.
  If $\eta = \epsilon$, the decomposition is $\eta\red = \epsilon$ (with $\sigma\red = \epsilon$) and $\eta\nred = \epsilon$ (with $\sigma\nred = s\nred$ being any state in $S$).

  Otherwise, let $s_0$ be the source state of the first transition in $\eta$.
  If $s_0 \move[\eta] \sigma\nred$ for some $\sigma\nred \in S^+$, the decomposition is $\eta\red = \epsilon$ and $\eta\nred = \eta$ (with $s\nred = s_0$).
  Otherwise, let $\eta = \eta_0\con\eta'$ with $\eta_0$ being the longest prefix of $\eta$ valid from $s_0$.
  It must be that $s_0 \move[\eta_0] \epsilon$.
  Use the induction hypothesis to decompose $\eta' = \eta'\red\con\eta'\nred$ (witnessed by $\sigma'\red$, $s'\nred$ and $\sigma'\nred$),
  and set $\eta\red = \eta_0\con\eta'\red$ (with $\sigma\red = s_0\con\sigma'\red$) and $\eta\nred = \eta'\nred$ (with $s\nred = s'\nred$ and $\sigma\nred = \sigma'\nred$).

  Uniqueness follows by the same induction from the fact that no proper prefix of $\eta_0$ nulls $s_0$.
\end{proof}

\begin{lemma}
  \label{lem:shift-recognize}
  $\S[a]{\N\Sigma} \subseteq \N{\S[a]\Sigma}$ for every call-reduce-closed language $\Sigma \subseteq S^*$.
\end{lemma}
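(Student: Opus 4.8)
The plan is to sidestep the canonical decomposition lemma for this direction and instead argue from an explicit combinatorial description of the null closure. \emph{First} I would prove the characterization: $w \in \N\Sigma$ iff $w$ can be obtained from some $v \in \Sigma$ by deleting an arbitrary finite set of occurrences of nullable states. The set of all such $w$ clearly contains $\Sigma$ and is null-closed (deleting one further nullable occurrence is again a deletion of that kind), hence contains $\N\Sigma$; conversely, any null-closed superset of $\Sigma$ contains every such $w$, by induction on the number of deleted occurrences — note that deleting a state never changes which states are nullable, so no hidden fixpoint iteration is involved. Alongside this I would record the routine \emph{bottom-extension} property of reachability: since every transition rewrites only the top of the configuration, $\sigma_1 \move[\eta] \sigma_2$ implies $\sigma_1 \con \gamma \move[\eta] \sigma_2 \con \gamma$ for every $\gamma \in S^*$; in particular a nullable state $s$ satisfies $s \con \gamma \move[\Eps] \gamma$, and since $\Eps$ is closed under concatenation a whole string of nullable states can be erased from the top of a configuration by a single $\move[\Eps]$ step.

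\emph{Next} I would unwind the left-hand side. Take $\tau \in \S[a]{\N\Sigma}$. By definition of shift by $a$ there is a transition $\shift{s,a,t} \in D_a$ and a configuration $s \con \sigma \in \N\Sigma$ with $\tau = t \con \sigma$. Applying the characterization, fix $v \in \Sigma$ from which $s \con \sigma$ arises by deleting nullable occurrences, and locate the occurrence of $v$ that survives as the leading $s$: this writes $v = w_0 \con s \con w_1$, where $w_0$ (everything to the left of the surviving leading symbol) is entirely deleted, hence a string of nullable states, and $\sigma$ is obtained from $w_1$ by deleting some nullable occurrences.

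\emph{Then} I push this through one shift. By the bottom-extension property, $v = w_0 \con s \con w_1 \move[\Eps] s \con w_1$, so call-reduce-closedness of $\Sigma$ gives $s \con w_1 \in \Sigma$. Firing $\shift{s,a,t}$ yields $s \con w_1 \move[D_a] t \con w_1$, hence $t \con w_1 \in \S[a]\Sigma$. Finally, deleting from $t \con w_1$ exactly the nullable occurrences that turned $w_1$ into $\sigma$ — none of which touches the leading $t$ — produces $t \con \sigma = \tau$, so the characterization gives $\tau \in \N{\S[a]\Sigma}$, as required.

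The computation is short; the one spot that needs genuine care is the first step, namely pinning down the "delete nullable occurrences" description of $\N\Sigma$ precisely (in particular that it is a closure operator reached in a single sweep) and then being careful that the deletions witnessing $s \con \sigma \in \N\Sigma$ can be reorganized as "erase a nullable prefix $w_0$, then delete further nullable occurrences inside the tail $w_1$". Once that bookkeeping is in place, everything else is a direct unwinding of the definitions of $\S[a]{\cdot}$, of call-reduce-closedness, and of $\move[D_a]$, together with the trivial fact that reachability ignores the bottom of the stack.
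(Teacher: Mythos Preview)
Your proposal is correct and, despite the framing, is essentially the paper's own argument spelled out in more detail. The paper's proof does \emph{not} invoke the canonical decomposition lemma for this direction either: it simply asserts that $s\con\sigma\in\N\Sigma$ forces the existence of $\sigma_1\con s\con\sigma_2\in\Sigma$ with $\epsilon\in\N{\sigma_1}$ and $\sigma\in\N{\sigma_2}$, then uses call-reduce-closedness to drop $\sigma_1$, shifts, and re-applies null closure on the tail --- exactly your $w_0$, $w_1$ decomposition. The only difference is that you make explicit (and justify) the ``delete nullable occurrences'' description of $\N\Sigma$ and the bottom-extension property of $\move[\Eps]$, both of which the paper takes for granted.
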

\begin{proof}
  Every configuration in $\S[a]{\N\Sigma}$ comes from
  $$
    \N\Sigma \ni s\con\sigma \move[\shift{s,a,t} \in D_a] t\con\sigma
    \quad\implies\quad
    t\con\sigma \in \S[a]{\N\Sigma}.
  $$
  Then there must exist $\sigma_1 \con s \con \sigma_2 \in \Sigma$
  with $\epsilon \in \N{\sigma_1}$ and $\sigma \in \N{\sigma_2}$.
  As $\Sigma$ is call-reduce-closed, we have $s \con \sigma_2 \in \Sigma$,
  from which $t \con \sigma_2 \in \S[a]\Sigma$
  and $t \con \sigma \in \N{\S[a]\Sigma}$ as desired.
\end{proof}

\begin{lemma}
  \label{lem:closure-recognize}
  $\C{\N\Sigma} \subseteq \N{\C\Sigma}$ for every language $\Sigma \subseteq S^*$.
\end{lemma}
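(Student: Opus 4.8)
The plan is to reduce the inclusion to a statement about individual configurations and prove that statement by induction on the length of a witnessing non-shift sequence. As a preliminary I would pin down a concrete description of the null closure: write $\tau \preceq \sigma$ when $\tau$ is obtained from $\sigma$ by deleting zero or more occurrences of nullable states (this relation is transitive). One checks routinely that $\{\tau : \tau \preceq \sigma \text{ for some } \sigma \in L\}$ contains $L$, is null-closed (if $\sigma_1\con s\con\sigma_2 \preceq \sigma \in L$ with $s$ nullable, then $\sigma_1\con\sigma_2 \preceq \sigma_1\con s\con\sigma_2 \preceq \sigma$), and is contained in every null-closed superset of $L$ (iterate the null-closedness condition along the deletions); hence it equals $\N L$. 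The proof will only use two consequences of this: every $\tau \in \N\Sigma$ satisfies $\tau \preceq \sigma$ for some $\sigma \in \Sigma$, and every $\sigma'$ with $\sigma' \preceq \rho$ for some $\rho \in \C\Sigma$ lies in $\N{\C\Sigma}$.

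The core is the following claim: if $\tau \preceq \sigma$ and $\tau \move[\eta] \sigma'$ for some $\eta \in \Eps$, then $\sigma \move[\Eps] \rho$ for some $\rho$ with $\sigma' \preceq \rho$. Granting it, the lemma follows: an element of $\C{\N\Sigma}$ is some $\sigma'$ with $\tau \move[\eta] \sigma'$, $\tau \in \N\Sigma$, $\eta \in \Eps$; choosing $\sigma \in \Sigma$ with $\tau \preceq \sigma$ and applying the claim yields $\rho$ with $\sigma \move[\Eps] \rho$ (so $\rho \in \C\Sigma$) and $\sigma' \preceq \rho$, whence $\sigma' \in \N{\C\Sigma}$.

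I would prove the claim by induction on $|\eta|$. If $\eta = \epsilon$ then $\sigma' = \tau \preceq \sigma$, and $\rho := \sigma$ works since $\sigma \move[\Eps] \sigma$. If $\eta = d\con\eta'$ with $d \in D_\epsilon$, pick $\tau''$ with $\tau \move[d]\tau'' \move[\eta']\sigma'$; since $\move[d]$ only rewrites the top of the stack, $\tau = s\con\upsilon$ where $s$ is the source of $d$, and correspondingly $\tau \preceq \sigma$ lets us write $\sigma = w\con s\con w'$ with $w$ a block of nullable states (the deletions left of the matched copy of $s$) and $\upsilon \preceq w'$. Then $w\con s\con w' \move[\Eps] s\con w'$ (pop the nullable block) and $s\con w' \move[d] \hat\sigma$, where $\hat\sigma = w'$ if $d = \reduce{s,p}$ and $\hat\sigma = u\con t\con w'$ if $d = \call{s,u,t}$; in both cases $\tau'' \preceq \hat\sigma$ (in the call case, extend the subsequence embedding by matching the fresh $u,t$ to themselves). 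The induction hypothesis applied to $\tau'' \move[\eta']\sigma'$ and $\tau'' \preceq \hat\sigma$ gives $\rho$ with $\hat\sigma \move[\Eps]\rho$ and $\sigma' \preceq \rho$, and splicing, $\sigma = w\con s\con w' \move[\Eps] s\con w' \move[d] \hat\sigma \move[\Eps] \rho$, which is a single $\Eps$-derivation because $d \in D_\epsilon$.

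The step I expect to be the crux — and the reason the auxiliary claim is phrased for an arbitrary $\sigma$ with $\tau \preceq \sigma$ rather than staying inside $\N\Sigma$ — is the $\call$ case: after a $\call$ the intermediate configuration $\tau''$ carries the freshly pushed states $u,t$, which need be neither nullable nor present in $\Sigma$, so $\tau''$ generally leaves $\N\Sigma$. Tracking instead a witness $\hat\sigma$ with $\tau'' \preceq \hat\sigma$ together with a fresh $\Eps$-path out of it repairs this; the remaining ingredients (popping a nullable prefix, transitivity of $\preceq$, closure of $\Eps$ under the relevant concatenations) are routine.
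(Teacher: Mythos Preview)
Your proof is correct, but the route differs from the paper's. The paper invokes its Canonical Decomposition lemma: it splits $\eta$ once into a reductive prefix $\eta\red$ (nulling some $\sigma\red$) and a nonreductive suffix $\eta\nred$, then uses the splitting point $s\nred$ to locate a matching split of a witness in $\Sigma$. You instead run a direct induction on $|\eta|$, peeling off one transition at a time and maintaining the invariant $\tau'' \preceq \hat\sigma$ together with an $\Eps$-path from the original $\sigma$ to $\hat\sigma$. Your version is more elementary in that it does not need the canonical decomposition at all; the paper's version is shorter once that lemma is available. It is worth noting that your argument is essentially the unguided specialisation of the paper's later Lemma~\ref{lem:track-back} (``tracking back''), which is itself proved by induction on transition length; so you have in effect discovered the uniform argument that the paper only deploys in the parsing section, where the canonical decomposition no longer suffices.
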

\begin{proof}
  Every configuration in $\C{\N\Sigma}$ comes from
  $$
    \N\Sigma \ni \sigma \move[\eta\in\Eps] \sigma'
    \quad\implies\quad
    \sigma' \in \C{\N\Sigma}.
  $$
  Let $\eta = \eta\red\con\eta\nred$ be the canonical decomposition of $\eta$, witnessed by $\sigma\red$, $s\nred$ and $\sigma\nred$.
  If $\eta\nred = \epsilon$, then $\sigma = \sigma\red \con \sigma$
  with $\sigma\red \in \N{\sigma_1}$ and $\sigma' \in \N{\sigma_2}$ for some $\sigma_1 \con \sigma_2 \in \Sigma$.
  Then $\sigma_1 \move[\Eps] \epsilon$, implying that $\sigma_2 \in \C\Sigma$ and $\sigma' \in \N{\C\Sigma}$.

  Otherwise, for some $\sigma''$ we must have
  $$
    \sigma = \sigma\red \con s\nred \con \sigma'' \move[\eta\red] s\nred \con \sigma'' \move[\eta\nred] \sigma\nred \con \sigma'' = \sigma'.
  $$
  Also, there must exist $\sigma_1 \con s\nred \con \sigma_2 \in \Sigma$
  with $\sigma\red \in \N{\sigma_1}$ and $\sigma'' \in \N{\sigma_2}$.
  We then have
  $$
    \sigma_1 \con s\nred \con \sigma_2 \move[\Eps] s\nred \con \sigma_2 \move[\Eps] \sigma\nred \con \sigma_2,
  $$
  from which $\sigma\nred \con \sigma_2 \in \C\Sigma$
  and $\sigma' = \sigma\nred \con \sigma'' \in \N{\C\Sigma}$ as desired.
\end{proof}

\begin{proposition}
  \label{prop:main-recognize}
  $\N{\Sigma_{\epsilon}} = \N{\C{\sstart\con\sstop}}$ and $\N{\Sigma_{\alpha a}} = \N{\C{\S[a]{\N{\Sigma_{\alpha}}}}}$.
\end{proposition}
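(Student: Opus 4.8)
The plan is to reduce both equations to the identities already recorded just before the proposition, namely $\Sigma_\epsilon = \C\sstart\con\sstop$ and $\Sigma_{\alpha a}=\C{\S[a]{\Sigma_\alpha}}$, together with Lemmas~\ref{lem:shift-recognize} and~\ref{lem:closure-recognize} and the purely formal monotonicity/idempotency properties of the three operators involved. For the first equation I would observe that $\Sigma_\epsilon$ is in fact \emph{literally} equal to $\C{\sstart\con\sstop}$: since $\sstop$ has no outgoing transitions it can never be the source of a non-shift transition, so along any $\eta\in\Eps$ fired from $\sstart\con\sstop$ the guard $\sstop$ stays at the bottom while the stack above it evolves exactly as it would from $\sstart$ alone (reachability being preserved under appending a common stack suffix, immediately from the three defining clauses of $\move[d]$). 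Hence $\Sigma_\epsilon=\C{\sstart\con\sstop}$, and applying $\N{\cdot}$ to both sides gives the first equation at once.

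For the second equation, substituting $\Sigma_{\alpha a}=\C{\S[a]{\Sigma_\alpha}}$ reduces the claim to
$$
  \N{\C{\S[a]{\Sigma_\alpha}}} = \N{\C{\S[a]{\N{\Sigma_\alpha}}}},
$$
which I would prove by two inclusions. The inclusion "$\subseteq$" is immediate from $\Sigma_\alpha\subseteq\N{\Sigma_\alpha}$ and the monotonicity of $\S[a]{\cdot}$, $\C{\cdot}$ and $\N{\cdot}$ (the first two from their set-builder definitions, the third because $\N Y$ is by definition the least null-closed language containing $Y$). For "$\supseteq$" I would first note that every $\Sigma_\alpha$ is call-reduce-closed: for $\alpha\neq\epsilon$ it is of the form $\C X$, and $\C{\C X}=\C X$ because $\Eps=D_\epsilon^*$ is closed under concatenation; for $\alpha=\epsilon$ we have just seen $\Sigma_\epsilon=\C{\sstart\con\sstop}$, again of that shape. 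Then Lemma~\ref{lem:shift-recognize} applies to $\Sigma_\alpha$ and gives $\S[a]{\N{\Sigma_\alpha}}\subseteq\N{\S[a]{\Sigma_\alpha}}$; applying the monotone $\C{\cdot}$ and then Lemma~\ref{lem:closure-recognize} (with $\Sigma:=\S[a]{\Sigma_\alpha}$) yields
$$
  \C{\S[a]{\N{\Sigma_\alpha}}} \subseteq \C{\N{\S[a]{\Sigma_\alpha}}} \subseteq \N{\C{\S[a]{\Sigma_\alpha}}};
$$
taking $\N{\cdot}$ of both ends and using $\N{\N{\cdot}}=\N{\cdot}$ gives "$\supseteq$", hence equality.

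The only step that is not a one-line diagram chase is verifying that attaching the $\sstop$ guard does not spoil call-reduce-closedness, i.e.\ that $\Sigma_\epsilon$ (and therefore each $\Sigma_\alpha$) genuinely has the shape $\C X$ to which Lemma~\ref{lem:shift-recognize} applies; this is precisely the "$\sstop$ is inert" observation from the first paragraph, so I expect no real obstacle. The monotonicity of $\S[a]{\cdot},\C{\cdot},\N{\cdot}$ and the idempotencies $\C{\C X}=\C X$ and $\N{\N X}=\N X$ I would each dispatch in a sentence straight from the definitions.
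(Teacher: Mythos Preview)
Your proposal is correct and follows essentially the same route as the paper: the first equality is taken directly from the identity $\Sigma_\epsilon=\C{\sstart\con\sstop}$, and the second is obtained by the same chain of inclusions, using Lemma~\ref{lem:shift-recognize}, Lemma~\ref{lem:closure-recognize}, monotonicity of all three operators, and idempotency of $\N{\cdot}$. If anything, you are a bit more explicit than the paper in checking that $\Sigma_\alpha$ is call-reduce-closed before invoking Lemma~\ref{lem:shift-recognize}; the paper uses this hypothesis silently.
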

\begin{proof}
  The first equality follows directly from the definition of $\Sigma_{\epsilon}$.
  The second can be obtained by combining Lemmas~\ref{lem:shift-recognize} and~\ref{lem:closure-recognize},
  idempotency of null closure, and monotonicity of null closure, shift and call-reduce closure into the following:
  $$
    \N{\C{\S[a]{\N{\Sigma_{\alpha}}}}} \subseteq
    \N{\C{\N{\S[a]{\Sigma_{\alpha}}}}} \subseteq
    \N{\N{\C{\S[a]{\Sigma_{\alpha}}}}} =
    \N{\N{\Sigma_{\alpha a}}} =
    \N{\Sigma_{\alpha a}} =
    \N{\C{\S[a]{\Sigma_{\alpha}}}} \subseteq
    \N{\C{\S[a]{\N{\Sigma_{\alpha}}}}}.
  $$
\end{proof}

As a consequence, plugging Algorithm~\ref{alg:phase-recognize-deep} into Algorithm~\ref{alg:main-recognize} (with all call-reduce closures replaced by their null closures)
we obtain a procedure for context-free recognition,
realizable using any data type $\type\Sigma$ representing languages over $S$, implementing API~\ref{api:languages}.

\begin{api}\label{api:languages}
  Operations required for recognition.
  \begin{itemize}
    \item $\texttt{root}() \in \type\Sigma$ \tab $\{\epsilon\}$,
    \item $\texttt{prepend}(\N{\C{s}}; \Sigma \in \type\Sigma) \in \type\Sigma$ \tab $\N{\C{s}} \con \Sigma$,
    \item $\texttt{union}(\Sigma_1,\Sigma_2 \in \type\Sigma) \in \type\Sigma$ \tab $\Sigma_1 \cup \Sigma_2$,
    \item $\texttt{epsilon}(\Sigma \in \type\Sigma) \in \{\texttt{true},\texttt{false}\}$ \tab whether $\epsilon \in \Sigma$,
    \item $\texttt{derivative}(\Sigma \in \type\Sigma; s \in S) \in \type\Sigma^{\leqslant k}$ \tab $\Sigma\deriv{s}$.
  \end{itemize}
\end{api}

\section{Parsing}
\label{sec:parsing}

We now augment our algorithm to perform generalized parsing and not just mere recognition.
The exposition here mirrors closely the one from the previous section.
All definitions and algorithms are adjusted so that instead of asking \emph{whether} some configuration is reachable,
we ask about \emph{the language of transitions} by which it can be reached.
In particular, instead of computing (a representation of) all possible parse trees,
we are going to output the language of all legal transition sequences.

\begin{definition}
  For an input $\alpha \in A^*$,
  the \bld{parsing relation} $\Phi_{\alpha} \subseteq S^* {\times} D^*$
  associates configurations reachable when parsing $\alpha$
  with \textit{reversed}%
  \footnote{Tracking how an RTN configuration evolves under a sequence of transitions,
  one can see that the states at the \emph{beginning} (top) of the ultimate configuration depend primarily on the transitions at the \emph{end} of the sequence.
  Thus, to keep the direction of concatenation consistent, we record the transition history in reverse.}
  sequences of transitions by which they can be reached from the initial configuration:
  $$
    \Phi_{a_1 \cons a_n} \deq \{(\sigma\con\sstop,\rev\delta):\ \sstart \move[\delta \in \Eps \con D_{a_1} \cons \Eps \con D_{a_n} \con \Eps ] \sigma \}.
  $$
\end{definition}

In particular, accepting transition sequences for an input string $\alpha$
are exactly (reversed) those which $\Phi_{\alpha}$ associates with the configuration $\sstop$.

\begin{definition}
  The \bld{call-reduce closure} and \bld{shift by a symbol $a \in A$}
  of a relation $\Phi \subseteq S^* {\times} D^*$
  are defined as:
  \begin{align*}
    \C\Phi &\deq \{(\sigma', \rev\eta \con \rev\delta):\ \sigma \move[\eta \in \Eps] \sigma',\ (\sigma, \rev\delta)\in\Phi  \}, \\
    \S[a]\Phi &\deq \{(\sigma', d \con \rev\delta):\ \sigma \move[d \in D_a] \sigma',\ (\sigma, \rev\delta)\in\Phi \}.
  \end{align*}
\end{definition}

Parsing relations for a particular input can be calculated inductively analogously to parsing languages,
because $\Phi_{\epsilon} = \C{\sstart\con\sstop}$ and $\Phi_{\alpha a} = \C{\S[a]{\Phi_{\alpha}}}$.
To efficiently perform a single phase according to the latter equality, we need a suitable notion of derivatives:

\begin{definition}
  \label{df:relderiv}
  A \bld{derivative of degree $k$} of a relation $\Phi \subseteq S^* {\times} D^*$ by a symbol $s \in S$
  is any collection of pairs $\Phi\deriv{s} =\{(\Delta_i,\Phi_i)\}_{i = 1 \ldots k}$
  with $\Delta_i \subseteq D^*$ and $\Phi_i \subseteq S^* {\times} D^*$,
  such that for each $\sigma \in S^*$ and $\delta \in D^*$ we have
  $$
    (s\con\sigma, \rev\delta) \in \Phi \iff (s\con\sigma, \rev\delta) \in \bigcup_i \Delta_i \con (s,\epsilon) \con \Phi_i.
  $$
  A relation is \bld{$(m,k)$-regular} if it belongs to some family of size $m$, closed under derivatives of degree $k$.
  It is \bld{$k$-regular} if it is $(m,k)$-regular for some $m$, and \bld{regular} if it is $k$-regular for some $k$.
\end{definition}

Regular relations, as defined here, are similar to \emph{rational relations}---%
a well-studied generalization of regular languages~\cite{rational}.
The fundamental difference is that the latter require the components to be "jointly regular",
whereas we put no restriction on the transition languages $\Delta_i$.
We are only concerned about finiteness (and bounded nondeterminism) of the automaton with respect to $S$.
Note however, that despite this flexibility, such automata are not always determinizable.

\begin{algorithm}\label{alg:phase-parse-deriv}
  Calculating $\C{\S[a]\Phi}$
  \begin{lstlisting}
  $\Phi\out \assign \varnothing$
  foreach $d = \shift{s,a,t} \in D_a$:
    foreach $(\Delta_0,\Phi_0) \in \Phi\deriv{s}$:
      $\Phi\out \assign \Phi\out \ \cup\  \C{t} \con d \con \Delta_0 \con \Phi_0$
      if $\nulling{t} \neq \varnothing$, foreach $s_1 \in S$:
        foreach $(\Delta_1, \Phi_1) \in \Phi_0\deriv{s_1}$:
          $\Phi\out \assign \Phi\out \ \cup\  \C{s_1} \con \nulling{t} \con d \con \Delta_0 \con \Delta_1 \con \Phi_1$
          if $\nulling{s_1} \neq \varnothing$, foreach $s_2 \in S$:
            foreach $(\Delta_2, \Phi_2) \in \Phi_1\deriv{s_2}$:
              $\Phi\out \assign \Phi\out \ \cup\  \C{s_2} \con \nulling{s_1} \con \nulling{t} \con d \con \Delta_0 \con \Delta_1 \con \Delta_2 \con \Phi_2$
              ...
  return $\Phi\out$
  \end{lstlisting}
\end{algorithm}

For convenience, let us denote
$$
  \D{s}\sigma \deq \{\rev\eta :\  s \move[\eta\in\Eps] \sigma \}.
$$
Algorithm~\ref{alg:phase-parse-deriv} (completely analogous to Algorithm~\ref{alg:phase-recognize-deriv}) can now be used to calculate $\C{\S[a]\Phi}$.
Adapting it to have constant depth is however not as easy as before,
because skipping nullable states requires additional transitions to be performed.
To handle this, we enrich the alphabet of configurations to
$$
  G \deq S \cup D_\epsilon,
$$
allowing them to intersperse states with (non-shift) transitions.
Such transitions will "guide" the simulation of the RTN---%
if present on top of a configuration, they will need to be executed (en bloc) before the states below them can be accessed.
To express that, reachability is enriched by
$$
  \eta \con \gamma \move[\eta \in D_\epsilon^+] \gamma \quad\text{if } \gamma \text{ does not start with a transition.}
$$
Note, that guiding transitions will be introduced into configurations only artificially, replacing states which we commit to null later on---%
reachability may only remove them, fulfilling that commitment.

We now consider \bld{guided languages} (languages of guided configurations, i.e., subsets of $G^*$)
and \bld{guided relations} (subsets of $G^* {\times} D^*$),
and (re)define null closure to optionally replace states by their nulling transitions:

\begin{definition}
  A guided language $\Gamma \subseteq G^*$  (respectively, guided relation $\Psi \subseteq G^* {\times} D^*$) is \bld{null-closed} if
  for each $S \ni s \move[\eta\in\Eps] \epsilon$, $\gamma_1,\gamma_2 \in G^*$ (and $\delta \in D^*$) we have
  \begin{align*}
    \gamma_1 \con s \con \gamma_2 \in \Gamma &\implies \gamma_1 \con \eta \con \gamma_2 \in \Gamma\\
    \text{(respectively,}\quad (\gamma_1 \con s \con \gamma_2, \rev\delta) \in \Psi &\implies  (\gamma_1 \con \eta \con \gamma_2, \rev\delta) \in \Psi \quad\text{).}
  \end{align*}
  The \bld{null closure} of a language $\Gamma$ (relation $\Psi$) is the smallest null-closed guided language $\N\Gamma \supseteq \Gamma$ (guided relation $\N\Psi \supseteq \Psi$).
\end{definition}

We now revisit and augment the reasoning from the previous section
to show that an inductive procedure can be used to compute the null closures of parsing relations.
First, we make a convenient observation about the interplay between performing transitions and null closure:

\begin{lemma}
  \label{lem:track-back}
  Consider $\gamma \in G^*$ and $\gamma' \in \N\gamma$.
  Then any sequence of transitions $\gamma' \move[\delta \in D^*] \gamma'_2$
  can be "tracked back" to $\gamma \move[\delta] \gamma_2$ with $\gamma'_2 \in \N{\gamma_2}$.
\end{lemma}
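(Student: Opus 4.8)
The plan is to induct on the length of $\delta$. The base case $\delta = \epsilon$ is immediate: take $\gamma_2 = \gamma$, since $\gamma' \in \N\gamma$ by hypothesis. Before the inductive step I would record the shape of $\N\gamma$: because every nulling sequence $\eta$ (with $s \move[\eta]\epsilon$) lies in $D_\epsilon^+$ and so contains no states, each $\gamma' \in \N\gamma$ arises from $\gamma = g_1 \cons g_n$ by, independently for each position, either keeping $g_i$ or — when $g_i$ is a nullable state — replacing it by a chosen nulling sequence $\eta_i$. Thus $\gamma'$ is the concatenation of per-position "blocks", and for any prefix of $\gamma'$ one can read off which prefix $g_1 \cons g_\ell$ of $\gamma$ produced it.

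For the inductive step I would split on the first base move realising $\gamma' \move[\delta] \gamma'_2$. If $\gamma'$ starts with a state $s$, that move is a shift, call, or reduce, and $\gamma = s \con \gamma''$ with the tail $\sigma'$ of $\gamma'$ lying in $\N{\gamma''}$. Applying the same transition to $\gamma$ yields $t \con \gamma''$, $u \con t \con \gamma''$, or $\gamma''$ respectively, and in each case the new guided tail still lies in the null closure of the new configuration (prepending states to a configuration only enlarges its null closure). Then the induction hypothesis on the shorter remaining transition sequence, composed with this first move, gives the result.

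If instead $\gamma'$ starts with a transition, the only applicable base move is the en-bloc rule, which must consume the entire maximal leading block $\eta \in D_\epsilon^+$ of $\gamma'$, leaving a tail $\xi$ that starts with a state (or is empty). Here $\eta$ is exactly the concatenation of the blocks contributed by a prefix $g_1 \cons g_\ell$ of $\gamma$ in which every $g_i$ is either a non-shift transition or a replaced nullable state, while $g_{\ell+1} \cons g_n =: \gamma''$ produces $\xi$, so $\xi \in \N{\gamma''}$. The crux is an auxiliary claim: $g_1 \cons g_\ell \con \gamma'' \move[\eta] \gamma''$ whenever $\gamma''$ does not begin with a transition. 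I would prove it by induction on $\ell$, alternately (a) swallowing a maximal initial run of genuine transitions with a single en-bloc step (legal because the next element is the state $g_{\ell'+1}$, or is $\gamma''$ which starts with a state) and (b) discharging the next nullable state $g_i$ via its nulling sequence, using that a plain derivation $g_i \move[\eta_i] \epsilon$ consists only of call/reduce steps acting at the top of the stack and hence lifts to $g_i \con \sigma \move[\eta_i] \sigma$ for any guided $\sigma$. Given the claim, $\gamma \move[\eta] \gamma''$, $\xi \in \N{\gamma''}$, and the induction hypothesis applied to the shorter $\delta'$ (where $\delta = \eta \con \delta'$) finish the case.

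The main obstacle, and the reason the second case needs care, is that guiding transitions must be executed atomically in maximal blocks: they cannot be peeled off one at a time, and guided reachability is not closed under appending an arbitrary suffix. Pinning down exactly which intermediate configurations "do not start with a transition" — so that each en-bloc step and the suffix-lifting of plain nulling derivations are both legitimate — is where I expect the real work to lie; everything else is bookkeeping about how null-closure blocks align with prefixes of $\gamma$.
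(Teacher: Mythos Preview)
Your proposal is correct and uses the same overall induction on $|\delta|$ as the paper, but organizes the case analysis differently. The paper splits on the first symbol of $\gamma$ and handles three cases: $\gamma$ begins with a transition $d$ (peel $d$ off both $\gamma$ and $\gamma'$), $\gamma$ begins with a state that was kept in $\gamma'$ (your state case), and $\gamma$ begins with a state that was replaced by a nulling sequence $\eta$ (peel $\eta$ off both sides). Each case shortens $\delta$ by at least one, so the outer induction alone suffices---no auxiliary lemma is needed.

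You instead split on the first symbol of $\gamma'$ and, when it is a transition, consume the \emph{entire} maximal transition prefix of $\gamma'$ in one step, then prove separately (your ``auxiliary claim'') that $\gamma$ can perform the same sequence. This inner induction is precisely what the paper's cases~1 and~3 accomplish implicitly via repeated application of the outer induction. Your version is more explicit about respecting the en-bloc rule (guiding transitions must be executed as a maximal block, so one cannot literally peel a single $d$ off the front of $\gamma'$ as an atomic move), a point the paper's write-up glosses over. The trade-off is that the paper's argument is shorter and avoids the auxiliary lemma, while yours makes the well-formedness of every intermediate step visible.
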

\begin{proof}
  The proof is by induction on the length of $\delta$.
  The case when $\delta = \epsilon$ is trivial.
  Otherwise, $\gamma'$ and $\gamma$ must be nonempty.
  We distinguish three cases.

  If $\gamma = d\con\gamma_1$ for some $d \in D_\epsilon$ and $\gamma_1 \in G^*$, we have
  $$
    \gamma' = d\con\gamma'_1 \move[d] \gamma'_1 \move[\delta_1] \gamma'_2
    \quad\text{and}\quad
    \gamma = d\con\gamma_1 \move[d] \gamma_1,
  $$
  and the result follows from applying the induction hypothesis to $\N{\gamma_1} \ni \gamma'_1 \move[\delta_1] \gamma'_2$.

  If $\gamma = s\con\gamma_1$ for some $s \in S$ and $\gamma'$ also begins with a state, it must be the same state $s$.
  Then the first transition $d$ in $\delta$ must be valid from $s$, giving
  $$
    \gamma' = s\con\gamma'_1 \move[d] \sigma\con\gamma'_1 \move[\delta_1] \gamma'_2
    \quad\text{and}\quad
    \gamma = s\con\gamma_1 \move[d] \sigma\con\gamma_1,
  $$
  so we can succeed by applying the induction hypothesis to $\N{\sigma\con\gamma_1} \ni \sigma\con\gamma'_1 \move[\delta_1] \gamma'_2$.

  Finally, we might have $\gamma = s\con\gamma_1$ and $\gamma' = \eta\con\gamma'_1$ with $\gamma'_1 \in \N{\gamma_1}$ and $s \move[\eta] \epsilon$.
  Then $\delta = \eta\con\delta_1$ (it cannot be shorter than $\eta$, because guiding transitions can only be executed en bloc),
  resulting in
  $$
    \gamma' = \eta\con\gamma'_1 \move[\eta] \gamma'_1 \move[\delta_1] \gamma'_2
    \quad\text{and}\quad
    \gamma = s\con\gamma_1 \move[\eta] \gamma_1.
  $$
  We finish by invoking the inductive hypothesis on $\N{\gamma_1} \ni \gamma'_1 \move[\delta_1] \gamma'_2$.
\end{proof}

\begin{corollary}
  \label{lem:shift-parsing}
  $\S[a]{\N\Psi} \subseteq \N{\S[a]\Psi}$ for every guided relation $\Psi \subseteq G^* {\times} D^*$.
\end{corollary}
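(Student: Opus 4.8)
The plan is to read this off from Lemma~\ref{lem:track-back}, applied to transition sequences of length one. Unlike Lemma~\ref{lem:shift-recognize} in the recognition setting, no call-reduce-closedness hypothesis on $\Psi$ is needed here: the guiding-transition machinery, together with the tracking-back lemma, has already absorbed that role, which is precisely why the statement is now a corollary rather than a separate inductive lemma.

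First I would unfold the left-hand side. An arbitrary pair $(\gamma'_2,\, d \con \rev\delta) \in \S[a]{\N\Psi}$ comes, by the definition of $\S[a]$ on relations, from some $(\gamma',\rev\delta) \in \N\Psi$ and some shift $d \in D_a$ with $\gamma' \move[d] \gamma'_2$; in particular $\gamma'$ begins with a state, so the shift is applicable. Since every null-closure rewriting step leaves the $D^*$-component untouched, $\N\Psi = \{(\gamma',\rev\delta) :\ (\gamma,\rev\delta) \in \Psi,\ \gamma' \in \N\gamma\}$, so I may fix a witness $(\gamma,\rev\delta) \in \Psi$ with $\gamma' \in \N\gamma$. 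Now I would invoke Lemma~\ref{lem:track-back} on $\gamma' \in \N\gamma$ and the one-step transition sequence $d \in D_a \subseteq D^*$ with $\gamma' \move[d] \gamma'_2$: it produces some $\gamma_2$ with $\gamma \move[d] \gamma_2$ and $\gamma'_2 \in \N{\gamma_2}$. Re-folding the definitions, $(\gamma,\rev\delta) \in \Psi$ and $\gamma \move[d] \gamma_2$ with $d \in D_a$ give $(\gamma_2,\, d \con \rev\delta) \in \S[a]\Psi$, and then $\gamma'_2 \in \N{\gamma_2}$ yields $(\gamma'_2,\, d \con \rev\delta) \in \N{\S[a]\Psi}$, which is exactly what we want.

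I do not expect a genuine obstacle: once Lemma~\ref{lem:track-back} is in hand, the argument is pure bookkeeping. The only two points deserving explicit mention are the passages between $\N\Psi$ viewed as a closure and as a union of singleton null closures, and the analogous step for $\N{\S[a]\Psi}$ --- both immediate from the observation that null closure never modifies the transition component.
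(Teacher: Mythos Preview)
Your proposal is correct and is exactly the argument the paper intends: the corollary is stated immediately after Lemma~\ref{lem:track-back} with no proof, and your unfolding---pick a witness $(\gamma,\rev\delta)\in\Psi$ with $\gamma'\in\N\gamma$, apply tracking-back to the single shift $d$, then refold---is precisely how it is meant to be read off. Your side remarks about why no call-reduce-closedness hypothesis is needed and about null closure leaving the transition component untouched are accurate and helpful.
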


\begin{corollary}
  \label{lem:closure-parsing}
  $\C{\N\Psi} \subseteq \N{\C\Psi}$ for every guided relation $\Psi \subseteq G^* {\times} D^*$.
\end{corollary}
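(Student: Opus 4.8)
The plan is to obtain this directly from the track-back lemma (Lemma~\ref{lem:track-back}), after unwinding how null closure acts on a relation. First I would record the elementary observation that the null-closure operation touches only the first component: since the defining implication for $\N\Psi$ freezes $\rev\delta$ and rewrites the $G^*$-part by exactly the rule defining the null closure of a guided language, one has
\[
  \N\Psi = \{(\gamma,\rev\delta):\ \gamma \in \N{\gamma_0},\ (\gamma_0,\rev\delta)\in\Psi\}.
\]
This is routine: the right-hand side contains $\Psi$ (as $\gamma_0\in\N{\gamma_0}$), it is null-closed because each singleton null closure $\N{\gamma_0}$ is, and it is minimal because any null-closed relation extending $\Psi$ must, by repeated application of the null-closure implication, contain all of these pairs.

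With that in hand, take an arbitrary element of $\C{\N\Psi}$; by definition of the call-reduce closure it has the form $(\gamma',\rev\eta\con\rev\delta)$ with $(\gamma,\rev\delta)\in\N\Psi$ and $\gamma \move[\eta\in\Eps]\gamma'$. Using the observation, choose $(\gamma_0,\rev\delta)\in\Psi$ with $\gamma\in\N{\gamma_0}$. Now apply Lemma~\ref{lem:track-back} to $\gamma_0$, to $\gamma\in\N{\gamma_0}$, and to the run $\gamma\move[\eta]\gamma'$ (a legal instance of $\delta\in D^*$ in that lemma, since $\Eps\subseteq D^*$): it yields a configuration $\gamma_2$ with $\gamma_0\move[\eta]\gamma_2$ and $\gamma'\in\N{\gamma_2}$. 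Since $(\gamma_0,\rev\delta)\in\Psi$ and $\eta\in\Eps$, the pair $(\gamma_2,\rev\eta\con\rev\delta)$ lies in $\C\Psi$, whence $(\gamma',\rev\eta\con\rev\delta)\in\N{\C\Psi}$, as required.

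I expect no genuine obstacle here: this is the relational counterpart of Lemma~\ref{lem:closure-recognize}, but whereas the recognition version had to invoke the canonical decomposition explicitly, all of that work is already packaged inside Lemma~\ref{lem:track-back}. The only point demanding a little care is the bookkeeping with the reversed transition histories—verifying that the word $\eta$ carried along by the call-reduce closure is precisely the word fed to the track-back lemma, so that $\rev\eta\con\rev\delta$ matches verbatim on both sides—and noting that the en-bloc semantics of guiding transitions, which is exactly what Lemma~\ref{lem:track-back} was crafted to handle, causes no mismatch.
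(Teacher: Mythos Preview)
Your proof is correct and follows exactly the route the paper intends: in the paper this result is stated as an immediate corollary of Lemma~\ref{lem:track-back} with no proof given, and you have simply written out the details of that deduction. The preliminary characterization of $\N\Psi$ as acting only on the first component is the natural way to bridge the language-level Lemma~\ref{lem:track-back} to the relational setting, and the rest is a direct application.
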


\begin{corollary}
  Null closure preserves being call-reduce-closed:
  $$
    \C\Psi = \Psi \implies \C{\N\Psi} = \N\Psi.
  $$
\end{corollary}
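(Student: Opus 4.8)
The plan is to obtain this as a short consequence of Corollary~\ref{lem:closure-parsing} together with the elementary fact that the call-reduce closure is \emph{extensive}: $\Psi' \subseteq \C{\Psi'}$ for every guided relation $\Psi'$. The latter holds because the empty sequence of non-shift transitions, $\epsilon \in \Eps$, induces the identity reachability relation $\gamma \move[\epsilon] \gamma$; hence for any $(\gamma,\rev\delta) \in \Psi'$ we have $(\gamma, \rev\epsilon \con \rev\delta) = (\gamma,\rev\delta) \in \C{\Psi'}$, by the definition of $\C{\cdot}$.

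Now suppose $\C\Psi = \Psi$. Applying extensivity with $\Psi' = \N\Psi$ gives the inclusion $\N\Psi \subseteq \C{\N\Psi}$. For the reverse inclusion, Corollary~\ref{lem:closure-parsing} yields $\C{\N\Psi} \subseteq \N{\C\Psi}$, and since $\C\Psi = \Psi$ the right-hand side is just $\N\Psi$. Chaining the two inclusions gives $\C{\N\Psi} = \N\Psi$, as required.

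There is really no obstacle here: the statement is a two-line consequence of an already-established corollary, and the only point worth making explicit is the extensivity of $\C{\cdot}$, which is immediate from $\epsilon$ being an element of $\Eps$ inducing the identity relation. If one preferred not to invoke Corollary~\ref{lem:closure-parsing}, the inclusion $\C{\N\Psi} \subseteq \N\Psi$ could instead be proved directly by replaying the "track-back" argument of Lemma~\ref{lem:track-back} on a witnessing derivation $\gamma \move[\eta \in \Eps] \gamma'$ with $\gamma \in \N\Psi$; but reusing the corollary is cleaner and avoids repetition.
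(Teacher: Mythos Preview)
Your proof is correct and follows essentially the same approach as the paper: both combine Corollary~\ref{lem:closure-parsing} with the extensivity of $\C{\cdot}$ (coming from $\epsilon \in \Eps$). Your chain $\N\Psi \subseteq \C{\N\Psi} \subseteq \N{\C\Psi} = \N\Psi$ is in fact slightly more direct than the paper's, which first substitutes $\Psi = \C\Psi$ and then has to collapse $\C{\C\Psi}$ back to $\C\Psi$.
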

\begin{proof}
  $\C{\N\Psi} = \C{\N{\C\Psi}} \subseteq \N{\C{\C\Psi}} = \N{\C\Psi} = \N\Psi \subseteq \C{\N\Psi}$.
\end{proof}

\begin{proposition}
  \label{prop:main-parse}
  $\N{\Phi_{\epsilon}} = \N{\C{\sstart\con\sstop}}$ and $\N{\Phi_{\alpha a}} = \N{\C{\S[a]{\N{\Phi_{\alpha}}}}}$.
\end{proposition}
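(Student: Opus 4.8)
The plan is to mirror the proof of Proposition~\ref{prop:main-recognize} verbatim, substituting the guided-relation corollaries for their recognition counterparts. The first equality is immediate: by definition $\Phi_{\epsilon} = \C{\sstart\con\sstop}$ (the empty input admits only non-shift transitions from the initial configuration), so applying $\N{-}$ to both sides gives $\N{\Phi_{\epsilon}} = \N{\C{\sstart\con\sstop}}$.

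For the second equality I would start from the recursion $\Phi_{\alpha a} = \C{\S[a]{\Phi_{\alpha}}}$ and chase the following chain of inclusions and equalities:
\[
  \N{\C{\S[a]{\N{\Phi_{\alpha}}}}}
  \subseteq \N{\C{\N{\S[a]{\Phi_{\alpha}}}}}
  \subseteq \N{\N{\C{\S[a]{\Phi_{\alpha}}}}}
  = \N{\N{\Phi_{\alpha a}}}
  = \N{\Phi_{\alpha a}}
  = \N{\C{\S[a]{\Phi_{\alpha}}}}
  \subseteq \N{\C{\S[a]{\N{\Phi_{\alpha}}}}}.
\]
The first inclusion is Corollary~\ref{lem:shift-parsing} ($\S[a]{\N{\Phi_{\alpha}}} \subseteq \N{\S[a]{\Phi_{\alpha}}}$) lifted through the monotone operators $\C{-}$ and $\N{-}$; the second is Corollary~\ref{lem:closure-parsing} ($\C{\N{\S[a]{\Phi_{\alpha}}}} \subseteq \N{\C{\S[a]{\Phi_{\alpha}}}}$) lifted through $\N{-}$; the first and third equalities are the recursion $\Phi_{\alpha a} = \C{\S[a]{\Phi_{\alpha}}}$; the middle equality is idempotency of null closure; and the last inclusion is monotonicity of $\N{-}\circ\C{-}\circ\S[a]{-}$ applied to $\Phi_{\alpha} \subseteq \N{\Phi_{\alpha}}$. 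Since the chain begins and ends with the same set, every containment in it is an equality, and in particular $\N{\Phi_{\alpha a}} = \N{\C{\S[a]{\N{\Phi_{\alpha}}}}}$.

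Beyond this bookkeeping, the only facts to record are that $\N{-}$, $\C{-}$ and $\S[a]{-}$ are monotone (each is, by inspection of its definition, a union over its argument) and that null closure is idempotent on guided relations (immediate from its definition as the least null-closed superset). Unlike the recognition case, no call-reduce-closedness hypothesis on $\Phi_{\alpha}$ is needed, since Corollaries~\ref{lem:shift-parsing} and~\ref{lem:closure-parsing} already hold for arbitrary guided relations; this is exactly where the guiding transitions earn their keep, as a nullable state sitting "in the middle" of a configuration is replaced by — rather than silently dropped in favour of — its nulling transition block, so Lemma~\ref{lem:track-back} relates the two sides without any closure assumption. I therefore expect no genuine obstacle here; the only risk is clerical, namely making sure the direction of the containments supplied by the two corollaries lines up with the way they are consumed in the chain above.
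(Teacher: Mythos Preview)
Your proposal is correct and is essentially identical to the paper's own proof, which simply says to redo the proof of Proposition~\ref{prop:main-recognize} with Corollaries~\ref{lem:shift-parsing} and~\ref{lem:closure-parsing} in place of Lemmas~\ref{lem:shift-recognize} and~\ref{lem:closure-recognize}; you have written out exactly that chain of inclusions. Your added remark that no call-reduce-closedness hypothesis is needed here (because Corollary~\ref{lem:shift-parsing}, unlike Lemma~\ref{lem:shift-recognize}, applies to arbitrary guided relations) is a correct and worthwhile observation that the paper leaves implicit.
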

\begin{proof}
  Redo the proof of Proposition~\ref{prop:main-recognize},
  using Corollaries~\ref{lem:shift-parsing} and~\ref{lem:closure-parsing} instead of Lemmas~\ref{lem:shift-recognize} and~\ref{lem:closure-recognize}.
\end{proof}

To proceed to phase computation, we define two variants of derivatives for guided relations:
the first one, analogous to Definition~\ref{df:relderiv}, separates the guiding transitions,
while the second, weaker, concatenates them to the remaining relation:

\begin{definition}
  A \bld{derivative of degree $k$} of a guided relation $\Psi \subseteq G^* {\times} D^*$ by a symbol $s \in S$
  is any collection of pairs $\Psi\deriv{s} = \{(\Lambda_i,\Psi_i)\}_{i = 1 \ldots k}$
  with $ \Lambda_i \subseteq D^* {\times} D^*$ and $\Psi_i \subseteq G^* {\times} D^*$,
  such that for each $\eta \in \Eps,\ \gamma \in G^*$ and $\delta \in D^*$ we have
  $$
    (\eta \con s \con \gamma, \rev\delta) \in \Psi  \iff  (\eta \con s \con \gamma, \rev\delta) \in \bigcup_i \Lambda_i \con (s,\epsilon) \con \Psi_i.
  $$
  A guided relation is \bld{$(m,k)$-regular} if it belongs to some family of size $m$, closed under derivatives of degree $k$.
  It is \bld{$k$-regular} if it is $(m,k)$-regular for some $m$, and \bld{regular} if it is $k$-regular for some $k$.
\end{definition}

\begin{definition}
  A \bld{flat derivative of degree $k$} of a guided relation $\Psi \subseteq G^* {\times} D^*$ by a symbol $s \in S$
  is any collection of guided relations $\Psi\deriv[flat]{s} = \{\Psi_i\}_{i = 1 \ldots k}$
  with $\Psi_i \subseteq G^* {\times} D^*$,
  such that for each $\gamma \in G^*, \delta' \in D^*$:
  $$
    (s \con \gamma, \rev{\delta'}) \in \bigcup_i (s,\epsilon) \con \Psi_i
    \iff
    \text{ for some } \rev\eta \con \rev\delta = \rev{\delta'} \text{ we have }
    (\eta \con s \con \gamma, \rev\delta) \in \Psi.
  $$
\end{definition}

\begin{algorithm}\label{alg:phase-parse-deep}
  Calculating $\N{\C{\S[a]\Psi}}$ for call-reduce-closed, proper $\Psi$
  \begin{lstlisting}
  $\Psi\out \assign \varnothing$
  foreach $d = \shift{s,a,t} \in D_a$:
    foreach $\Psi_0 \in \Psi\deriv[flat]{s}$:
      $\Psi\out \assign \Psi\out \ \cup\  \N{\C{t}} \con d \con \Psi_0$
      if $\nulling{t} \neq \varnothing$, foreach $s' \in S$:
        foreach $\Psi_1 \in (\nulling{t} \con d \con \Psi_0)\deriv[flat]{s'}$:
          $\Psi\out \assign \Psi\out \ \cup\  \N{\C{s'}} \con \Psi_1$
  return $\Psi\out$
  \end{lstlisting}
\end{algorithm}

When transitions are forgotten, both above definitions degenerate to Brzozowski derivatives.
The distinction is necessary for clean separation of concerns in our algorithms,
and for their complexity analysis.
In particular, only the weaker notion of derivatives is needed to calculate null closures of parsing relations,
as shown in Algorithm~\ref{alg:phase-parse-deep}.
We only prove its correctness for guided relations which are \bld{proper},
i.e., can be obtained as null closures of non-guided relations.

\begin{proposition}
  \label{lem:phase-parse}
  For a call-reduce-closed, proper $\Psi = \N\Phi$
  in which every configuration has the state $\sstop$ at its bottom,
  Algorithm~\ref{alg:phase-parse-deep} correctly returns $\Psi\out = \N{\C{\S[a]\Psi}}$.
\end{proposition}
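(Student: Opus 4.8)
The plan is to mirror the structure of the naïve-to-deep transformation from the recognition section, carefully tracking transition sequences through each step. I would first unfold the definition of $\C{\S[a]\Psi}$ as a union over shifts $d = \shift{s,a,t} \in D_a$: a pair $(\sigma', \rev\eta \con d \con \rev\delta)$ lies in $\C{\S[a]\Psi}$ exactly when there is a guided configuration $\gamma \in G^*$ with $(\gamma, \rev\delta) \in \Psi$, the configuration $\gamma$ is of the form $s \con \gamma''$ after possibly executing some guiding transitions $\eta_0$ from the front (so $\eta \con s \con \gamma'' $ is what $\Psi$ actually contains, with $\eta$ absorbed and $\delta$ recording the rest), then $s \con \gamma'' \move[d] t \con \gamma''$, and finally $t \con \gamma'' \move[\eta_1 \in \Eps] \sigma'$. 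Using that $\Psi$ is call-reduce-closed and proper (hence null-closed), I would argue that the top state $s$, once exposed, can be treated in isolation: the contents below it, encapsulated as the relations $\Psi_i$ returned by the flat derivative $\Psi\deriv[flat]{s}$, already carry the correct transition suffixes (this is exactly what the flat-derivative equation guarantees, folding the guiding prefix $\eta$ into the recorded transitions). This yields the first line of the output, $\N{\C{t}} \con d \con \Psi_i$, once we also apply null closure to $\C{t}$; correctness of adding the outer $\N{\cdot}$ follows from Corollaries~\ref{lem:shift-parsing} and~\ref{lem:closure-parsing} together with idempotency and monotonicity of null closure, precisely as in Proposition~\ref{prop:main-parse}.

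Next I would handle the case where $t$ is nullable, i.e. $\nulling{t} \neq \varnothing$. Here the canonical decomposition lemma is the key tool: any viable $\eta_1 \in \Eps$ applied to $t \con \gamma''$ either leaves a nonempty configuration with $t$'s contribution still on top (covered by the first output line via $\N{\C{t}}$), or first fully nulls $t$ via some $\eta_t \in \nulling{t}$ and then continues from $\gamma''$. In the latter subcase, the relevant top state is now the first state $s'$ of $\gamma''$ — but $\gamma''$ may itself begin with guiding transitions, and $s'$ may have entered "in the middle" of a call-reduce closure and hence not be call-reduce-closed. This is why line~6 takes a flat derivative not of $\Psi_0$ but of $\nulling{t} \con d \con \Psi_0$: prepending $\nulling{t}$ re-exposes exactly the guiding transitions that must be consumed before $s'$ becomes accessible, and the flat derivative then peels off $s'$ while folding those transitions (plus $d$ and the $\Delta_0$-part absorbed in $\Psi_0$) into the recorded suffix. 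Applying $\N{\C{s'}}$ on top gives the second output line. I would then check that two levels of nulling suffice — that is, that a third nullable state below $s'$ cannot arise — because once we commit to nulling $t$ and then $s'$, properness of $\Psi$ (null-closedness) means the configuration obtained by skipping both is already present with its own exposed top state, so it is caught by an earlier iteration of the outer loop; this is the analogue of the collapse from Algorithm~\ref{alg:phase-recognize-deriv} to Algorithm~\ref{alg:phase-recognize-deep}.

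For the reverse inclusion ($\Psi\out \subseteq \N{\C{\S[a]\Psi}}$) I would simply verify that each term added to $\Psi\out$ is justified: unfold the flat-derivative characterization to exhibit, for each contributed pair, an honest witness $(\gamma, \rev\delta) \in \Psi$ and a sequence $\gamma \move[d] \cdot \move[\eta \in \Eps] \sigma'$ realizing it, then absorb the outer $\N{\cdot}$ using monotonicity. The main obstacle, I expect, is the bookkeeping around the guiding transitions — specifically, proving that it is legitimate to feed $\nulling{t} \con d \con \Psi_0$ (rather than $\Psi_0$ alone) into the second flat derivative, and that doing so captures \emph{all} ways $s'$ can surface after nulling $t$, including those where the relevant $\eta \in \Eps$ in the flat-derivative equation straddles the boundary between $\nulling{t}$, the transitions in $d$, and whatever guiding prefix sits atop $\Psi_0$. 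Making "guiding transitions can only be executed en bloc" interact correctly with the flat derivative's splitting $\rev{\delta'} = \rev\eta \con \rev\delta$ is the delicate point; everything else is a faithful, if tedious, re-run of the recognition-side argument with transition labels carried along.
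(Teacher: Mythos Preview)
Your overall architecture (two inclusions, canonical decomposition, invoking null-closedness) matches the paper, and your sketch of the reverse inclusion $\Psi\out \subseteq \N{\C{\S[a]\Psi}}$ is essentially what the paper does. The forward direction, however, has a genuine gap in the mechanism you describe.

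You write that ``prepending $\nulling{t}$ re-exposes exactly the guiding transitions that must be consumed before $s'$ becomes accessible.'' This is not what happens: concatenating $\nulling{t}\con d$ onto $\Psi_0$ only touches the \emph{transition-record} component; it leaves the configuration component of every pair in $\Psi_0$ unchanged. So prepending $\nulling{t}$ cannot manufacture the guiding transitions that sit above $s'$ in the configuration. The real question is: when the canonical decomposition gives $\sigma\red = t\con\sigma_*$ with $\sigma_*$ a (possibly long) sequence of states all to be nulled before reaching $s' = s\nred$, where do the guiding transitions $\eta_*$ nulling $\sigma_*$ come from, so that the second flat derivative can consume them? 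The paper's answer uses properness in a way you do not: starting from $(s\con\sigma_*\con s'\con\sigma'',\rev\delta)\in\Phi$, null-closedness of $\Psi=\N\Phi$ yields $(s\con\eta_*\con s'\con\sigma'',\rev\delta)\in\Psi$ \emph{before} the first flat derivative is taken. Only then does $\Psi_0$ contain $(\eta_*\con s'\con\sigma'',\rev\delta)$, making the second flat derivative by $s'$ succeed. Your ``two levels suffice'' paragraph misses this: the issue is not ``a third nullable state below $s'$,'' and nothing is ``caught by an earlier iteration of the outer loop''; rather, arbitrarily many nulled states \emph{between} $t$ and $s'$ are handled in a single inner iteration precisely because they were pre-converted to guiding transitions inside $\Psi$.

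Two smaller points. First, the paper establishes the forward inclusion against the \emph{unguided} $\Phi$ (showing $\C{\S[a]\Phi}\subseteq\Psi\out$), and only then passes to $\Psi=\N\Phi$ via Corollaries~\ref{lem:shift-parsing} and~\ref{lem:closure-parsing}; working directly with $\Psi$ as you propose is not wrong, but you then need call-reduce-closedness to strip leading guiding transitions before a shift can fire, and you never invoke it. Second, the paper separately checks that $\Psi\out$ is null-closed (it is assembled from null-closed pieces by closure-preserving operations); without this, the step from $\C{\S[a]\Phi}\subseteq\Psi\out$ to $\N{\C{\S[a]\Psi}}\subseteq\Psi\out$ does not go through.
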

\begin{proof}
  \bld{Claim 1: $\C{\S[a]\Phi} \subseteq \Psi\out$.}
  Every pair in $\C{\S[a]\Phi}$ must come from
  $$
    (s\con\sigma, \rev\delta) \in \Phi \subseteq \Psi, \quad
    s\con\sigma \move[d = \shift{s,a,t}] t\con\sigma \move[\eta] \sigma'
    \quad\implies\quad
    (\sigma', \rev\eta \con d \con \rev\delta) \in \C{\S[a]\Phi}.
  $$

  Let $\eta = \eta\red\con\eta\nred$ be the canonical decomposition of $\eta$, witnessed by $\sigma\red$, $s\nred$ and $\sigma\nred$.
  If $\eta\red = \epsilon$, we have $s\nred = t$ and $\sigma' = \sigma\nred\con\sigma$,
  from where
  \begin{alignat*}{3}
    (\sigma, &&\rev\delta) &\ \in \Psi_0, &\text{(for some } \Psi_0 \in \Psi\deriv[flat]{s} \text{)} \\
    (\sigma\nred, &&\rev\eta\nred) &\ \in \C{t} \subseteq \N{\C{t}}, \\
    (\sigma',&&\ \rev\eta \con d \con \rev\delta) &\ \in \N{\C{t}} \con d \con \Psi_0 \subseteq \Psi\out.
  \end{alignat*}.

  Otherwise, for some $\sigma''$, we have $t\con\sigma = \sigma\red \con s\nred \con \sigma''$ and $\sigma' = \sigma\nred\con\sigma''$
  (the state $s\nred$ surely exists, as every configuration has non-nullable $\sstop$ at its bottom).
  Because $\sigma\red$ begins with $t$, we might split $\sigma\red \move[\eta\red] \epsilon$ as $\sigma\red = t \con \sigma_* \move[\eta_t] \sigma_* \move[\eta_*] \epsilon$,
    obtaining
  \begin{alignat*}{3}
    (s \con \sigma_* \con s\nred \con \sigma'', &&\rev\delta) &\ \in \Phi, \\
    (s \con \eta_* \con s\nred \con \sigma'', &&\rev\delta) &\ \in \N\Phi = \Psi, \\
    (\eta_* \con s\nred \con \sigma'', &&\rev\delta) &\ \in \Psi_0, &\text{(for some } \Psi_0 \in \Psi\deriv[flat]{s} \text{)} \\
    (\eta_* \con s\nred \con \sigma'', &&\rev{\eta_t} \con d \con \rev\delta) &\ \in \nulling{t} \con d \con \Psi_0, \\
    (\sigma'', &&\ \rev{\eta_*} \con \rev{\eta_t} \con d \con \rev\delta) &\ \in \Psi_1, &\text{(for some } \Psi_1 \in (\nulling{t} \con d \con \Psi_0)\deriv[flat]{s\nred} \text{)} \\
    (\sigma\nred, &&\rev{\eta\nred}) &\ \in \C{s\nred} \subseteq \N{\C{s\nred}}, \\
    (\sigma', &&\rev\eta \con d \con \rev\delta) &\ \in \N{\C{s\nred}} \con \Psi_1 \subseteq \Psi\out.
  \end{alignat*}

  \bld{Claim 2: $\Psi\out$ is null-closed.}
  This is because it is obtained from null-closed pieces
  ($\Psi = \N\Phi$ and $\N{\C{s}}$ are by definition, while $\nulling{t}$ and $d$ consist only of transitions)
  by operations which preserve being null-closed (concatenation, union, and derivatives).

  \bld{Claim 3: $\Psi\out \subseteq \N{\C{\S[a]\Psi}}$.}
  Consider any pair added to $U$.
  If it was added in line~4, it must be of the form
  $$
    (\gamma_1 \con \gamma_2,\ \rev{\eta_1} \con d \con \rev{\delta_2})
  $$
  for $(\gamma_1, \rev{\eta_1}) \in \N{\C{t}}$, $d = \shift{s,a,t}$ and $(\gamma_2, \rev{\delta_2}) \in \Psi_0$.
  We now have:
  \begin{alignat*}{3}
    (\eta \con s \con \gamma_2, && \rev\delta) &\ \in \Psi, &(\text{for some } \rev\eta \con \rev\delta = \rev{\delta_2})\\
    (s \con \gamma_2, && \rev\eta \con \rev\delta) &\ \in \Psi, \\
    (t \con \gamma_2, && d \con \rev\eta \con \rev\delta) &\ \in \S[a]\Psi, \\
    (\gamma_1 \con \gamma_2, &&\ \rev{\eta_1} \con d \con \rev{\delta_2}) &\ \in \N{\C{\S[a]\Psi}}.
  \end{alignat*}

  Pairs added in line~7 must be of the form
  $$
    (\gamma_1 \con \gamma_2, \rev{\eta_1} \con \rev{\delta_2})
  $$
  for $(\gamma_1, \rev{\eta_1}) \in \N{\C{s'}}$ and $(\gamma_2, \rev{\delta_2}) \in \Psi_1$.
  In this case we have:
  \begin{alignat*}{3}
    (\eta_3 \con s' \con \gamma_2, && \rev{\delta_3}) &\ \in \nulling{t} \con d \con \Psi_0, &(\text{for some } \rev{\eta_3} \con \rev{\delta_3} = \rev{\delta_2})\\
    (\eta_3 \con s' \con \gamma_2, && \rev{\delta_4}) &\ \in \Psi_0, &(\text{for some } \rev\eta \in \nulling{t},\ \rev\eta \con d \con \rev{\delta_4} = \rev{\delta_3})\\
    (\eta_5 \con s \con \eta_3 \con s' \con \gamma_2, && \rev{\delta_5}) &\ \in \Psi, &(\text{for some } \rev{\eta_5} \con \rev{\delta_5} = \rev{\delta_4})\\
    (s \con \eta_3 \con s' \con \gamma_2, && \rev{\eta_5} \con \rev{\delta_5}) &\ \in \Psi, \\
    (t \con \eta_3 \con s' \con \gamma_2, && d \con \rev{\delta_4}) &\ \in \S[a]\Psi, \\
    (s' \con \gamma_2, &&\ \rev{\eta_3} \con \rev\eta \con d \con \rev{\delta_4}) &\ \in \C{\S[a]\Psi}, \\
    (\gamma_1 \con \gamma_2, && \rev{\eta_1} \con \rev{\delta_2}) &\ \in \N{\C{\S[a]\Psi}}.
  \end{alignat*}

  \bld{Finale:}
  Combine the above claims with Corollaries~\ref{lem:shift-parsing} and~\ref{lem:closure-parsing} to obtain
  $$
    \N{\C{\S[a]\Psi}} = \N{\C{\S[a]{\N\Phi}}} \subseteq \N{\N{\C{\S[a]\Phi}}} \subseteq \N{\N\Psi\out} = \Psi\out \subseteq \N{\C{\S[a]\Psi}}.
  $$
\end{proof}

\begin{algorithm}\label{alg:phase-parse-final}
  Calculating $\N{\C{\S[a]\Psi}}$ for call-reduce-closed, proper $\Psi$ (final)
  \begin{lstlisting}
  $\Psi\out, \Psi\out[aux] \assign \varnothing$
  foreach $d = \shift{s,a,t} \in D_a$:
    foreach $\Psi_0 \in \Psi\deriv[flat]{s}$:
      $\Psi\out \assign \Psi\out \ \cup\  \N{\C{t}} \con d \con \Psi_0$
      if $\nulling{t} \neq \varnothing$:
        $\Psi\out[aux] \assign \Psi\out[aux] \ \cup\ \nulling{t} \con d \con \Psi_0$
  foreach $s' \in S$:
    foreach $\Psi_1 \in \Psi\out[aux]\deriv[flat]{s'}$:
      $\Psi\out \assign \Psi\out \ \cup\  \N{\C{s'}} \con \Psi_1$
  return $\Psi\out$
  \end{lstlisting}
\end{algorithm}

One last observation we apply is that derivatives distribute over union,
and thus the inner loops in lines~5~and~6 can be pulled out to the top level,
resulting in Algorithm~\ref{alg:phase-parse-final}.

\begin{algorithm}\label{alg:main-parse}
  Parsing $\alpha \in A^*$
  \begin{lstlisting}
  $\Psi \assign \N{\C\sstart} \con \N{\C\sstop}$
  foreach $a \in \alpha$ (in order):
    $\Psi \assign \N{\C{\S[a]\Psi}}$
  $\Delta\out \assign \varnothing$
  foreach $\Psi' \in \Psi\deriv[flat]{\sstop}$:
    $\Delta\out \assign \Delta\out \ \cup\  \{ \delta\con\eta :\  (\eta, \rev\delta) \in \Psi', \eta \in \Eps \}$
  return $\Delta\out$
  \end{lstlisting}
\end{algorithm}

Extracting the language of accepting transition sequences from the parsing relation for the whole input is also a bit more complicated,
because any remaining guiding transitions need to be executed.
Thus we replace our main Algorithm~\ref{alg:main-recognize} with Algorithm~\ref{alg:main-parse}.
Overall, we realize generalized context-free parsing,
using any representation $\type\Psi$ of guided relations supporting API~\ref{api:relations}.

\begin{api}\label{api:relations}
  Operations on relations required for full parsing.
  \begin{itemize}
    \item $\texttt{root}() \in \type\Psi$ \tab $\epsilon$,
    \item $\texttt{prepend}(\N{\C{s}}; \Psi \in \type\Psi) \in \type\Psi$ \tab $\N{\C{s}} \con \Psi$,
    \item $\texttt{prepend}(\Delta \subseteq D^*; \Psi \in \type\Psi) \in \type\Psi$ \tab $\Delta \con \Psi$,
    \item $\texttt{union}(\Psi_1, \Psi_2 \in \type\Psi) \in \type\Psi$ \tab $\Psi_1 \cup \Psi_2$,
    \item $\texttt{epsilon}(\Psi \in \type\Psi) \subseteq D^*$ \tab $\{ \delta\con\eta: (\eta,\rev\delta)\in\Psi \}$,
    \item $\texttt{derivative}(\Psi \in \type\Psi; s \in S) \in \type\Psi^{\leqslant k}$ \tab $\Psi\deriv[flat]{s}$.
  \end{itemize}
\end{api}

\section{Representations}
\label{sec:representations}

We now make concrete the key observation underlying this paper:
that although the sets of reachable parser configurations are sometimes infinite,
their regularity makes it possible to manipulate them efficiently.

\subsection{Parsing languages and relations}
\label{sec:dag}

\begin{api}\label{api:atomic-languages}
  Required operations on atomic languages.
  \begin{itemize}
    \item $\texttt{atomic}(s \in S) \in \type\Upsilon$ \tab $\N{\C{s}}$,
    \item $\texttt{epsilon}(\Upsilon \in \type\Upsilon) \in \{\texttt{true},\texttt{false}\}$ \tab whether $\epsilon \in \Upsilon$,
    \item $\texttt{derivative}(\Upsilon \in \type\Upsilon,  s \in S) \in \type\Upsilon^{\leqslant k}$ \tab $\Upsilon\deriv{s}$.
  \end{itemize}
\end{api}

We begin top-down, by presenting a data structure implementing API~\ref{api:languages}.
To make it independent of the grammar, let us parameterize it with a data type $\type\Upsilon$,
representing a finite family of languages over $S$, realizing API~\ref{api:atomic-languages}.
We allow the first argument of \texttt{prepend} to be any element of $\type\Upsilon$.

Denote by $\efree\Sigma$ the $\epsilon$-free part of $\Sigma$ (i.e., $\Sigma - \{\epsilon\}$).
Our implementation is based on the following observation:

\begin{proposition}
  \label{prop:language-shape}
  Let $\Sigma_1, \Sigma_2, \ldots, \Sigma_n$ be a sequence of languages over $S$,
  in which each $\Sigma_i$ is obtained by applying one of the operations of API~\ref{api:languages} to some $\Sigma_j$ with $j < i$.
  Then there exists a finite collection of pairs $\{ (\Upsilon, i_j) \}_j$ with $\Upsilon_j \in \type\Upsilon$ and $1 \leqslant i_j < n$, such that
  \[
    \efree{\Sigma_n} = \bigcup_j \efree{\Upsilon_j} \con \Sigma_{i_j}.
  \]
\end{proposition}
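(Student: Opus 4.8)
The plan is to induct on $n$, establishing the decomposition for $\efree{\Sigma_n}$ by a case analysis on which API operation produced $\Sigma_n$ from an earlier $\Sigma_j$. The base cases and the \texttt{root} case are immediate: $\efree{\texttt{root}()} = \varnothing$, so the empty collection works. For the other four operations I would carry the induction hypothesis that every $\Sigma_i$ with $i < n$ already admits such a decomposition $\efree{\Sigma_i} = \bigcup_{j} \efree{\Upsilon^{(i)}_j} \con \Sigma_{i^{(i)}_j}$, and show how to manufacture the collection for $\Sigma_n$.

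First I would handle the easy cases. If $\Sigma_n = \texttt{union}(\Sigma_{j_1}, \Sigma_{j_2})$, then $\efree{\Sigma_n} = \efree{\Sigma_{j_1}} \cup \efree{\Sigma_{j_2}}$, and concatenating the two inductive collections gives the result. If $\Sigma_n = \texttt{prepend}(\N{\C s}; \Sigma_j) = \N{\C s} \con \Sigma_j$, I would split according to whether the prepended factor contributes $\epsilon$ or not: $\efree{\N{\C s} \con \Sigma_j} = \efree{\N{\C s}} \con \Sigma_j \ \cup\ (\text{if } \epsilon \in \N{\C s})\ \efree{\Sigma_j}$. The first term is directly of the required form with $\Upsilon = \N{\C s} = \texttt{atomic}(s) \in \type\Upsilon$ and index $j$; the second term, when present, is already decomposed by the induction hypothesis. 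The \texttt{epsilon} operation does not produce a language, so there is nothing to check there.

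The main obstacle is the \texttt{derivative} case: $\Sigma_n \in \Sigma_j\deriv{s}$ for some $j < n$. Here I need to commute "take derivative by $s$" past the inductive decomposition $\efree{\Sigma_j} = \bigcup_\ell \efree{\Upsilon_\ell} \con \Sigma_{i_\ell}$. The key subclaim is that a derivative of a finite union of concatenations $\efree{\Upsilon_\ell} \con \Sigma_{i_\ell}$ by $s$ can again be written as a finite union of terms $\efree{\Upsilon'} \con \Sigma_{i'}$, where each $\Upsilon'$ is drawn from the (finite, derivative-closed) family $\type\Upsilon$ and each $i'$ is either one of the $i_\ell$ or $j$ itself. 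Concretely, for each $\ell$, either the top state lies inside the $\Upsilon_\ell$ part — in which case I use $\Upsilon_\ell\deriv{s}$ (available via the \texttt{atomic} API, landing in $\type\Upsilon$) concatenated with $\Sigma_{i_\ell}$ — or, if $\epsilon \in \Upsilon_\ell$, the top state lies in the $\Sigma_{i_\ell}$ part, contributing (recursively, using that $i_\ell < n$ and the induction hypothesis applied to $\Sigma_{i_\ell}\deriv{s}$) more terms of the right shape; since $\Sigma_{i_\ell}$ itself decomposes with strictly smaller indices, this recursion terminates. I would need to check that the defining biconditional of a degree-$k$ derivative is respected, i.e. that $s\con\sigma \in \Sigma_n \iff s\con\sigma \in \bigcup s\con(\cdots)$ — this follows by matching, position by position, where in the concatenation the leading $s$ is consumed, using null-closedness only where $\epsilon$-membership is invoked. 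Care is needed that the collection stays finite: finiteness is guaranteed because $\type\Upsilon$ is a finite family, the number of summands at each step is bounded by that family's size times $k$, and the recursion on indices is well-founded; I would make this bound explicit to close the argument cleanly.
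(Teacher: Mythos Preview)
Your overall plan---induction on $n$, case analysis on the API operation, with \texttt{derivative} as the only nontrivial case---is exactly the paper's approach, and your treatment of \texttt{root}, \texttt{union}, and \texttt{prepend} is fine.

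The \texttt{derivative} case, however, is misargued. You write that ``if $\epsilon \in \Upsilon_\ell$, the top state lies in the $\Sigma_{i_\ell}$ part'' and then invoke the induction hypothesis on $\Sigma_{i_\ell}\deriv{s}$. Two things go wrong here. First, the decomposition you are differentiating is $\efree{\Upsilon_\ell}\con\Sigma_{i_\ell}$, and $\efree{\Upsilon_\ell}$ \emph{never} contains $\epsilon$; so the Brzozowski derivative of this concatenation by $s$ is simply $\bigcup_{\Upsilon'\in\Upsilon_\ell\deriv{s}}\Upsilon'\con\Sigma_{i_\ell}$, with no ``fall-through'' term $\Sigma_{i_\ell}\deriv{s}$ at all. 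Second, even if such a term did arise, $\Sigma_{i_\ell}\deriv{s}$ need not be one of $\Sigma_1,\ldots,\Sigma_{n-1}$, so the induction hypothesis would not apply to it.

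The recursion actually enters one step later, when you take the $\epsilon$-free part of the derivative you just computed:
\[
  \efree{\Upsilon'\con\Sigma_{i_\ell}}
  \;=\;
  \efree{\Upsilon'}\con\Sigma_{i_\ell}
  \;\cup\;
  \begin{cases}\efree{\Sigma_{i_\ell}} & \text{if }\epsilon\in\Upsilon',\\ \varnothing & \text{otherwise.}\end{cases}
\]
The first summand is already of the required shape; for the second, the induction hypothesis applies directly to $\efree{\Sigma_{i_\ell}}$ (since $i_\ell<n$), not to any derivative of it. So the relevant condition is $\epsilon\in\Upsilon'$ for $\Upsilon'\in\Upsilon_\ell\deriv{s}$, not $\epsilon\in\Upsilon_\ell$, and no nested derivative of the $\Sigma_{i_\ell}$ is ever taken. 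With this correction your argument goes through and coincides with the paper's.
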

\begin{proof}
  We argue by induction on $n$.
  For languages returned by \texttt{derivative}, the inductive step follows by
  \[
    \efree{\Sigma_n\deriv{s}} \ =\
    \efree{\efree{\Sigma_n}\deriv{s}} \ =\
    \efree{\bigcup_j \bigcup_{\Upsilon' \in \efree{\Upsilon_j}\deriv{s}} \Upsilon' \con \Sigma_{i_j}} \ =\
    \bigcup_{\substack{j \\ \Upsilon' \in \Upsilon_j\deriv{s}}} \efree{\Upsilon'} \con \Sigma_{i_j}
      \ \cup\  \bigcup_{\substack{j \\ \epsilon \in \Upsilon' \in \Upsilon_j\deriv{s}}} \efree{\Sigma_{i_j}},
  \]
  and applying the inductive hypothesis to the latter part of the sum. All other cases are trivial.
\end{proof}

Our proposed data structure keeps, for each $\Sigma_n$, the set of pairs warranted above,
and an additional bit denoting whether $\epsilon \in \Sigma_n$.
All operations have natural implementations, with \texttt{derivative} following the above proof.
Their complexity will be analyzed in Section~\ref{sec:complexity}.
Here, we note only that, as language union is idempotent, any duplicate pairs encountered can and should be merged.
Thus, each $\Sigma_n$ is represented by at most $n\,|\type\Upsilon|$ pairs.

The whole structure can be viewed as a DAG, whose vertices correspond to the languages $\Sigma_i$.
Edges, labeled with elements of $\type\Upsilon$, follow the references from Proposition~\ref{prop:language-shape}:
an edge $(u, v)$ with label $\Upsilon$ is present iff the decomposition of $\efree{\Sigma_u}$ contains the pair $(\Upsilon, v)$.
Each vertex is additionally labeled with the $\epsilon$ bit---vertices at which this bit is set will be called \bld{final}.

\begin{api}\label{api:atomic-relations}
  Required operations on atomic relations
  ($\type\Lambda^\circ$ represents relations over $D^*$, i.e., subsets of $D^*{\times}D^*$).
  \begin{itemize}
    \item $\texttt{atomic}(s \in S) \in \type\Xi$ \tab $\N{\C{s}}$,
    \item $\texttt{epsilon}(\Xi \in \type\Xi) \in \type\Lambda^\circ$ \tab $\Xi \cap D^*{\times}D^*$,
    \item $\texttt{derivative}(\Xi \in \type\Xi, s \in S) \in (\type\Lambda^\circ \times \type\Xi)^{\leqslant k}$ \tab $\Xi\deriv{s}$.
  \end{itemize}
\end{api}

In order to adapt this data structure to represent parsing relations, we need to additionally keep track of transitions.
Instead of the underlying type $\type\Upsilon$, let us parameterize it with a type $\type\Xi$,
representing a finite family of guided relations, realizing API~\ref{api:atomic-relations}.
Our implementation is supposed to compute flat derivatives,
let us then introduce an auxiliary \bld{flattening} operation, defined for any $\Lambda \subseteq D^*{\times}D^*$ as
\[
  \flat\Lambda \deq \{ \rev{\delta \con \eta} :\ (\eta, \rev\delta) \in \Lambda \}.
\]

With $\efree\Psi$ denoting the $\epsilon$-free part of $\Psi$ (i.e., $\Psi - D^*{\times}D^*$),
we can now show the following generalization of Proposition~\ref{prop:language-shape}:

\begin{proposition}
  \label{prop:relation-shape}
  Let $\Psi_1, \Psi_2, \ldots, \Psi_n$ be a sequence of guided relations over $G^*{\times}D^*$,
  in which each $\Psi_i$ is obtained by applying one of the operations of API~\ref{api:relations} to some $\Psi_j$ with $j < i$.
  Then there exists a finite collection of triples $\{ (\Lambda_j, \Xi_j , i_j) \}_j$
  with $\Lambda_j \subseteq D^*{\times}D^*$, $\Xi_j \in \type\Xi$ and $1 \leqslant i_j < n$, such that
  \[
    \efree{\Psi_n} = \bigcup_j \Lambda_j \con \efree{\Xi_j} \con \Psi_{i_j}.
  \]
\end{proposition}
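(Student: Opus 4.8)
The plan is to replay the induction behind Proposition~\ref{prop:language-shape}, now carrying the transition component alongside the configuration one. I argue by induction on $n$ and branch on which operation of API~\ref{api:relations} produced $\Psi_n$ from an earlier $\Psi_j$ (with $j<n$); in each branch I build the required collection of triples explicitly, using only associativity of concatenation, its distributivity over $\cup$, and the easy behaviour of $\efree{\cdot}$ under these operations. Since \texttt{epsilon} returns a transition language rather than a guided relation, the operations to consider are \texttt{root}, \texttt{union}, both \texttt{prepend} variants, and \texttt{derivative}.

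The non-derivative branches are routine, exactly as in Proposition~\ref{prop:language-shape}. For $\Psi_1=\texttt{root}()$ the configuration component is empty, hence lies in $D^*$, so $\efree{\Psi_1}=\varnothing$ and the empty collection works. For $\texttt{union}(\Psi_i,\Psi_j)$ I take the union of the two inherited collections. For $\texttt{prepend}(\Delta;\Psi_j)$ with $\Delta\subseteq D^*$ (which prepends only to the transition component and leaves the configuration untouched) I use $\efree{\Delta\con\Psi_j}=\Delta\con\efree{\Psi_j}$ and fold $\Delta$ into each $\Lambda_j$ of the inherited decomposition. For $\texttt{prepend}(\Xi;\Psi_j)$ with $\Xi\in\type\Xi$, splitting on whether the $\Xi$-factor of a pair carries a state gives the exact identity
\[
  \efree{\Xi\con\Psi_j} \;=\; \efree{\Xi}\con\Psi_j \ \cup\ \bigl(\Xi\cap D^*{\times}D^*\bigr)\con\efree{\Psi_j},
\]
with $\Xi\cap D^*{\times}D^*=\texttt{epsilon}(\Xi)$: the first summand is the triple $(\{(\epsilon,\epsilon)\},\Xi,j)$, and the induction hypothesis applied to $\efree{\Psi_j}$ handles the second, after folding $\texttt{epsilon}(\Xi)$ into each of its $\Lambda_j$.

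The substance is the \texttt{derivative} branch, where $\Psi_n$ is one of the outputs of $\texttt{derivative}(\Psi_j;s)=\Psi_j\deriv[flat]{s}$. First, only the $\epsilon$-free part of $\Psi_j$ feeds the flat derivative, since a pair $(\eta\con s\con\gamma,\rev\delta)\in\Psi_j$ that contributes (with $\eta\in\Eps$) has a configuration containing the state $s$. So I take $\efree{\Psi_j}=\bigcup_m\Lambda_m\con\efree{\Xi_m}\con\Psi_{i_m}$ from the hypothesis and flat-differentiate one summand $R=\Lambda\con\efree\Xi\con\Psi'$ at a time. The key local fact is a Leibniz-type rule for the flat derivative of this triple concatenation: because $\Lambda$ contributes only transitions whereas $\efree\Xi$ always contributes at least one state, peeling a leading block $\eta\con s$ ($\eta\in\Eps$) from a configuration of $R$ must consume $\Lambda$ in full and then strip a leading block $\eta'\con s$ ($\eta'\in\Eps$) from $\efree\Xi$. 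Feeding this through the atomic derivative $\Xi\deriv{s}$ and executing the stripped guiding transitions via the flattening operation $\flat{\cdot}$, the flat derivative of $R$ by $s$ decomposes into terms $\Lambda''\con\efree{\Xi'}\con\Psi'$ already of the required shape (with $\Xi'$ ranging over the components of $\Xi\deriv{s}$ and $\Lambda''$ assembled from $\Lambda$, the transition components of $\Xi\deriv{s}$, and the flattened prefix), plus a residual contribution that appears exactly when some component $\Xi'$ admits a purely-transition configuration; that residual has the shape $(\text{transition relation})\con\Psi'$, and restricting to its $\epsilon$-free part turns it into $(\text{transition relation})\con\efree{\Psi'}$, to which I apply the induction hypothesis once more (legitimate, since every index that arises is smaller than $n$), again folding the leading transitions into the $\Lambda$-components. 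Unioning over all summands $m$ and all components of the atomic derivatives gives a finite collection of triples of the required form; it then remains to check that the guided relation so described is a legitimate flat derivative, which at the same time pins down the value the data structure's \texttt{derivative} operation should return.

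I expect essentially all the difficulty to sit in the \texttt{derivative} branch, and within it in the bookkeeping of guiding transitions and reversed histories: tracking which transitions migrate from a configuration into the transition component and in what order given the reversal convention, checking that guiding transitions in front of the derived state must be executed en bloc (the phenomenon already isolated in Lemma~\ref{lem:track-back}), and confirming that the branch recursing into $\Psi'$ closes after a single application of the induction hypothesis rather than cascading. The remaining branches, as for Proposition~\ref{prop:language-shape}, are purely formal manipulations with $\con$, $\cup$ and $\efree{\cdot}$.
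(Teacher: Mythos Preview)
Your proposal is correct and follows essentially the same approach as the paper: induction on $n$, with only the \texttt{derivative} case nontrivial, handled by the Leibniz-type splitting of $\efree{\Lambda\con\efree\Xi\con\Psi'}\deriv[flat]{s}$ into the ``direct'' terms $\flat{\Lambda\con\Lambda'}\con\efree{\Xi'}\con\Psi'$ (for $(\Lambda',\Xi')\in\Xi\deriv{s}$) and the residual $\flat{\Lambda\con\Lambda'}\con(\Xi'\cap D^*{\times}D^*)\con\efree{\Psi'}$ to which the induction hypothesis is re-applied. The paper compresses this into a single displayed equality and waves at the remaining cases as trivial; your write-up spells out more of the bookkeeping (in particular the \texttt{prepend}-with-$\Xi$ split and the r\^ole of flattening), but the argument is the same.
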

\begin{proof}
  Again, the only nontrivial operation to be considered in the inductive proof is \texttt{derivative}.
  This time the inductive step follows from
  \begin{align*}
    \efree{\Psi_n\deriv[flat]{s}} \ &=\
    \efree{\bigcup_{\substack{j \\ (\Lambda,\Xi') \in \efree{\Xi_j}\deriv{s}}} \flat{\Lambda_j \con \Lambda} \con \Xi' \con \Psi_{i_j}} \ = \\\ \\
    &=\ \bigcup_{\substack{j \\ (\Lambda,\Xi') \in \Xi_j\deriv{s}}} \flat{\Lambda_j \con \Lambda} \con \efree{\Xi'} \con \Psi_{i_j}
         \ \cup\ \bigcup_{\substack{j \\ (\Lambda,\Xi') \in \Xi_j\deriv{s}}} \flat{\Lambda_j \con \Lambda} \con (\Xi' \cap D^*{\times}D^*) \con \efree{\Psi_{i_j}}.
  \end{align*}
\end{proof}

Applying this result to our DAG structure requires labeling edges with pairs $(\Lambda, \Xi)$, with $\Lambda \subseteq D^*{\times}D^*$ and $\Xi \in \type\Xi$.
Instead of the single $\epsilon$ bit, vertices need to be labeled with relations over $D^*$ (subsets of $D^*{\times}D^*$)---%
\bld{final} vertices are now those with a nonempty label.

As we can see, for edge and vertex labels we need an additional data type, denoted $\type\Lambda$, representing relations over $D^*$.
For efficiency reasons, we allow it to be different from the type $\type\Lambda^\circ$ returned from \texttt{derivative} and \texttt{epsilon} calls on atomic relations.
Realizing our API requires elements of $\type\Lambda$ to be merged using set union, flattened.
We also require concatenatenation on the left with an atomic $\type\Lambda^\circ$ or with whatever $\Delta \subseteq D^*$ our \texttt{prepend} method is called.
To simplify further reasoning, we combine them into a single operation.
Concatenation on the right with elements of $\type\Lambda^\circ$ is always immediately followed by flattening,
and is therefore redundant, as $\flat{\Lambda \con \Lambda^\circ} = \flat{\flat\Lambda \con \Lambda^\circ \con \epsilon}$.
The requirements are summarized by API~\ref{api:transition-relations}.

\begin{api}\label{api:transition-relations}
  Required operations on transition relations (subsets of $D^*{\times}D^*$).
  $\type\Delta$ represents languages of transitions (i.e., subsets of $D^*$).
  \begin{itemize}
    \item $\texttt{epsilon}() \in \type\Lambda$ \tab $\epsilon$,
    \item $\texttt{prepend}(\Delta \in \type\Delta, \Lambda^\circ \in \type\Lambda^\circ, \Lambda \in \type\Lambda) \in \type\Lambda$ \tab $\Delta \con \Lambda^\circ \con \Lambda$,
    \item $\texttt{flatten}(\Lambda \in \type\Lambda) \in \type\Delta$ \tab $\flat{\Lambda}$,
    \item $\texttt{union}(\Lambda_1, \ldots, \Lambda_k \in \type\Lambda) \in \type\Lambda$ \tab $\Lambda_1 \cup \cdots \cup \Lambda_k$.
  \end{itemize}
\end{api}

It is also clear that if the Boolean type $\type{B}$ is used in place of $\type\Lambda$ (and thus also $\type\Lambda^\circ$ and $\type\Delta$),
we are left with the simpler DAG useful for recognition.

\subsection{Atomic closures}

We now turn to the call-reduce closures of individual states, realizing APIs~\ref{api:atomic-languages} and~\ref{api:atomic-relations}.
The constructive proof of the proposition below presents a possible direct representation using nondeterministic finite automata.
The non-relational variant could of course be determinized and minimized,
but it might not be advantageous to do so---%
the price for less nondeterminism is a larger automaton size,
and the grammar-related complexity of our complete algorithm depends on both.

Let us denote
$$
  \P{s}{\sigma} \deq \{ (\sigma', \rev\eta):\ s \move[\eta\in\Eps] \sigma\con\sigma' \}.
$$
In particular, $\C{s} = \P{s}{\epsilon}$.
By definition, the derivatives of any $\P{s}{\sigma}$ can be expressed as
$$
  \P{s}{\sigma}\deriv{u} = \{ (\epsilon, \P{s}{\sigma \con u}) \}.
$$
This, however, does not directly lead to a \emph{finite} family closed under derivatives.
Instead, we show the following.

\begin{lemma}\label{lem:atomic-regular}
  For every $s, t, u \in S$, the derivatives of $\P{s}{t}$ are
  $$
    \P{s}{t}\deriv{u} = \left\{ \left( \bigcup_{d = \call{t',t'',u} \in D} \D{t''}{t} \con d,\ \P{s}{t'} \right) \right\}_{t' \in S}.
  $$
\end{lemma}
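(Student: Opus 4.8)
The plan is to check that the displayed collection of pairs satisfies the defining equivalence of a degree-$k$ derivative (Definition~\ref{df:relderiv}). Since both sides of that equivalence are vacuously false unless the transition component reverses a sequence in $\Eps$, it suffices to show that for every $\sigma \in S^*$ and every $\eta \in \Eps$ the pair $(u \con \sigma, \rev\eta)$ lies in $\P{s}{t}$ iff it lies in $\bigcup_{t' \in S} (\bigcup_{d = \call{t',t'',u} \in D} \D{t''}{t} \con d) \con (u,\epsilon) \con \P{s}{t'}$. Unfolding $\P{\cdot}{\cdot}$ and $\D{\cdot}{\cdot}$, the left-hand membership is exactly the assertion $s \move[\eta] t \con u \con \sigma$, while the right-hand membership is exactly the assertion that $\eta$ factors as $\eta = \eta' \con d \con \eta_0$ for some $t',t'' \in S$ with $d = \call{t',t'',u} \in D$, $s \move[\eta'] t' \con \sigma$ and $t'' \move[\eta_0] t$ (so that $\rev\eta = \rev{\eta_0} \con d \con \rev{\eta'}$, $\rev{\eta_0} \in \D{t''}{t}$ and $(\sigma,\rev{\eta'}) \in \P{s}{t'}$). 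Thus the lemma reduces to the equivalence of these two assertions.

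The direction ($\Leftarrow$) is immediate: given the factorization, run the transitions, $s \move[\eta'] t' \con \sigma \move[d] t'' \con u \con \sigma \move[\eta_0] t \con u \con \sigma$, where the middle step is the call rule and the last step uses that a run never visiting the empty configuration (such as $t'' \move[\eta_0] t$) survives appending a fixed suffix to all of its configurations. For ($\Rightarrow$) the work is to locate the call $d$ inside $\eta$. Consider the run $\gamma_0, \gamma_1, \ldots, \gamma_N$ determined by $\eta$, with $\gamma_0 = s$ and $\gamma_N = t \con u \con \sigma$. As $\Eps$ has no shifts, each step changes the stack height by exactly $\pm 1$; as $\gamma_N$ is nonempty and the empty configuration is a dead end, the height stays $\geq 1$ throughout. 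Let $i$ be maximal with $|\gamma_i| \leq |\sigma|+1$; it exists ($i=0$ qualifies) and $i < N$ (since $|\gamma_N| = |\sigma|+2$). The step $\gamma_i \to \gamma_{i+1}$ must be height-increasing (otherwise $|\gamma_{i+1}| \leq |\sigma|$, contradicting $|\gamma_{i+1}| \geq |\sigma|+2$ by maximality), hence a call, with $|\gamma_i| = |\sigma|+1$ and $|\gamma_{i+1}| = |\sigma|+2$; write it $d = \call{t',t'',c}$, so $\gamma_i = t' \con \rho$ and $\gamma_{i+1} = t'' \con c \con \rho$ with $|\rho| = |\sigma|$. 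By maximality of $i$, every $\gamma_j$ with $j > i$ has height $\geq |\sigma|+2$, so the bottom $|\sigma|+1$ states never get popped after step $i$ and hence are frozen through $\gamma_{i+1}, \ldots, \gamma_N$; comparing $\gamma_{i+1}$ with $\gamma_N = t \con u \con \sigma$ then forces $\rho = \sigma$ and $c = u$. Splitting $\eta = \eta' \con d \con \eta_0$ at this step gives $s \move[\eta'] t' \con \sigma$ and $t'' \con u \con \sigma = \gamma_{i+1} \move[\eta_0] \gamma_N = t \con u \con \sigma$, which restricts to $t'' \move[\eta_0] t$ because its common suffix $u \con \sigma$ is never touched — precisely the factorization we needed.

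I expect the one genuinely delicate point to be this freezing claim: arguing rigorously that, once the stack height is pinned at $\geq |\sigma|+2$, no transition can alter the state sitting just above the preserved suffix $\sigma$ — a reduction would first have to expose it as the top, and popping it would drop the height below $|\sigma|+2$, contradicting the maximality of $i$, while a call never reaches below its own top. That is exactly what forces the third component of $d$ to be $u$ rather than an arbitrary state, so it cannot be glossed over; everything else is routine unwinding of the definitions. It is also worth remarking that $\D{t''}{t}$ and the inner union over $d$ are empty for all but a few choices of $t',t''$, so although the derivative as stated has degree $|S|$, in practice most of its pairs are empty and may be dropped.
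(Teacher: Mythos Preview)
Your proof is correct and follows the same approach as the paper: both locate the call that places $u$ on the stack by finding the last moment at which the suffix $u\con\sigma$ is created and then never touched again. The paper states this informally (``split it at the point where $u\con\sigma'$ becomes the suffix of the configuration, not to be touched later''), while you carry it out rigorously via the maximal-height index $i$; your identification of the freezing claim as the one nontrivial step is exactly on point, and your argument for it is sound.
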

\begin{proof}
  The configurations in $\P{s}{t}$ are those for which $s \move[\Eps] t\con\sigma$.
  Taking derivative by $u$, we are interested in $\sigma = u \con \sigma'$ for some $\sigma'$.
  Therefore, consider a sequence of transitions $\eta\in\Eps$ which realizes $s \move[\eta] t \con u \con \sigma'$
  and split it at the point where $u \con \sigma'$ becomes the suffix of the configuration, not to be touched later.

  What lies above $u \con \sigma'$ cannot be empty (as then $u$ would get touched),
  which means that $u$ must be the lower state placed on the configration by some call.
  Thus the sequence $\eta$ looks as follows:
  $$
    s \move[\eta_1] t' \con \sigma' \move[\call{t',t'',u}] t'' \con u \con \sigma' \move[\eta_2] t \con u \con \sigma',
  $$
  with $\rev{\eta_2} \in \D{t''}{t}$ and $(\sigma', \rev{\eta_1}) \in \P{s}{t'}$, as required.
\end{proof}

\begin{corollary}
  \label{lem:closure-regular}
  For each $s \in S$, the language and relation $\C{s}$ are $(|S|+1,|S|)$-regular.
\end{corollary}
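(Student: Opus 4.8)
The plan is to derive the corollary directly from Lemma~\ref{lem:atomic-regular} together with the trivial observation that $\C{s} = \P{s}{\epsilon}$. First I would set up the candidate family $\mathcal{F} = \{\C{s}\} \cup \{\P{s}{t} : t \in S\}$; note that these are indeed $|S|+1$ objects (the state $s$ is fixed throughout). I then need to check two things: that $\mathcal{F}$ is closed under taking derivatives, and that each derivative has degree at most $|S|$.

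For the first element, $\C{s} = \P{s}{\epsilon}$, its derivative by any $u \in S$ is $\P{s}{\epsilon}\deriv{u} = \{(\epsilon, \P{s}{u})\}$ straight from the displayed identity $\P{s}{\sigma}\deriv{u} = \{(\epsilon, \P{s}{\sigma\con u})\}$ — this is a degree-one derivative whose single component relation $\P{s}{u}$ lies in $\mathcal{F}$. For the remaining elements $\P{s}{t}$, Lemma~\ref{lem:atomic-regular} gives
\[
  \P{s}{t}\deriv{u} = \left\{ \left( \bigcup_{d = \call{t',t'',u} \in D} \D{t''}{t} \con d,\ \P{s}{t'} \right) \right\}_{t' \in S},
\]
a collection indexed by $t' \in S$, hence of degree exactly $|S|$; each component relation $\P{s}{t'}$ is again in $\mathcal{F}$, and the transition component $\bigcup_{d}\D{t''}{t}\con d \subseteq D^*$ is unrestricted, which is allowed by the definition of $(m,k)$-regularity (and, in the relational setting, matches the shape $(\Lambda, \Xi')$ demanded by API~\ref{api:atomic-relations}). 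So $\mathcal{F}$ is closed under derivatives of degree $|S|$ and has size $|S|+1$, giving $(|S|+1, |S|)$-regularity of every $\C{s}$.

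The same argument applies verbatim to both the language version (forgetting transitions, so the derivative components collapse to just $\P{s}{t'}$ viewed as a subset of $S^*$) and the relation version, since Lemma~\ref{lem:atomic-regular} is stated for relations and specializes to languages. The only subtlety — and the one place worth a sentence in the writeup — is that the degree-one derivative of $\C{s}$ and the degree-$|S|$ derivatives of the $\P{s}{t}$ must be reconciled under a single bound $k = |S|$; since a derivative of degree $k$ is by definition also one of any larger degree (one may pad with empty components, or simply observe $|S| \geqslant 1$), this is immediate and is the expected "main obstacle," which is to say there is essentially none. I would close by remarking that this establishes the atomic objects satisfy APIs~\ref{api:atomic-languages} and~\ref{api:atomic-relations} with the stated size and nondeterminism parameters, which feed into the grammar-dependent complexity bounds of Section~\ref{sec:complexity}.
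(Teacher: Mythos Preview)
Your proposal is correct and follows essentially the same approach as the paper: the paper also takes the family $\{\C{s}\}\cup\{\P{s}{t}:t\in S\}$, invokes Lemma~\ref{lem:atomic-regular} to see it is closed under derivatives of degree $|S|$, and remarks that the language case follows from the relational one by forgetting transitions. Your write-up is simply more explicit about the degree-one derivative of $\C{s}=\P{s}{\epsilon}$ and the padding to degree $|S|$, both of which the paper leaves implicit.
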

\begin{proof}
  The definitions of $\C{s}$, derivatives, and regularity for relations
  degenerate cleanly to those for languages when specific transitions are forgotten.
  Therefore we only need to prove the statement about relations.
  Indeed, by the above lemma,
  for each $s \in S$, the family containing $\C{s} = \P{s}{\epsilon}$ and $\P{s}{t}$ for every $t \in S$
  is closed under derivatives of degree $|S|$.
\end{proof}

This result can be translated to the null closures of atomic languages and relations:
when taking derivatives, nullable states might need to be skipped over
(with their nulling transitions given in the first component of the $\Lambda$),
which in the language of finite automata corresponds to taking $\epsilon$-closure,
possibly for the price of increased nondeterminism.

\begin{proposition}
  For each $s \in S$, the guided language and relation $\N{\C{s}}$ are $(|S|+1, |S|)$-regular.
\end{proposition}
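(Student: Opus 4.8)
The plan is to replay the reasoning behind Lemma~\ref{lem:atomic-regular} and Corollary~\ref{lem:closure-regular}, with null closure threaded through. Fix $s \in S$ and take the family $\mathcal{F}_s \deq \{\N{\C{s}}\} \cup \{\N{\P{s}{t}}\}_{t \in S}$; since $\N{\C{s}} = \N{\P{s}{\epsilon}}$ we have $|\mathcal{F}_s| = |S|+1$. It then suffices to show that for every $t \in S \cup \{\epsilon\}$ and every $u \in S$ the derivative $\N{\P{s}{t}}\deriv{u}$ has degree $|S|$ with all residual relations in $\mathcal{F}_s$; the statement for guided \emph{languages} then follows by forgetting transitions, exactly as in the proof of Corollary~\ref{lem:closure-regular}.

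I would first pin down the shape of a null closure. Since nulling sequences contain no states, iterating the defining rule shows that $(\gamma,\rev\delta) \in \N{\P{s}{t}}$ holds exactly when $\gamma$ arises from some $\sigma \in S^*$ with $(\sigma,\rev\delta) \in \P{s}{t}$ by replacing a chosen subset of its (necessarily nullable) states with arbitrary nulling sequences for them, the transition component being left untouched. In particular a pair of $\N{\P{s}{t}}$ whose configuration begins with the state $u$ --- necessarily of the unique form $(\eta_0 \con u \con \gamma, \rev\delta)$ with $\eta_0 \in \Eps$ --- corresponds to a configuration $\sigma = s_1 \cons s_j \con u \con \sigma''$ in which $s_1,\ldots,s_j$ are nullable, $\eta_0 = \zeta_1 \cons \zeta_j$ with $s_i \move[\zeta_i]\epsilon$, and $\gamma$ is a null-expansion of $\sigma''$.

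Next I would run the "freeze the suffix $u \con \sigma''$" analysis of Lemma~\ref{lem:atomic-regular} on $s \move[\delta] t \con s_1 \cons s_j \con u \con \sigma''$. Exactly as there, $u$ must be the return state of some call $\call{t',t'',u}$, giving $\delta = \delta_1 \con \call{t',t'',u} \con \delta_2$ with $(\sigma'',\rev{\delta_1}) \in \P{s}{t'}$ and $t'' \move[\delta_2] t \con s_1 \cons s_j$, i.e.\ $\rev{\delta_2} \in \D{t''}{t \con s_1 \cons s_j}$. Null-expanding $\sigma''$ to $\gamma$ yields $(\gamma,\rev{\delta_1}) \in \N{\P{s}{t'}}$, while the call together with the descent $\rev{\delta_2}$ (which land on the transition side) and the nulling sequences $\zeta_1 \cons \zeta_j$ (which land on the guided-configuration side) bundle into a single transition relation $\Lambda_{t'} \subseteq D^*{\times}D^*$. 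Conversely, any such call, descent, choice of nullable states together with nulling sequences, and pair of $\N{\P{s}{t'}}$ reassembles --- by stacking the call and the descent above a witness of $\P{s}{t'}$ and then null-expanding --- into a pair of $\N{\P{s}{t}}$. Hence
$$
  \N{\P{s}{t}}\deriv{u} \ =\ \bigl\{ \bigl( \Lambda_{t'},\ \N{\P{s}{t'}} \bigr) \bigr\}_{t' \in S},
$$
a degree-$|S|$ derivative with all residuals in $\mathcal{F}_s$ --- so the extra nondeterminism mentioned in the preceding remark is absorbed entirely into the transition relations $\Lambda_{t'}$, and the degree stays at $|S|$. The case $t = \epsilon$ is identical, with $\D{t''}{s_1 \cons s_j}$ in place of $\D{t''}{t \con s_1 \cons s_j}$ and the sub-case $j = 0$ reducing to Lemma~\ref{lem:atomic-regular} verbatim. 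Thus $\mathcal{F}_s$ witnesses $(|S|+1,|S|)$-regularity of each of its members, in particular of $\N{\C{s}}$.

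The crux, I expect, is the previous step --- in particular the two directions of the displayed identity: one must verify that $\eta_0$ accounts for exactly the expanded nullable states sitting above the pivot $u$ in the underlying configuration, so that the split into $\Lambda_{t'}$ (pure transition bookkeeping, part of which is mirrored on the configuration side as guiding transitions) and the residual $\N{\P{s}{t'}}$ is at once sound and complete. This is exactly where the suffix-freezing argument of Lemma~\ref{lem:atomic-regular} has to be reconciled with the state-to-transition substitutions performed by null closure. Once that is settled, the remaining bookkeeping --- uniqueness of the $\eta \con u \con \gamma$ decomposition, closure of the claimed shapes under concatenation, union and null closure, and the degenerate $j = 0$ and $t = \epsilon$ cases --- is routine.
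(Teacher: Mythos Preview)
Your proposal is correct and follows essentially the same route as the paper: the same witnessing family $\{\N{\P{s}{\epsilon}}\}\cup\{\N{\P{s}{t}}\}_{t\in S}$, and the same pivot-at-$u$ argument inherited from Lemma~\ref{lem:atomic-regular}, with the nullable states above $u$ absorbed into the transition relation. The only difference is presentational: you describe each $\Lambda_{t'}$ in closed form (a union over all finite sequences $s_1,\ldots,s_j$ of nullable states and their nulling runs), whereas the paper peels off one nullable state at a time and characterizes the same relations by a recursive least-fixed-point equation---a form chosen because it feeds directly into the algorithmic computations of the next subsection.
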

\begin{proof}
  Again, only the statement about relations needs to be proven.

  The required family closed under derivatives are simply the null closures of those given above, i.e., $\N{\P{s}{\epsilon}}$ and $\N{\P{s}{t}}$.
  We would like to show that for some $\Lambda_{u,t,t'} \subseteq D^*{\times}D^*$ we have
  $$
    \N{\P{s}{t}}\deriv{u} = \left\{\left( \Lambda_{u,t,t'},\ \N{\P{s}{t'}} \right)\right\}_{t' \in S}.
  $$

  To express the transition relations $\Lambda_{u,t,t'}$ which are to be split off when taking derivatives,
  let us first give a name to the transition languages from Lemma~\ref{lem:atomic-regular}, denoting
  $$
    \Delta_{u,t,t'} \deq \bigcup_{d = \call{t',t'',u} \in D} \D{t''}{t} \con d.
  $$

  Now consider calculating $\N{\P{s}{t}}\deriv{u}$, fix some $t' \in S$
  and imagine a pair $(\gamma, \rev\eta) \in \N{\P{s}{t}}$ which will participate in this component of the derivative.

  The configuration $\gamma$ might begin with the state $u$.
  In this case, we can simply take the derivative $\P{s}{t}\deriv{u}$ and null-close the result.
  Therefore, $\Lambda_{u,t,t'}$ must contain $\epsilon {\times} \Delta_{u,t,t'}$.

  If $\gamma$ does not start with $u$, it must begin by a sequence of guiding transitions $\eta'$ which nulls some state $u'$.
  The state must have been present on top of the configuration in $\P{s}{t}$ to whose null closure $\gamma$ belongs.
  Therefore, we might take $\P{s}{t}\deriv{u'}$, moving to $\P{s}{t''}$ (note that the state $t''$ might be different than $t'$),
  and then recursively calculate the derivative $\N{\P{s}{t''}}\deriv{u}$.
  Therefore, $\Lambda_{u,t,t'}$ must also contain $\rev{\D{u'}{\epsilon}}{\times}\Delta_{u',t,t''} \ \con\ \Lambda_{u,t'',t'}$.

  As there are no other possibilities, we conclude that the relations $\Lambda_{u,t,t'}$ must be the least (wrt. inclusion) solutions to the equations
  $$
    \Lambda_{u,t,t'} = (\epsilon {\times} \Delta_{u,t,t'}) \ \cup\ \bigcup_{u',t'' \in S} \left(\rev{\D{u'}{\epsilon}} {\times} \Delta_{u',t,t''}\right) \con \Lambda_{u,t'',t'}.
  $$

  For the derivative $\N{\C{s}}\deriv{u} = \N{\P{s}{\epsilon}}\deriv{u}$, the reasoning is similar:
  either we go directly to the target relation $\N{\P{s}{u}}$, splitting off nothing (as was the case in Lemma~\ref{lem:atomic-regular}),
  or we skip over a nullable state $u'$ by taking derivative by it (this time this can only move us to $\P{s}{u'}$)
  and recursively calculate $\N{\P{s}{u'}}\deriv{u}$.
  We conclude that
  $$
    \N{\C{s}}\deriv{u} = \{ (\epsilon, \N{\P{s}{u}}) \} \ \cup\ \left\{\left( \bigcup_{u' \in S} \left(\rev{\D{u'}{\epsilon}}{\times}\epsilon\right) \con \Lambda_{u,u',t},\ \N{\P{s}{t}} \right)\right\}_{t \in S},
  $$
  where the singleton case on the left might be incorporated into the family on the right by including the $(\epsilon,\epsilon)$ pair in the transition relation for $t = u$.
\end{proof}

As a consequence,
once the necessary transition relations and languages have been pre-computed (this might happen at parser generation time),
the closures of grammar states, together with all their derivatives,
might be represented during parsing simply by pairs of states.

We also need to extract the transition languages corresponging to the empty configuration---%
or, for guided relations, to the configurations containing only guiding transitions and no states.
The first task is simple, because by definition
$$
  (\epsilon, \rev\eta) \in \P{s}{\sigma} \iff \rev\eta \in \D{s}{\sigma}.
$$
In the guided case, we might again need to skip over some nullable states.
Denoting
$$
  \L{s}{\sigma} \deq \P{s}{\sigma} \cap (D^*{\times}D^*),
$$
by an argument analogous to the one above we can conclude that
\begin{align*}
  \L{s}{t} &= (\epsilon{\times}\D{s}{t}) \cup \bigcup_{u',t'' \in S} \left( \rev{\D{u'}{\epsilon}} {\times} \Delta_{u',t,t''} \right) \con \L{s}{t''} \text{ and} \\
  \L{s}{\epsilon} &= (\epsilon{\times}\D{s}{\epsilon}) \cup \bigcup_{u' \in S} \left( \rev{\D{u'}{\epsilon}}{\times}\epsilon \right) \con \L{s}{u'}.
\end{align*}

The equations given for $\Lambda_{u,t,t'}$ and $\L{s}{t}$ follow directly from the above reasoning,
but using them as they stand would make our algorithm potentially exponential in the RTN size.
Therefore, we rewrite them into an equivalent form.

Let us begin with the original equation, with indices renamed:
$$
  \Lambda_{u_0,t_0,t'}
  = (\epsilon{\times}\Delta_{u_0,t_0,t'}) \ \cup\ \bigcup_{u_1,t_1 \in S} \left( \rev{\D{u_1}{\epsilon}} {\times} \Delta_{u_1,t_0,t_1} \right) \con \Lambda_{u_0,t_1,t'}.
$$
We unwind the recursion to an arbitrary depth $r$, obtaining
$$
  \Lambda_{u_0,t_0,t'}
  = \bigcup_{\substack{r \geqslant 0, \\ u_i,t_i \in S \\ \text{ for } 1 \leqslant i \leqslant r}}
    \left( \rev{\D{u_1}{\epsilon}} {\times} \Delta_{u_1,t_0,t_1}  \right) \cons
    \left( \rev{\D{u_r}{\epsilon}} {\times} \Delta_{u_r,t_{r-1},t_r}  \right) \con
    (\epsilon {\times} \Delta_{u_0,t_r,t'}),
$$
then split the cases for $r=0$ and $r>0$, and in the latter regroup the final terms and repack the rest using the equation above:
$$
  \Lambda_{u_0,t_0,t'}
  = (\epsilon {\times} \Delta_{u_0,t_0,t'}) \ \cup\  \bigcup_{u_r,t_r \in S} \Lambda_{u_r,t_0,t_r} \con \left( \rev{\D{u_r}{\epsilon}} {\times} \Delta_{u_0,t_r,t'} \right).
$$

Analogously, the equations for $\L{s}{t}$ are replaced by the equivalent
$$
  \L{s}{t_0} = (\epsilon {\times} \D{s}{t_0}) \ \cup\  \bigcup_{u_r,t_r \in S} \Lambda_{u_r,t_0,t_r} \con \left( \rev{\D{u_r}{\epsilon}} {\times} \D{s}{t_r} \right).
$$

\subsection{Transition relations and languages}
\label{sec:transition-api}

Moving down the ladder of abstraction,
it is time to find the representations $\type\Lambda$ and $\type\Delta$ for transition relations and languages.

For $\type\Lambda$, note that the only way its values,
formed from $\epsilon$ by chains of calls to \texttt{prepend} and \texttt{union},
can ever be examined is by calls to \texttt{flatten}.
Consider first a value formed using only \texttt{prepend}, i.e., of the form
\[
  \Lambda = \Delta_1 \con \Lambda^\circ_1 \cons \Delta_k \con \Lambda^\circ_k.
\]
The flattening of such value can be expressed as
\[
  \flat\Lambda = \flat{ \ldots \flat{\Delta_1 \con \Lambda^\circ_1} \ldots \con \Delta_k \con \Lambda^\circ_k}.
\]
To perform this computation, it is enough to require the type $\type\Lambda^\circ$ to support a strengthened version of flattening,
which we call \bld{wrapping}:
\begin{itemize}
  \item $\texttt{wrap}(\Delta \in \type\Delta, \Lambda^\circ \in \type\Lambda^\circ) \in \type\Delta$ \tab $\flat{\Delta \con \Lambda^\circ}$.
\end{itemize}

If forming the $\Lambda$ in question involved some calls to \texttt{union},
we will make use of the fact that flattening distributes over union.
The whole expression for $\Lambda$ can be seen as a DAG with a single distinguished sink node, and edges labeled with calls to \texttt{prepend}.
Each path through this DAG represents a union-free expression as discussed above.
The relation represented by a node will be the union over all paths from this node to the sink.
Shall \texttt{flatten} be requested for a node $v$ representing some $\Lambda_v$,
the value can be computed following a topological traversal from $v$,
in time linear in the size of the DAG, and thus in the number of operations used to form $\Lambda_v$.

For the atomic relations (i.e., the type $\type\Lambda^\circ$),
we "represent" them directly by the implementation of their \texttt{wrap} operation.
We only need to express how the result of \texttt{wrap} (for an arbitrary $\Delta$) will look like
for the primitives returned by API~\ref{api:atomic-relations}, namely
$$
  \L{s}{\epsilon}, \
  \L{s}{t}, \
  \Lambda_{u,t,t'}\ \text{and}\
  \bigcup_{u' \in S} \left(\rev{\D{u'}{\epsilon}}{\times}\epsilon\right) \con \Lambda_{u,u',t}.
$$

This can be given by the following set of recursive equations:
\begin{align*}
  \flat{\Delta \con \Lambda_{u,t,t'}} &=
    \Delta \con \Delta_{u,t,t'} \cup \bigcup_{u',t'' \in S} \D{u'}{\epsilon} \con \flat{\Delta \con \Lambda_{u',t,t''}} \con \Delta_{u,t'',t'}, \\
  \flat{\Delta \con \left( \bigcup_{u' \in S}\left( \rev{\D{u'}{\epsilon}}{\times}\epsilon \right) \con \Lambda_{u,u',t} \right)} &=
    \bigcup_{u' \in S} \flat{\D{u'}{\epsilon} \con \Delta \con \Lambda_{u,u',t}}, \\
  \flat{\Delta \con \L{s}{t}} &=
    \Delta\con\D{s}{t} \cup \bigcup_{u',t'' \in S} \D{u'}{\epsilon} \con \flat{\Delta \con \Lambda_{u',t,t''}} \con \D{s}{t''}, \\
  \flat{\Delta \con \L{s}{\epsilon}} &=
    \Delta\con\D{s}{\epsilon} \cup \bigcup_{u' \in S} \flat{\D{u'}{\epsilon} \con \Delta \con \L{s}{u'}}.
\end{align*}
Of course, it makes sense to permanently remove from the calculation of the above unions those components which are never going to be relevant,
i.e., those for which $\D{u'}{\epsilon}$ or $\Delta_{u',t,t''}$ is empty.

\begin{api}\label{api:transition-languages}
  Required operations on transition languages (subsets of $D^*$)
  \begin{itemize}
    \item $\texttt{epsilon}() \in \type\Delta$ \tab $\{\epsilon\}$,
    \item $\texttt{concatenate}(\Delta_1, \Delta_2 \in \type\Delta) \in \type\Delta$ \tab $\Delta_1 \con \Delta_2$,
    \item $\texttt{union}(\Delta_1, \ldots, \Delta_k \in \type\Delta) \in \type\Delta$ \tab $\Delta_1 \cup \cdots \cup \Delta_k$,
    \item $\texttt{is-empty}(\Delta \in \type\Delta) \in \{\texttt{true},\texttt{false}\}$ \tab whether $\Delta = \varnothing$,
    \item $\texttt{primitive}(?) \in \type\Delta$ \tab (see text)
  \end{itemize}
\end{api}

As the final piece of data representation, we turn to the type $\type\Delta$ encoding transition languages.
The basic operations performed on its elements during parsing are union, concatenation, and checking (non)emptyness (cf.~API~\ref{api:transition-languages}).
However, one also needs a rather rich set of "primitives" (which could best be pre-computed during parser generation):
singletons of individual shift transitions, $\D{s}{\epsilon}$, $\D{s}{t}$ and $\Delta_{u,t,t'}$,
and, moreover, the ability to solve equations of the shape given in the above discussion concerning $\type\Lambda$.
We employ the equation solver (to be discussed shortly) to compute the primitives by expressing
\begin{align*}
  \D{s}{\epsilon} &= \bigcup_{d = \reduce{s,p} \in D} d \ \cup\ \bigcup_{d = \call{s,u,t} \in D} \D{t}{\epsilon} \con \D{u}{\epsilon} \con d, \\
  \D{s}{t} &= \{ \epsilon:\ s = t \} \ \cup\ \bigcup_{d = \call{s,u,s'} \in D} \D{s'}{t} \con \D{u}{\epsilon} \con d, \quad\text{and} \\
  \Delta_{u,t,t'} &= \bigcup_{d = \call{t',t'',u} \in D} \D{t''}{t} \con d.
\end{align*}
This way, we are only left with individual transitions, concatenation, union, and equations expressible in terms of these.

To assess the required capabilities of the equation solver, we will analyze the dependency graph between variables.
In particular, when this graph turns out to be acyclic, the solutions might be calculated simply by evaluation.
This time, we start with the variables at the end of the dependency chain.

\bld{$\D{s}{\epsilon}$.}
Every $\call{s,u,t} \in D$ seems to induce a dependency between $\D{s}{\epsilon}$ and each of $\D{u}{\epsilon}$ and $\D{t}{\epsilon}$.
However, this dependency is only real if $\D{t}{\epsilon}$ (respectively, $\D{u}{\epsilon}$) is nonempty,
which is exactly when $t$ (respectively, $u$) is nullable.
Let us then assume that nullability has been computed by some other means (see, e.g.,~\cite{cyk-2nf}, Section 4.2),
and announce a \bld{right dependency} $\D{s}{\epsilon} \dep \D{t}{\epsilon}$ if $\call{s,u,t} \in D$ for some nullable $u$,
and a \bld{left-right dependency} $\D{s}{\epsilon} \dep \D{u}{\epsilon}$ if $\call{s,u,t}\in D$ for some nullable $t$.
The direction(s) of dependencies record on which side of their targets a nonempty sequence of transitions might occur.

\bld{$\D{s}{t}$.}
Dependencies on $\D{s}{\epsilon}$ are irrelevant, as they will never introduce cycles.
The only internal dependencies in this layer are
a right dependency $\D{s}{t} \dep \D{s'}{t}$ if $\call{s,u,s'} \in D$ for some nullable $u$.
Note that it occurs in accord with an identical dependency $\D{s}{\epsilon} \dep \D{s'}{\epsilon}$.

\bld{$\flat{\Delta \con \Lambda_{u,t,t'}}$.}
The dependency graph between these variables is independent of $\Delta$
(unless $\Delta = \varnothing$, in which case the solution is trivially $\varnothing$)
and also of $t$,
let us then omit them from node descriptions.
We have a left-right dependency $\Lambda_{u,t'} \dep \Lambda_{u',t''}$ if and only if
$u'$ is nullable and $\Delta_{u,t'',t'} \neq \varnothing$, which in turns happens if we have a $\call{t',t''',u} \in D$ with $\D{t'''}{t''} \neq \varnothing$.
Note, that a left-right dependency $\D{s}{\epsilon} \dep \D{u}{\epsilon}$, witnessed by a nullable $t$ as discussed above, implies $\Lambda_{t,s} \dep \Lambda_{t,u}$.

Other variables introduce no potentially cyclic dependencies.
This means there are only two ways in which a cycle can appear, making solution-by-evaluation impossible:
\begin{enumerate}
  \item
    There is a cycle of left-right dependencies between the nodes denoted $\Lambda_{u,t}$.
    Consider such a cycle $\Lambda_{u_1,t_1} \dep \Lambda_{u_2,t_2} \dep \ldots \dep \Lambda_{u_r,t_r} = \Lambda_{u_0,t_1}$,
    and imagine a successful run of the RTN in which the state $t_1$ appears.
    The history of any such run must look as follows:
    $$
      \sstart \move[\delta_1] t_1\con\sigma \move[\delta_2] \sigma \move[\delta_3] \epsilon
    $$
    (note that this time $\delta_1$, $\delta_2$ and $\delta_3$ might contain shifts).
    Now, for each $i$, the dependency $\Lambda_{u_i,t_i} \dep \Lambda_{u_{i+1},t_{i+1}}$ is witnessed by $t_i \move[\eta_i] t_{i+1} \con u_i$.
    At the same time, each $u_i$ is guaranteed to be nullable,
    which implies that the original run could be replaced with an equally successful
    \begin{align*}
      \sstart &\move[\delta_1] t_1\con\sigma \move[\eta_1] t_2 \con u_1 \con \sigma \move[\eta_2] \ldots \move[\eta_{r-1}] t_0 \con u_{r-1} \cons u_1 \con \sigma \move[\delta_2] \\
      &\move[\delta_2] u_{r-1} \cons u_1 \con \sigma \move[\Eps] \sigma \move[\delta_3] \epsilon.
    \end{align*}

    Iterating this trick, one obtains an infinite set of accepting transition sequences for the same input.
    Moreover, as the insertions on the left and right of $\delta_2$ need to be balanced (this is the reason we call such dependencies left-right),
    the language of those accepting runs is not regular (it is obviously context-free).

    Together, either every state $t_i$ on such cycle is useless and could be removed, or the RTN is \emph{nonregularly infinitely ambiguous}.
    One could argue that such grammars could be immediately rejected during parser generation%
    \footnote{That is what our proof-of-concept implementation does.}
    (the check for this situation is fairly simple).
    If one decides to handle them, the type $\type\Delta$ needs to be equipped with some mechanism for handling context-free (nonregular) structures.
  \item
    There is a cycle of joint right and left-right dependencies between the nodes $\D{s}{\epsilon}$.
    Again, one could argue that either the whole such cycle is useless, or the RTN is infinitely ambiguous.
    However, if all such cycles consist of right dependencies only, the situation is somewhat less complicated:
    mutual dependencies exist \emph{only} between the variables $\D{s}{\epsilon}$, and all those dependencies are left-linear.
    In such case, the languages in question are regular (albeit still infinite) and the only additional feature required of the type $\type\Delta$ is the Kleene star operator.
\end{enumerate}

From the discussion above we can see that, unless one needs to handle infinitely ambiguous grammars,
the requirements on $\type\Delta$ are quite small:
the \texttt{primitive} operaton in API~\ref{api:transition-languages} only needs to handle individual transitions.

This leaves a wide spectrum of implementation possibilities.
As mentioned previously, one can instantiate $\type\Delta$ to $\type{B}$, obtaining a recognition algorithm.
On the other extreme, one can use a free algebra over \texttt{concatenate} and \texttt{union}---%
the result will be an encoding of all parse trees, very similar in spirit to the SPPF structure employed by polynomial-time generalized parsers.
Both of the above require $O(1)$ time per operation on $\type\Delta$.

There are, however, other options.
In particular, one may keep singleton languages only, replacing larger ones by a unique ambiguity marker.
The result is a de-generalized parser, finding the unique parse tree for its input (or reporting ambiguity).
If the languages are remembered up to (and including) the first ambiguity, the location and nature of the ambiguity can be reported to the user.

Further, one can introduce a partial ordering of transitions,
and "prune" each intermediate language to contain only lexicographically-maximal transition sequences.
This achieves transition prioritization,
useful among others for expressing operator associativity and precedence,
and eager/lazy versions of repetition constructs.
All of this happens without any modification to the parsing algorithm itself!

\section{Arbitrary semiring parsing}
\label{sec:semiring}

If we restrict our attention to finitely ambiguous RTNs,
the API we require of transition languages $\type\Delta$ is simply the one of a semiring.
In fact, the alternatives we sketched in Section~\ref{sec:transition-api} are just special semirings.
Substituting $\type\Delta$ with an arbitrary semiring encounters, however, a hidden limitation---%
the algorithm, as presented above, requires semiring addition to be idempotent.
Lifting this limitation would allow faster calculation of other useful features (cf.~\cite{semiring}), e.g., the number of distinct parse trees.
We will now show how it can be achieved.

Let us then assume that $(\type\Delta, {\cup}, \varnothing, \con, \epsilon)$ is an arbitrary semiring,
and that a \bld{valuation function} $\val\relbar : D \to \type\Delta$ assigns values to individual transitions.
Valuation can be naturally extended to a semiring homomorphism
between the set of languages over $D$ (i.e., subsets of $D^*$) and the semiring $\type\Delta$:
$$
  \val\Delta \deq \bigcup_{d_1 \ldots d_k \in \Delta} \val{d_1 \rangle \cons \langle d_k}.
$$

We would like our parsing algorithm to return, for an input $\alpha$, the value
$$
  \val{\{ \delta\con\eta: (\eta,\rev\delta) \in \N{\Phi_\alpha} \}}.
$$
Valuation can be pre-computed for the "building blocks" of the languages of transitions
(individual transitions, nullings of individual states, and whatever is needed for the derivatives of $\N{\C{s}}$),
and then, being a homomorphism, pushed through the algorithm to its final result.
It might happen, however, that at some point of the computation,
the same sequence of transitions appears as a member of both arguments of a union operation---%
we will say that such computation is \bld{redundant}.
If $\cup$ is idempotent (as set union is), this does not hurt. If it is not, such a sequence will be "counted twice".
Therefore, to make the algorithm work correctly over an arbitrary semiring, we must assure that such a situation never takes place.

Redundancy may be potentially introduced wherever a union of transition languages appears in the algorithm, namely when:
\begin{itemize}
  \item (pre)computing the valuation of transition relations and languages needed by the derivatives of $\N{\C{s}}$,
  \item computing derivatives of degree $k$---if the same pair appears in more than one of the $k$ components, or
  \item composing $\Psi\out$ or $\Psi_\texttt{aux}$ in Algorithm~\ref{alg:phase-parse-final}.
\end{itemize}
The first case we have already avoided---%
in the equations considered in Section~\ref{sec:transition-api},
the sets of transition sequences in each member of each union are guaranteed to be disjoint.
The second is an additional requirement to be put on the API for guided relations---%
a requirement which the representation proposed in Section~\ref{sec:dag} also already satisfies.
The remaining part of this section addresses the third case.

As it stands, Algorithm~\ref{alg:phase-parse-deep} (and thus the equivalent Algorithm~\ref{alg:phase-parse-final}) is redundant in two ways.
Firstly, some sequences of transitions (and their corresponding configurations) may be added to $\Psi_\texttt{out}$ both in lines~4 and~7.
Secondly, when it is used iteratively by Algorithm~\ref{alg:main-parse},
some sequences of transitions have multiple ways in which they can be split between successive iterations.

It turns out that a careful adjustment of the nulled atomic closure relations $\N{\C{s}}$ is enough to completely get rid of redundancy---%
neither the top-level parsing algorithm, nor the API (or implementation) of guided relations $\type\Psi$ needs to be modified.
Formally, we change the meaning of call-reduce closure never to include the empty configuration:
$$
  \C{\Psi} \deq \{(\gamma', \eta\con\delta) :\ \gamma \move[\rev\eta \in \Eps] \gamma' \neq \epsilon, (\gamma,\delta) \in \Psi \},
$$
and restrict null closure not to touch the top of the configuration:
$$
  \gamma_1 \con s \con \gamma_2 \in \Psi, \quad \gamma_1 \neq \epsilon \quad \implies \quad \gamma_1 \con \N{s} \con \gamma_2 \subseteq \N\Psi.
$$

This modification silently alters the semantics of both the main algorithm and phase calculation.
We should then prove that they remain correct.

\todo[inline]{TODO: Re-prove Propositions~\ref{prop:main-parse} and~\ref{lem:phase-parse}}

Finally, we need to show that the atomic closure relations remain regular.
However, the only change we have effectively made to them is requiring each guided configuration to begin with a state.
Therefore, the nondeterministic automaton witnessing regularity can be obtained by:
\begin{itemize}
  \item making the initial state non-accepting: $$\N{\C{s}} \cap (D^*{\times}D^*) = \varnothing,$$
  \item using the non-guided derivatives from the initial state: $$\N{\C{s}}\deriv{u} = \{ (\epsilon, \N{\P{s}{u}} ) \}, \text{ and}$$
  \item leaving everything else as in the guided variant.
\end{itemize}

Thus, getting rid of redundancy not only generalizes relational parsing to arbitrary semirings,
but has also an additional positive effect of making the implementation simpler and more effective,
as fewer edges are created in the dag structure for $\type\Lambda$.

\section{Complexity}
\label{sec:complexity}

We will now show that the input-related complexity of our algorithm from Sections~\ref{sec:recognition} and~\ref{sec:parsing},
run using the representations of languages given in Section~\ref{sec:representations},
matches that of Earley parser~\cite{earley} with Leo's improvement~\cite{leo}.
To our best knowledge this is the best among state-of-the-art generalized parsers.

\begin{proposition}
  \label{prop:complexity-cubic}
  With the proposed implementation, a sequence of $n$ "constructive" operations from API~\ref{api:relations} and a single call to \textup{\texttt{epsilon}} takes time $O(n^3)$.
\end{proposition}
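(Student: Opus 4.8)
The plan is to amortize the cost of the $n$ constructive operations over the structure of the DAG they build, and then to bound the single \texttt{epsilon} call separately.

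First I would fix the DAG viewpoint from Section~\ref{sec:dag}: after the $n$ operations we have vertices $\Psi_1,\ldots,\Psi_n$, each carrying a set of outgoing edges labeled with pairs $(\Lambda,\Xi)$ (with $\Xi$ an atomic relation, represented by a pair of states per the previous section) pointing to earlier vertices, plus a vertex label in $\type\Lambda$. The key structural fact, already established in Proposition~\ref{prop:relation-shape} together with the merging-of-duplicates remark, is that each $\Psi_i$ has at most $i\,|\type\Xi| = O(i)$ outgoing edges, hence $O(n)$ per vertex and $O(n^2)$ edges in total (the grammar-dependent factor $|\type\Xi| = O(|S|)$ is a constant here). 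I would then classify the operations: \texttt{root}, \texttt{union}, and the two \texttt{prepend} variants each create one new vertex and $O(n)$ new edges by copying/merging edge sets, so $O(n)$ time each and $O(n^2)$ total. The interesting one is \texttt{derivative}: following the proof of Proposition~\ref{prop:relation-shape}, computing $\Psi_i\deriv[flat]{s}$ walks over the $O(n)$ outgoing edges of $\Psi_i$, and for each edge $(\Lambda_j,\Xi_j,i_j)$ takes the atomic derivative $\Xi_j\deriv{s}$ (constant size and time, since atomic relations are $(|S|+1,|S|)$-regular and represented by state pairs), producing $O(1)$ new edges; the branch where $\epsilon \in \Xi'$ recurses into the decomposition of $\Psi_{i_j}$. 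So a single \texttt{derivative} may cascade down a chain of $\epsilon$-final vertices.

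The main obstacle is precisely bounding this cascade: naively, one \texttt{derivative} could recurse through $\Theta(n)$ vertices each with $\Theta(n)$ edges, giving $\Theta(n^2)$ per operation and $\Theta(n^3)$ just for derivatives — which is actually still within budget, but I want a cleaner accounting to be safe and to see where cubic genuinely enters. The clean way is: each \texttt{derivative} call produces a new vertex $\Psi_{\mathrm{new}}$ whose edge set is the union, over all vertices $v$ reachable from $\Psi_i$ by a path of $\epsilon$-final edges (for the relevant symbols), of $O(1)$ edges derived from each outgoing edge of $v$. The number of such edges is $O(\sum_v \deg(v)) = O(n^2)$ in the worst case, and this is the dominant term — so a single \texttt{derivative} costs $O(n^2)$, and $n$ of them cost $O(n^3)$. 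All other operations are dominated by this. The place to be careful is that the recursion of \texttt{derivative} must not revisit the same vertex more than once per call (memoize which vertices have been expanded within the current \texttt{derivative}), which keeps each call at $O(n^2)$ rather than exponential; I would state this explicitly.

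Finally I would handle the concluding \texttt{epsilon} call from API~\ref{api:relations} (used in Algorithm~\ref{alg:main-parse}): it triggers one \texttt{derivative} by $\sstop$ — $O(n^2)$ by the above — and then collects $\{\delta\con\eta : (\eta,\rev\delta)\in\Psi',\ \eta\in\Eps\}$ over the resulting vertices, which amounts to one \texttt{flatten} per vertex label in $\type\Lambda$. By the discussion of $\type\Lambda$ in Section~\ref{sec:transition-api}, a \texttt{flatten} on a node runs in time linear in the size of the $\type\Lambda$-DAG feeding it, i.e., linear in the number of \texttt{prepend}/\texttt{union} operations that contributed to it; summed over all $O(n)$ vertices this is $O(n^2)$ (using that the total number of such transition-relation operations across the whole run is $O(n^2)$, since each of the $n$ top-level operations contributes $O(n)$ of them). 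Hence the \texttt{epsilon} call is also $O(n^2)$, and the overall bound is $O(n^3)$. I would close by noting the constants hidden here are polynomials in $|S|$ and the grammar size, which is the grammar-dependent complexity deferred to later discussion.
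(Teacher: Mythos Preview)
Your argument is correct and follows essentially the same route as the paper's proof: bound each vertex to $O(n)$ edges, observe that \texttt{union} and \texttt{prepend} are linear while a single \texttt{derivative} may touch $O(n^2)$ edges, and sum over $n$ operations. The only discrepancy is that you treat the final \texttt{epsilon} call as involving a derivative by $\sstop$, whereas in the paper's representation it is simply a flatten of the single vertex label (the derivative by $\sstop$ belongs to Algorithm~\ref{alg:main-parse}, not to the API operation itself); this over-elaboration is harmless, since your $O(n^2)$ bound for it is still within budget.
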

\begin{proof}
  As noted before, each language (relation) in such a sequence can be represented by at most $n\,|\type\Xi| = n(|S|+1) = O(n)$ pairs.
  When computing \texttt{derivative}, one possibly needs to examine (and combine) the representations of all $\Psi_{i_j}$ "directly referenced" by the current $\Psi_n$.
  This takes time $O(n^2)$.
  Running times of \texttt{union} and \texttt{prepend} are linear in the size of their arguments.
  Summing over the sequence of $n$ operations gives the desired $O(n^3)$.
  Returning \texttt{epsilon} involves flattening a single label, and thus contributes only a constant additional factor to the total time.
\end{proof}

\begin{corollary}
  Parsing an input of length $n$ takes time $O(n^3)$.
\end{corollary}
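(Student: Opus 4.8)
The plan is to read off from Algorithm~\ref{alg:main-parse} that parsing an input of length $n$ amounts to a sequence of $O(n)$ ``constructive'' operations from API~\ref{api:relations} followed by a constant number of \texttt{epsilon} calls, and then to invoke Proposition~\ref{prop:complexity-cubic}. The atomic closures $\N{\C{s}}$, the transition primitives, and the derivatives needed for them are all input-independent and can be computed once at parser-generation time; so I would first observe that they contribute nothing to the time of parsing a particular input and set them aside.

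What remains is to count the operations performed on the representation $\type\Psi$ of guided relations while consuming $\alpha$. Line~1 of Algorithm~\ref{alg:main-parse} uses one \texttt{root} and two \texttt{prepend} calls. Each execution of line~3 is one phase, realized by Algorithm~\ref{alg:phase-parse-final}; its loops range only over the grammar-fixed sets $D_a$ and $S$ and over the outputs of \texttt{derivative}, of which there are at most the fixed degree $k$, and each iteration issues a constant number of \texttt{derivative}, \texttt{prepend} and \texttt{union} calls. Hence a phase costs $O(1)$ API~\ref{api:relations} operations with respect to $n$ (the hidden constant depending only on the RTN), and the $n$ phases together cost $O(n)$ of them. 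Finally, lines~4--7 issue one more \texttt{derivative} call (line~5) together with at most $k$ \texttt{epsilon} calls (line~6) and $O(1)$ bookkeeping on the transition types. Collecting everything, parsing is a length-$O(n)$ sequence of constructive API~\ref{api:relations} operations together with $O(1)$ \texttt{epsilon} calls.

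It then remains to apply Proposition~\ref{prop:complexity-cubic}. Writing the number of constructive operations as $cn$ for a grammar-dependent constant $c$, that proposition bounds their total running time by $O((cn)^3) = O(n^3)$; and, as noted in its proof, each \texttt{epsilon} call merely flattens a single label and so adds only a constant factor, so the constantly-many trailing \texttt{epsilon} calls do not change the bound. This yields the claimed $O(n^3)$.

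I do not expect a genuine obstacle here: the argument is bookkeeping layered on top of Proposition~\ref{prop:complexity-cubic}. The two points deserving a word of care are (i) cleanly separating the input-independent precomputation (of atomic closures and transition primitives) from the per-input cost, and (ii) reconciling the single \texttt{epsilon} call in the statement of Proposition~\ref{prop:complexity-cubic} with the constantly-many such calls issued by Algorithm~\ref{alg:main-parse}, which the ``constant additional factor'' remark in that proof already handles.
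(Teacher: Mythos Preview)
Your proposal is correct and follows exactly the approach the paper intends: the corollary is stated immediately after Proposition~\ref{prop:complexity-cubic} without a separate proof, and your argument is precisely the bookkeeping (that Algorithm~\ref{alg:main-parse} issues $O(n)$ constructive API operations plus a bounded number of \texttt{epsilon} calls) needed to read it off from that proposition.
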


To be able to give better complexity bounds for restricted classes of grammars,
we will make the standard natural assumption that each state in the grammar is \bld{productive},
i.e., yields some (possibly empty) terminal word.
Detecting and removing non-productive states can be easily done during parser generation.

The consequence of this assumption is that every reachable \emph{configuration} is also productive.
This allows us to prove the following:

\begin{proposition}
  \label{prop:complexity-quadratic}
  When the recursive transition network is unambiguous,
  parsing an input of length $n$ takes time $O(n^2)$.
\end{proposition}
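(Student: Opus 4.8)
The plan is to sharpen the accounting already carried out in the proof of Proposition~\ref{prop:complexity-cubic}. Parsing a length-$n$ input with Algorithm~\ref{alg:main-parse} issues only $O(n)$ operations from API~\ref{api:relations} (a grammar-bounded number per phase in Algorithm~\ref{alg:phase-parse-final}, plus the extraction at the end), and every relation occurring along the way is stored with at most $n\,|\type\Xi| = O(n)$ triples. As noted there, \texttt{union} and \texttt{prepend} run in time linear in the size of their arguments, hence $O(n)$ each, and the closing \texttt{epsilon} is no worse; the only operation whose per-call cost can reach $\Theta(n^2)$ is \texttt{derivative}. So it suffices to prove that, for an unambiguous RTN, the \emph{aggregate} cost of all \texttt{derivative} calls performed during the parse is $O(n^2)$.

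Next I would dissect a single \texttt{derivative}. Following the inductive step of Proposition~\ref{prop:relation-shape}, computing $\Psi\deriv[flat]{s}$ does $O(1)$ grammar-work per triple of the decomposition of $\Psi$ (total $O(n)$), contributes $O(n)$ "direct" triples, and, for each triple $(\Lambda_j,\Xi_j,i_j)$ of $\Psi$ whose atomic part $\Xi_j$ is exhausted by $s$ — that is, whenever $s$ triggers a completion — splices in a prefixed copy of the \emph{already stored} decomposition of $\Psi_{i_j}$; duplicate triples are then merged at $O(1)$ apiece. Thus one \texttt{derivative} costs $O(n)$ plus a "splicing cost" $\sum_{j\ \mathrm{completing}}\lvert\mathrm{decomp}(\Psi_{i_j})\rvert$, and the $O(n)$ term, summed over the $O(n)$ calls, already contributes $O(n^2)$. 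Everything therefore reduces to bounding the total splicing cost by $O(n^2)$. (For a general RTN this fails precisely because the same triples get re-spliced across many phases, which is what produces the cubic bound.)

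The crux — and the step I expect to be the real obstacle — is an Earley-style \emph{unique-attribution} argument for unambiguous grammars: each triple created by a splice should be attributable to a completion event that is uniquely determined by the triple it produces, so that the total number of splice operations is dominated by the total number of triples ever created, namely $O(n)$ per node over $O(n)$ nodes, i.e. $O(n^2)$. Concretely, a splice performed by the \texttt{derivative} at phase $\ell$ copying from $\Psi_{i_j}$ records a recursive sub-derivation opened at input position $i_j$ and just closed at position $\ell$, together with the RTN state to which control returns. Using the Canonical decomposition lemma and Lemma~\ref{lem:track-back} to "track back" such an event to an honest non-shift transition sequence, and the productivity assumption — every reachable configuration, and hence the stack surrounding this sub-derivation, can be completed to an accepting run on the remaining input — two distinct splices yielding the same triple would yield two distinct accepting transition sequences for $\alpha$, contradicting unambiguity of the RTN. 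Pinning down exactly which data of a splice is forced by its output triple is the delicate part; once it is settled, combining it with the reduction above gives the $O(n^2)$ bound, which degrades gracefully to the cubic bound of Proposition~\ref{prop:complexity-cubic} as soon as unambiguity is dropped.
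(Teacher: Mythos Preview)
Your proposal is correct in substance and rests on the same core idea as the paper, but you take a noticeably more elaborate route. You decompose the cost of each \texttt{derivative} into an $O(n)$ ``direct'' part plus a ``splicing'' part, and then argue via a unique-attribution / Earley-style completion argument that the total splicing work is $O(n^2)$; the attribution step, which you correctly flag as delicate, amounts to showing that two splices producing the same output triple would witness two distinct derivations of the same configuration, hence ambiguity.

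The paper short-circuits this bookkeeping. It observes directly that, for an unambiguous RTN, \emph{every} union performed---both the explicit \texttt{union} calls and the internal merges inside \texttt{derivative}---is between disjoint sets, because two distinct ways of producing the same edge would give two distinct viable transition sequences reaching the same configuration (and, by productivity, two accepting runs on the same input). Disjointness means the merge step never discards anything, so each operation's running time is proportional to the number of edges it \emph{creates}. Summing, the total time is linear in the size of the final DAG, which has $O(n)$ vertices each of out-degree $O(n)$, i.e.\ $O(n^2)$.

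Your ``unique attribution'' is precisely the contrapositive of the paper's ``all unions are disjoint''; once you see that, the splicing-cost accounting becomes unnecessary, since time${}={}$output size already gives the bound. The advantage of the paper's formulation is brevity; the advantage of yours is that it makes the connection to Earley's quadratic-on-unambiguous argument explicit.
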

\begin{proof}
  Each way a configuration can "enter" into a parsing language corresponds to some viable sequence of RTN transitions.
  In particular, should any configuration have more than one such way, the grammar would be ambiguous.
  This implies that, for unambiguous grammars, all unions---%
  those requested explicitly and those performed internally by \texttt{derivative}, are in fact disjoint.
  This makes each operation run in time proportional to the size of its result (i.e., the number of edges of the vertex created).
  Consequently, the complexity of the \emph{whole sequence} of operations is linear in the size of the complete DAG constructed.
  The result follows.
\end{proof}

It turns out that for LR-regular grammars (this class contains LR(k) for all k, and thus most deterministic grammars),
the size of the DAG is always linear in the input length:

\begin{proposition}
  \label{prop:complexity-linear}
  When parsing according to a recursive transition network which is LR-regular,
  the outdegree of each DAG vertex created is bounded by a constant.
\end{proposition}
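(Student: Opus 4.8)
The plan is to show that the out-degree of each DAG vertex $\Sigma_n$ (equivalently $\Psi_n$) is bounded by a constant that depends only on the grammar. Recall from Propositions~\ref{prop:language-shape} and~\ref{prop:relation-shape} that the out-edges of $\Sigma_n$ are labeled by atomic languages $\Upsilon_j \in \type\Upsilon$ (of which there are at most $|S|+1$ per target) and point to earlier vertices $\Sigma_{i_j}$. Since duplicate edges with the same label and target are merged, bounding the out-degree amounts to bounding the number of \emph{distinct targets} $\Sigma_{i_j}$ appearing in the decomposition of $\efree{\Sigma_n}$. I would therefore aim to prove: for an LR-regular RTN, each $\efree{\Sigma_n}$ decomposes as $\bigcup_j \efree{\Upsilon_j} \con \Sigma_{i_j}$ where the set $\{i_j\}_j$ has bounded size.

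First I would recall the definition of LR-regularity and its operational meaning: there is a regular partition (a finite-state right-to-left scan of the remaining input) that resolves all parsing decisions. In our setting the key consequence should be that, at each phase, the set of reachable configurations — viewed ``from the top'' — collapses to boundedly many essentially-different continuations. Concretely, I would track which vertices $\Sigma_i$ can be reached by a suffix-reference chain from the current $\Sigma_n$, and argue that two such vertices $\Sigma_i, \Sigma_{i'}$ with $i, i'$ both ``recent'' (say within a window whose length depends on the LR-regular automaton) and that are indistinguishable by the future input must in fact contribute redundant edges that get merged. The heart of the argument is that the LR-regular condition forces the ``stack tails'' being referenced to come from a bounded set of congruence classes.

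The cleanest route is probably to go through the call-return structure directly rather than through the abstract DAG. Each edge of the DAG ultimately records a place where a state $s$ was pushed and a pointer to ``what lay below it.'' I would argue that, along any single phase, the newly created vertex references only: (i) vertices created in the immediately preceding phase (bounded number, by induction), and (ii) vertices reachable through call-reduce closures, which are controlled by the atomic closures $\N{\C{s}}$ — and these are $(|S|+1,|S|)$-regular by the Proposition following Corollary~\ref{lem:closure-regular}, hence contribute only $O(|S|^2)$ fresh references independent of $n$. The LR-regular hypothesis is what rules out the one remaining danger: that the ``preceding phase'' itself has unbounded out-degree, i.e. that the parser keeps alive a growing frontier of genuinely distinct configurations. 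Under LR-regularity, the bounded-lookahead (in the regular-partition sense) determinacy means this frontier is, up to the edge-merging, of bounded size; I would make this precise by assigning to each live suffix-reference its LR-regular state class and showing that same-class references with the same label collapse.

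The main obstacle I expect is making the informal phrase ``up to edge-merging, the frontier is bounded'' into a rigorous invariant that is preserved across phases — in particular, formalizing the correspondence between DAG vertices reachable as suffixes and equivalence classes of the LR-regular right-to-left automaton, and checking that \texttt{derivative} does not blow up the count. A secondary subtlety is handling $\epsilon$-productions/nullable states cleanly, since the guided-configuration machinery inserts guiding transitions; but since $\N{\C{s}}$ remains regular with grammar-only bounds, this should only affect constants. I would structure the final write-up as: (1) define the suffix-reachable set of a vertex; (2) state the invariant bounding its size in terms of LR-regular classes; (3) verify the invariant is established at $\Sigma_\epsilon$ and preserved by one phase (using Algorithm~\ref{alg:phase-parse-final}, the atomic-closure regularity, and the LR-regular determinacy); (4) conclude that out-degree, being bounded by the number of distinct targets, is $O(|S|^2 \cdot c)$ for the LR-regular constant $c$.
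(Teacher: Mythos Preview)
Your overall target---bounding the number of distinct back-references $\{i_j\}$ in the decomposition of $\efree{\Sigma_n}$---is the right quantity, but the mechanism you propose for the bound does not work, and the paper's route is quite different from yours.

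The central gap is your collapse step. You write that you would ``assign to each live suffix-reference its LR-regular state class and show that same-class references with the same label collapse.'' But the DAG's merging rule is purely structural: an edge $(\Upsilon, \Sigma_i)$ is merged with $(\Upsilon', \Sigma_{i'})$ only when $\Upsilon = \Upsilon'$ \emph{and} $\Sigma_i, \Sigma_{i'}$ are the \emph{same vertex}. The algorithm knows nothing about LR-regular congruence classes, so two distinct vertices in the same class will simply not be merged. What you would actually need is a proof that two such distinct vertices \emph{cannot both be live targets} from a phase-$j$ vertex---a non-existence statement, not a collapsing one. Your item (ii) has a related confusion: the $(|S|+1,|S|)$-regularity of $\N{\C{s}}$ bounds the set of \emph{labels}, not the set of \emph{targets}, so it does not by itself limit the number of fresh references.

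The paper's argument avoids building any phase-to-phase invariant. Instead it fixes a target vertex $\Sigma'$ (born at phase $i$) and asks: from which later phases $j$ can a derivative vertex acquire an edge to $\Sigma'$? Unwinding the derivative chain, such an edge forces the existence of a call $\call{v,w,s}$ with $w$ generating exactly the input segment between positions $i$ and $j$. Leo's Lemma~4.7 (from~\cite{leo}) says precisely that, for LR-regular grammars, each $j$ admits only a bounded number of such $i$'s. Since each phase creates $O(1)$ vertices, this bounds the out-degree of the phase-$j$ derivative vertex. So the paper offloads the combinatorial heart of the argument to an existing, citable lemma about LR-regular grammars; your proposal would, at best, end up reproving that lemma inside the invariant-preservation step, and as written does not get there.
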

\begin{proof}
  Let us classify each vertex of the DAG depending on the kind of operation which has created it.
  Note that \texttt{prepend} with an $\epsilon$-free language always creates a single edge to its argument,
  while \texttt{union} and \texttt{derivative} only copy existing edges, with labels possibly altered.
  Analyzing our phase computation (Algorithm~\ref{alg:phase-recognize-deep}),
  one can see that only \texttt{derivative} vertices will ever have incoming edges,
  and that only edges going out of such vertices need to be considered---%
  \texttt{prepend} vertices always have outdegree 1,
  while \texttt{union} vertices combine a constant number of \texttt{prepend} vertices.

  Denote by $\Sigma_i$ the \texttt{union} vertex representing the parsing language computed in the $i$-th phase of the algorithm.
  Consider a vertex $\Sigma'$ whose first incoming edge is created by one of the members of such \texttt{union},
  and let $\N{\C{u}}$ be the label of this edge.
  This implies that $\Sigma_i \supseteq \efree{\N{\C{u}}} \con \Sigma'$.

  Now track the progress of new incoming edges being added to $\Sigma'$,
  resulting with it being linked from either $\Sigma_j\deriv{s}$ or $\Sigma_j\deriv{s \con s'}$ for some $j \geqslant i$ and $s, s' \in S$.
  The labels must have been obtained by taking successive derivatives of the original one.
  Let $\tau$ denote this sequence of derivatives.

  In the first of the cases above, it implies that $\efree{\N{\C{u}}\deriv{\tau}} \con \Sigma' \subseteq \Sigma_j\deriv{s}$.
  As our DAG contains no useless edges, this inclusion must not be trivial:
  there must exist $\tau \con s \con \sigma \in \N{\C{u}}$ and $\sigma' \in \Sigma'$ for which $s \con \sigma \con \sigma' \in \Sigma_j$.
  Looking at the sequence of transitions witnessing the first of these three conditions,
  and splitting it immediately before $s$ enters the configuration,
  one can find $v \con \sigma \con \sigma' \in \Sigma_i$ and $\call{v,w,s} \in D$
  such that $w$ generates the part of the input between positions $i$ and $j$.
  Yet, by an argument of Leo (\cite{leo}, Lemma 4.7),
  for LR-regular grammars each $j$ allows only a constant number of different $i$'s where this is possible.

  A similar reasoning can be carried over for the vertices $\Sigma_j\deriv{s \con s'}$.
  Together with the fact that $O(1)$ vertices are created for each input position,
  one obtains the desired constant bound on the out-degree of every DAG vertex.
\end{proof}

\begin{corollary}
  When the recursive transition network is LR-regular,
  parsing an input of length $n$ takes time $O(n)$.
\end{corollary}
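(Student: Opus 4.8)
The plan is to combine Proposition~\ref{prop:complexity-linear} with the observation that only $O(1)$ DAG vertices are created per input position, and that each operation runs in time proportional to the number of edges of the vertex it creates. First I would recall that a single phase (Algorithm~\ref{alg:phase-recognize-deep} plugged into Algorithm~\ref{alg:main-recognize}) performs, for each shift transition in $D_a$ and each component of a derivative, a constant number of \texttt{prepend}, \texttt{union}, and \texttt{derivative} calls; hence the number of fresh vertices created during the $i$-th phase is bounded by a grammar-dependent constant, independent of $i$ and of $n$. Over the whole input this gives $O(n)$ vertices in total.

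Next I would invoke Proposition~\ref{prop:complexity-linear}: every vertex has outdegree bounded by a grammar-dependent constant, so the entire DAG has $O(n)$ edges. The key cost accounting, already used in the proof of Proposition~\ref{prop:complexity-quadratic}, is that \texttt{prepend} with an $\epsilon$-free argument creates a single edge, \texttt{union} and \texttt{derivative} only copy (with possibly altered labels) the edges of the vertices they reference, and checking or propagating the $\epsilon$ bit is $O(1)$ per edge. Therefore the work done to create any vertex is proportional to its outdegree, which is $O(1)$; summing over all $O(n)$ vertices yields total running time $O(n)$ for all the phase operations. The final \texttt{epsilon} check (or, in the parsing variant, extraction via \texttt{derivative} at $\sstop$ followed by flattening a single label) adds only a constant.

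I would close by noting that the same bound carries over to full parsing: by Proposition~\ref{prop:relation-shape} the relational DAG has the identical shape, with edges merely carrying transition-relation labels $(\Lambda,\Xi)$; as long as the type $\type\Delta$ supports its operations in $O(1)$ (which holds for the recognition instantiation $\type{B}$, the free-algebra SPPF-style instantiation, and the other options discussed in Section~\ref{sec:transition-api}), each edge-copying step remains $O(1)$, and the linear bound is preserved. The main obstacle is really just making precise the "$O(1)$ vertices per input position" claim for the phase algorithm in its final form (Algorithm~\ref{alg:phase-recognize-deep}, and \ref{alg:phase-parse-final} in the parsing case): one must check that pulling the nullable-skip loops to the top level, and iterating over all $s' \in S$, still contributes only a constant — which it does because $|S|$, $|D|$, and the derivative degree $k$ are all grammar constants. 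Everything else is the bookkeeping already established for Propositions~\ref{prop:complexity-quadratic} and~\ref{prop:complexity-linear}.
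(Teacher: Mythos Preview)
Your proposal is correct and essentially reproduces the paper's intended (unwritten) argument: combine Proposition~\ref{prop:complexity-linear}'s constant outdegree bound and its $O(1)$-vertices-per-phase observation with the per-operation cost accounting from Proposition~\ref{prop:complexity-quadratic} to get $O(n)$ total time. One small tightening: the step ``work done to create any vertex is proportional to its outdegree'' in Proposition~\ref{prop:complexity-quadratic} actually relies on all unions being disjoint, i.e.\ on unambiguity---so you should either mention that LR-regular grammars are unambiguous, or (equivalently, and perhaps cleaner here) note that since \emph{every} vertex, including the inputs to each \texttt{union}/\texttt{derivative}, already has $O(1)$ outdegree by Proposition~\ref{prop:complexity-linear}, each operation costs $O(1)$ regardless of disjointness.
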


\section{Memoization}
\label{sec:memoization}

We find the mathematical clarity of our algorithm to be a virtue on its own.
In particular, it opens a clear way of investigating potentially more effective data representations,
and making use of language-theoretic algorithms, like optimization of nondeterministic automata.
However, an aspect we find especially beneficial is the immutability of the intermediate objects computed during parsing.
This is because it allows memoizing and reusing the results of computation.
We report on this aspect in this section, beginning with recognition and moving on to arbitrary semiring parsing.
Experimental evaluation of all these techniques is given in Section~\ref{sec:evaluation}.

\subsection{Trivial memoization}

The most obvious way in which one can reuse the results of previous phases of computation is the following:
if $\N{\Sigma_\alpha} = \N{\Sigma_\beta}$ then $\N{\Sigma_{\alpha \con a}} = \N{\Sigma_{\beta \con a}}$.
To build an intuition on when configuration languages at different input positions can be equal,
imagine the structure of a standard, "hierarchical" programming language,
in which syntactic structures (like classes, functions, code blocks, parenthesized expressions, etc.) nest within one another.
A grammar of such a language will most likely contain productions corresponding to particular structures.
In turn, a parser configuration at an input position where some structures are "currently open"
will contain a sequence of states from those productions.
Of course, even for deterministic grammars, there can be more than one reachable configuration at a given point.
However, the general intuition holds: input locations with identical "nesting structure" will often share configuration languages.
This happens both within a single input, and across multiple inputs.

\subsection{Dominator-based memoization}

To further develop our inuition on configuration languages,
imagine a syntactic language structure requiring a specific closing terminal (e.g., \texttt{\}}).
Looking at the parsing stack, the state (say, $s$) corresponding to such structure being "currently open" cannot be reduced until the closing terminal is encountered.
During this period, every configuration will have the form $\sigma \con s \con \tau$, with varying $\sigma$ but identical $\tau$.
Our parser will not even look at $\tau$ until it is uncovered by taking a derivative by $s$.
This means that computation results for one $\tau$ can be "transplanted" on top of another,
as long as they have at least one state "covering" this bottom part of the stack.

To make use of this observation we need to significantly complicate our representation of languages.
A language will now be represented as a \emph{stack} of DAG vertices,
its semantics being the concatenation (top-to-bottom) of languages corresponding to individual vertices.
To keep the languages deeper in the stack "covered", we require all but the bottom vertices \emph{not} to be final (i.e., to denote $\epsilon$-free languages).
Should we ever attempt to push a final vertex on a nonempty stack, we instead merge it with the current top,
computing the concatenation of their corresponding languages.

During each phase, we keep track of how long a prefix of the stack we have seen.
Note, that uncovering $k$ deeper vertices requires taking derivatives by at least $k$ states, and a single phase takes derivatives by at most two of them.
Together with the potential merge to keep the $\epsilon$-free invariant, this means that a phase will examine at most three top stack entries.
When a phase is done, its result (a single DAG vertex $v$, about to be placed on top of the part of the stack which has not been touched) is first \bld{factorized}:
the fragment of the DAG between $v$ and all final vertices reachable from it is split at all their common dominators.
The resulting factors are represented by individual vertices, and these are pushed on the stack instead of $v$.

Such partial phase computations are held in a trie,
which maps a terminal and a prefix (with at most three elements) of the stack into a prefix it should get replaced with.
Whenever the top of the stack is found in the trie, phase computation requires only a handful of simple stack operations.

\subsection{Memoizing arbitrary semirings}

We have shown how to effectively use memoization in our \emph{recognizer}.
It is possible to use the same techniques for parsing (over an arbitrary semiring).
However, the quality of memoization drops rapidly as the semiring domain grows.
It is best exemplified in the "reachest", free semiring, distinguishing all transition languages.
In this case, each transition sequence occurring in the parsing relation $\Phi_\alpha$ encodes the parsed input $\alpha$
(the input can be read off the \texttt{shift} transitions).
This means that complete parsing relations \emph{never} repeat while parsing a single input.
The situation is not that bad with dominator-based memoization, but the efficiency is still unsatisfactory.

Effective memoization of full-fledged relations is made possible by combining two observations.
First, the shapes of the DAGs constructed while parsing are exactly the same as during recognition%,
\footnote{This assumes identical "shapes" of atomic closures.
When the grammar or atomic automata are optimized, these shapes might depend on the chosen semiring.
Still, shape repeatability is in practice very high regardless of that choice.}
and thus \emph{do} repeat very often---it is the labels that do not.
Second, the semiring API allows label values to be combined, but not examined:
the only exception is \texttt{is-empty}, but this information is also present in the shape of the DAG.

Imagine an instance of our labeled DAG structure
in which all labels have been replaced with variables of unknown (non-zero) values.
Our phase computation algorithm will happily accept such a DAG, creating some new labeled vertices and edges.
The values of these fresh labels can be specified using expressions in the above variables,
and such expressions depend only on the shape of the original DAG.
Thus, one may view the phase computation as a combination of two parts:
the first specifies how the shape of the output DAG depends on the shape of the input one,
while the second prescribes a substitution, allowing to compute output labels from input labels.

With the trivial memoization, most of the substitution will be trivial,
as a single phase only touches the fragment of the DAG close to its current top.
Here, the dominator-based approach not only improves memoization rates,
but also helps to avoid the trivial substitutions.

The complete representation of relations now looks as follows.
It is still a stack of DAG vertices, but instead of semiring values, the labels are given as expressions
with variables standing for labels of (up to three) stack nodes from which the current node has been obtained.
An additional pointer to this "origin" node makes it possible to perform the substitution when needed.
The trie used for memoizing actions is practically the same,
but only the shapes of the vertices (without labels) are taken into consideration.

Instead of a single semiring value,
the final result of parsing becomes now a fairly involved data structure,
describing an expression for this value as a hierarchy of small substitutions.
These substitutions have to be performed if actual value is to be returned,
and this computation often takes \emph{significantly longer} than parsing per se!
What have we saved then?
We claim that in order to arrive at the final value,
more or less the same semiring operations (possibly in a different order) would have to be performed by \emph{any} semiring-agnostic parser.
Our final substitution is then in some way optimal,
as it ends up computing exactly those intermediate values which are actually relevant for the result.

It is also interesting to see how the resulting substitution structure looks like for unambiguous languages.
If there is no legal parse of the input, the result is a single constant: the semiring zero.
Otherwise, all union operations in all substitutions must be degenerate, i.e., have exactly one operand.
Thus, the whole expression is a (complex) concatenation of individual transitions.
Each small expression given at each level of substitutions is then a sequence of transitions,
with "holes" to be filled with the values of referenced variables.
Viewed as a description of a parse tree, it is akin to a tree with "holes",
to be filled with subtrees obtained from referenced variables.
This suggests that for some applications one might not need to actually compute the final value,
but instead operate directly on the "compressed" substitution structure.
It is thus an interesting open question whether this way of encoding parse forests might be advantageous
over SPPFs returned by virtually all generalized parsers.

\section{Evaluation}
\label{sec:evaluation}

To evaluate the effects of memoization on practical programming languages,
we have chosen a grammar for Java 8 from the ANTLR4 grammar repository
(\url{https://github.com/antlr/grammars-v4/blob/master/java8/Java8.g4};
this grammar is ambiguous, but extremely close to the Java language specification)
and a large, popular Java project (\texttt{elasticsearch-6.5.3}, with 9596 \texttt{.java} files totalling over 10 million terminals).

The grammar, originally with 3132 RTN states, has been left-factored and then optimized by simple partition refinement,
analogous to Moore's method of minimizing finite automata.
At parser generation time, the automata representing atomic relations have been optimized in the same way.
All these optimizations depend on the semiring in which the parser is supposed to operate---
Figure~\ref{fig:automata-sizes} reports the numbers of states in all the cases.

\begin{figure}
  \centering
  \begin{tabular}{l|r|r|r|r}
    semiring & \multicolumn{2}{c}{RTN states} & \multicolumn{2}{c}{atomic states} \\
    & left-factored & optimized & original & optimized \\
    \hline\hline
    recognition & 2818 & 637 & 7816 & 2949 \\
    counting & 2818 & 652 & 7952 & 3694 \\
    parsing & 2818 & 637 & 7816 & 5627 \\
    \hline
  \end{tabular}
  \caption{Grammar and automata sizes under different semirings.}
  \label{fig:automata-sizes}
\end{figure}

We have performed the evaluation on a modest laptop, with 3 GHz Intel Core i7 CPU, running 64-bit Linux system.
To take into account the time required for garbage collection
(and for a more fair comparison with the \texttt{javac} compiler),
we have restricted the whole processes (including .NET Core or Java VM) to run on a single CPU core.
About 5 GB of RAM were available to the process at each run.

\begin{figure}
  \centering
  \begin{tabular}{lr|r|r|r}
    semiring & $|\type\Xi|$ & trivial & dominator & substitutions \\
    \hline\hline
    recognition & 2949 & 94.1 \% & 99.7 \% & 99.6 \% \\
    counting & 3694 & ! 49.3 \% & 97.2 \% & 99.6 \% \\
    parsing & 5627 & ! 7.5 \% & ! 14.4 \% & 99.6 \% \\
    \hline
  \end{tabular}
  \caption{Effectiveness of different memoization strategies.}
  \label{fig:memoization}
\end{figure}

Figure~\ref{fig:memoization} compares the effectiveness of different methods of memoization.
Each entry in this table gives the percentage of phases whose result was taken from the cache.
The mark ! denotes situations in which the algorithm would require too much RAM to complete---%
the value in such cases is approximate, since it only accounts for the phases performed before exhausting all available memory.

As expected according to the discussion in Section~\ref{sec:memoization},
trivial memoization quickly becomes ineffective as the semiring grows richer.
For dominator-based memoization using actual semiring values, the trend is similar, although not that rapid.
Memoization using substitutions may only depend on the semiring indirectly,
through the shape of the optimized atomic automata.
In our tests, they had almost no influence on its effectiveness.

With over 10 million parsing phases to be performed,
it is not by accident that only the variants with high ($> 90 \%$) memoization rate were able to complete---%
attempting to store millions of configurations is simply impractical.
Furthermore, with trivial memoization, performing the 5.9\% non-memoized phases took 96\% of total recognition time---%
this shows how expensive the DAG operations are compared to a simple dictionary lookup.
With dominator-based memoization, due to yet increased complexity of the data structures,
the 0.3\% non-memoized phases took about 40\% of total recognition time.
This shows that variants for which memoization is anything below excellent
would remain impractical even if given unlimited memory.

We will now share some raw performance comparison,
not only between variants of our algorithm,
but also with our implementation of GLL,
and the hand-coded deterministic parser built into the Java 8 compiler \texttt{javac}.
While our proof-of-concept implementation has not been created with speed in mind,%
\footnote{It has multiple additional layers of abstraction, useful for experimenting with different representations.
In C$\sharp$, the overhead of these layers cannot be eliminated at compile-time and must be paid during parsing.
It seems that a well-structured C++ implementation could do significantly better.}
the results are already a strong evidence of its practicality.

\begin{figure}
  \centering
  \begin{tabular}{l|r|r}
    algorithm & time & memory \\
    \hline\hline
    GLL & 2445.6 s & 0.32 GB \\
    GLL (non-optimized) & 470.2 s & 0.15 GB \\
    \hline
    relational (no memoization) & 751.4 s & 2.42 GB \\
    relational (trivial memoization) & 77.5 s & 3.22 GB \\
    relational (dominator-based memoization) & 7.8 s & 0.22 GB \\
    relational (dominator-based, non-optimized) & 30.3 s & 1.08 GB \\
    \hline
    \texttt{javac} (parser) & 14.0 s & 0.93 GB \\
    \hline
  \end{tabular}
  \caption{Running times and memory usage of compared recognition algorithms.}
  \label{fig:timings-recognition}
\end{figure}

Figure~\ref{fig:timings-recognition} gives time and memory usage of different recognizers.
Note that, while other timings are for parsing alone,
the one for \texttt{javac} includes tokenizing the input and building parse trees.
However, in a hand-coded, deterministic parser, we would not expect these two tasks to take significantly more than half of total execution time.
The comparison suggests that, with our method,
an unoptimized recognizer, automatically generated from an ambiguous language specification
can rival optimized, dedicated, hand-written, deterministic code!

The table includes two rows marked as "non-optimized".
The timings given there are for the appropriate algorithms run on the grammar which has only been left-factored,
and has not undergone partition refinement optimization.
As evident, while such optimization has clearly helped our approach (by reducing the cardinality of $\type\Upsilon$ from 5503 to 2949),
it has made GLL perform significantly worse (by making its GSS denser).
Not also, that if both algorithms are given the same, optimized grammar,
our recognizer outperforms GLL by a factor of over 300.
Even though our implementation of GLL is most probably suboptimal,
relational parsing remains in clear advantage.

\begin{figure}
  \centering
  \begin{tabular}{l|r|r|r}
    semiring & plain values & substitutions & no evaluation \\
    \hline\hline
    recognition & 7.8 s / 0.22 GB & 79.3 s / 0.54 GB & 25.1 s / 0.35 GB \\
    counting & 62.5 s / 1.43 GB & 82.7 s / 0.57 GB & 25.4 s / 0.37 GB \\
    parsing & --- & 476.4 s / 0.88 GB & 38.2 s / 0.55 GB \\
    \hline
  \end{tabular}
  \caption{Efficiency of relational parsing over different semirings.}
  \label{fig:timings-parsing}
\end{figure}

We now turn to the evaluation of parsing.
As we have not yet implemented a variant of GLL able to work with an arbitrary semiring,
we only report here on the different variants of our method.

The timings are given by Figure~\ref{fig:timings-parsing}.
The comparison only contains variants with dominator-based memoization.
The column named "no evaluation" is for the substitution-based algorithm,
but without the final evaluation of substitutions to actual semiring value.
In other words, it simulates what happens if we allow the results of parsing to be returned in "compressed" form.

Comparing the numbers here with those from Figure~\ref{fig:timings-recognition},
one can see our full parsing algorithm to take the same time as the GLL \emph{recognizer}
(and this only if we choose the "better" variant of the grammar for each of them;
on the same, optimized grammar, the speedup is about 5 times).
Shall the latter need to construct an SPPF,
our method would again be in greater advantage.

Moreover, computing the final value takes more than 90\% of parsing time.
We are almost certain that this is too much,
and that a better representation of substitutions
(or a more clever way of evaluating them)
would make our algorithm even better.

\section{Related work}
\label{sec:related}

Parsing is a vast subject (see~\cite{grune} for a not-so-recent survey, with over 1500 references!),
thus here we only touch on what we find to be most relevant to our proposed algorithm.

\paragraph*{Simulating nondeterministic pushdown automata}

Effective simultaneous simulation of possible runs of a PDA has been proposed by Lang~\cite{pda-simulation}.
In~\cite{tomita,gss}, Tomita presented the graph-structured stack as a convenient structure for capturing such nondeterminism.
This has led to a multitude of generalized parsers, differing primarily by the automaton being simulated:
this includes the GLR family (e.g.,~\cite{brnglr}), GLC~\cite{glc,glc-improved}, and GLL~\cite{gll,gll-performance}.

As Lang's, our method could be adjusted to simulate other versions of pushdown automata.
It is even conceivable that an appropriate "phase function" could be obtained automatically for an arbitrary PDA,
and still require only a constant number of primitive language operations.
We have not investigated this path.

The Adaptive LL(*) algorithm~\cite{allstar}, underlying the popular ANTLR4 parser generator, uses simulation to predict valid transitions.
It represents configuration sets explicitly, and thus fails to handle left-recursive grammars, where these sets can be infinite.
Our approach, originally aimed to remove this restriction, abstracts away the specific representation.
In fact, we have first used a "standard" GSS instead of the proposed DAG structure---%
however, in such case we failed to attain linear complexity on some LR-regular grammars.

\paragraph*{Left- and right-recursive productions}

For top-down (predictive) parsers, left-recursive productions are inherently problematic,
because it is difficult to know up front how many times a \texttt{call} loop should be followed.
Thus, most algorithms in the LL family refuse to handle such grammars.
GLL captures \emph{all} possible runs through left recursion by introducing cycles into its graph-structured stack,
and so does Earley's algorithm (cyclicity here takes the form of an item referencing its own layer).

Right recursion requires special attention when reduction possibilities are considered,
because it may happen that a new way of reaching some state in discovered \emph{after} it has been "processed".
All nondeterminism-simulating parsers handle this using work queues or other mutable data structures.
Similar care (and mutability) is needed in Earley-style parsers.

With simultaneous left and right recursion,
the grammar is infinitely ambiguous,
and thus the final result of parsing \emph{must} be cyclic.
Relational parsing integrates both left and right recursion into the atomic languages,
thus completely avoiding the need for cyclic structures and mutation during parsing.
To our best knowledge it is the first algorithm with this property.

Right recursion influences also the complexity of generalized parsers,
as it can make linearly long sequences of reductions available at each input position.
Some parsers, including Earley's, use lookahead to avoid incurring quadratic complexity on LR(k) grammars.
Another approach, used in Leo's modification~\cite{leo} and applicable to all LR-regular grammars \emph{without} using lookahead,
is to create shortcuts through such long reduction paths.
This is precisely what we achieve by the decomposition proposed in Proposition~\ref{prop:language-shape}.

\paragraph*{Nullables}

Nullable states (nonterminals) seem to be a nuisance for every parsing algorithm.
Although theoretically one could avoid dealing with them by rewriting the grammar,
it is impractical to do so: binary grammars can grow quadratically, and arbitrary ones---exponentially.
Right-nulled GLR~\cite{rnglr,rnglr-optimize} avoids the exponential blow-up by only eliminating trailing nullables in each production.
Most other algorithms attempt to handle all nullables "on the fly".
In our case, this amounts to making the atomic languages null-closed.
While it does not cause the automata to have more states, additional edges \emph{are} introduced and traversed.

In an analogous situation, the version of CYK presented in~\cite{cyk-2nf} does \emph{not} precompute the extra edges.
Instead it avoids higher grammar-related complexity by grouping together multiple walks on its "automata".
It is interesting whether this idea could somehow be adapted to our setting.

\paragraph*{Atomic closures vs.~left corners}

The slightly less popular GLC parser builds a bridge between top-down and bottom-up parsing by using the \bld{left corner relation},
containing a pair of nonterminals whenever one of them can appear leftmost in some derivation from the other.
Our proposed implementation of atomic closures is indexed by pairs of states related in exactly the same way,
and the appropriate rules of GLC mirror the transitions in our atomic automata.

The authors of GLC propose to make the algorithm more effective by grouping some of the cases heuristically.
With our approach, it is clear that one could use any automaton recognizing the same language.
This opens the way for well-understood optimizations.

\paragraph*{Memoization}

Memoization has been put to great use in the Adaptive LL(*) parsing algorithm,
even though it is only used there when predicting the right nondeterministic choice.
To make memoized computations applicable in more situations, the algorithm pretends to forget all but the top state of the RTN.
If such "blind" computation succeeds, it can then be applied on top of any stack.
However, should it fail, the algorithm resorts to non-memoized computation with full stack information.
Our dominator-based memoization handles such situations better, and in a way optimally:
we always look at exactly such stack depth as is necessary.

\paragraph*{Limiting stack activity}

It is well known that the primary source of inefficiency in generalized parsers is handling large graph-like structures (i.e., the GSS), offering poor cache locality.
Reduction-incorporated parsers (cf.~\cite{riglr}) attempt to limit the size of these graphs by handling most of the grammar using finite automata, and using the stack only when necessary.
Appropriate pushdown automata can be constructed in multiple ways,
and thus in~\cite{riglr-optimize} the authors suggest selecting the best one by profiling the parser on sample inputs.

In our algorithm, \emph{all} activity between successive terminals is handled by the automata for atomic languages.
Furthermore, instead of limiting the depth of the (multiple) stacks, we perform most---even locally ambiguous---operations on a \emph{single} stack.
With dominator-based memoization, the algorithm "learns" the right stack actions \emph{while} parsing,
though one could also possibly "train" it beforehand on some sample inputs.

\paragraph*{Semantics}

As far as we know, no previous generalized parsing algorithm \emph{explicitly} computes the languages of reachable configurations.
However, these languages are implicitly present in \emph{all} of them,
and often could be easily read off their data structures after processing each terminal:
in GLL for example, reachable configurations are precisely all the paths through the GSS.

We believe that our semantics-based approach opens the way not only to cleaner and more performant parsers for all context-free languages,
but possibly also to handling larger useful grammar classes,
beginning with conjunctive grammars~\cite{conjunctive} or even grammars with one- or two-sided contexts~\cite{glr-with-contexts,bicontexts,recognize-bicontexts}.

\bibliographystyle{plainurl}
\bibliography{parsing}

\end{document}